\documentclass{article}
\usepackage[margin=1.0in]{geometry}
\usepackage[sort&compress]{natbib}
\pagenumbering{arabic}


\usepackage{algorithm}
\usepackage{algorithmicx}
\usepackage[noend]{algpseudocode}

\usepackage{amsmath, amsfonts, bm, amssymb, tabularx}
\usepackage{latexsym, mathtools}
\usepackage{amsthm}
\usepackage{enumitem}

\usepackage{thmtools} 
\usepackage{thm-restate}

\usepackage{ifthen}
\usepackage{hyperref}\hypersetup{colorlinks=true, unicode=true, linkcolor=[rgb]{0.10,0.05,0.67}, citecolor=[rgb]{0.10,0.05,0.67}, filecolor=[rgb]{0.10,0.05,0.67}, urlcolor=[rgb]{0.10,0.05,0.67}}
\usepackage{algorithm}
\usepackage{Shorthands}
\usepackage{cleveref}

\title{\LARGE \bf
Active Learning for Control-Oriented Identification \\of Nonlinear Systems
}

\author{Bruce D. Lee, Ingvar Ziemann, George J. Pappas, Nikolai Matni
\thanks{The authors are with the Department of Electrical and Systems Engineering, University of Pennsylvania. Emails: \tt\small\{brucele, ingvarz, pappasg, nmatni\}@seas.upenn.edu.}}

\begin{document}

\maketitle
\thispagestyle{empty}

\begin{abstract}
Model-based reinforcement learning is an effective approach for controlling an unknown system. It is based on a longstanding pipeline familiar to the control community in which one performs experiments on the environment to collect a dataset, uses the resulting dataset to identify a model of the system, and finally performs control synthesis using the identified model. As interacting with the system may be costly and time consuming, targeted exploration is crucial for developing an effective control-oriented model with minimal experimentation.  Motivated by this challenge, recent work  has begun to study finite sample data requirements and sample efficient algorithms for the problem of optimal exploration in model-based reinforcement learning. However, existing theory and algorithms are limited to model classes which are linear in the parameters.  Our work instead focuses on models with nonlinear parameter dependencies, and presents the first finite sample analysis of an active learning algorithm suitable for a general class of nonlinear dynamics. In certain settings, the excess control cost of our algorithm achieves the optimal rate, up to  logarithmic factors. We validate our approach in simulation, showcasing the advantage of active, control-oriented exploration for controlling nonlinear systems.
\end{abstract}

\section{Introduction}

In recent years, model-based reinforcement learning has been successfully applied to various application domains including robotics, healthcare, and autonomous driving \citep{levine2020offline, moerland2023model}. These approaches often proceed by performing experiments on a system to collect data, and then using the data to fit models for the dynamics. 
In the specified application domains, performing experiments requires interaction with the physical world, which can be both costly and time-consuming. 
It is therefore important to design the experimentation and identification procedures to efficiently extract the most information relevant to control. 
In particular, experiments must be designed with the downstream control objective in mind. This fact is well-established in classical controls and identification literature \citep{ljung1998system,gevers1993towards,hjalmarsson1996model, pukelsheim2006optimal}.
While these works provide some guidance for experiment design, they mostly focus on linear systems, and supply only asymptotic guarantees. 

Driven by the empirical success of machine and deep learning in solving classes of complex control problems, the learning and control communities have recently begun revisiting the classical pipeline of identification to control, proposing new algorithms, and analyzing them from a non-asymptotic viewpoint. Early efforts focused on end-to-end control guarantees for unknown linear system under naive exploration (injecting white noise inputs) \citep{dean2020sample,mania2019certainty}. These methods have also been refined by using active learning to collect better data for control synthesis
\citep{wagenmaker2021task}. This approach has been extended to nonlinear systems with a linear dependence on the unknown parameters \citep{wagenmaker2023optimal}.
Other works studying model-based control of nonlinear systems also assume linear dependence on the unknown parameters, or consider related simplifying assumptions in settings including tabular or low-rank Markov Decision Processes \citep{uehara2021pessimistic, song2021pc}. 
Model-based reinforcement learning for a general class of nonlinear systems has also been considered \citep{sukhija2023optimistic}. However, their guarantees focus on the worst case uncertainty of any control policy rather than end-to-end control costs for a particular objective. 

There is a significant gap in that there are no algorithms with strong guarantees (achieving the optimal rates) for model-based reinforcement learning of general nonlinear dynamical systems. We leverage recently developed machinery for non-asymptotic analysis of nonlinear system identification to tackle this problem \citep{ziemann2022learning}.

\subsection{Contribution}

We introduce and analyze the Active Learning for Control-Oriented Identification ($\texttt{ALCOI}$) algorithm. This algorithm extends an approach for model-based reinforcement learning proposed by \citet{wagenmaker2023optimal} for dynamical systems with a linear dependence on the unknown parameter to general nonlinear dynamics that satisfy some smoothness assumptions. The algorithm is inspired by a reduction of the excess control cost to the system identification error, which may then be controlled using novel finite sample system identification error bounds for smooth nonlinear systems.

Leveraging the aforementioned reduction of the excess control cost and system identification error bounds, we derive finite sample bounds for the excess cost of our algorithm.
\begin{theorem}[Main Result, Informal]
    Let the \texttt{ALCOI} algorithm interact with an unknown nonlinear dynamical system for some number of exploration rounds before proposing a control policy designed to optimize some objective. The excess cost of the proposed policy on the objective satisfies 
    \begin{align*}
        \mathsf{excess\,cost} \!\leq \!\frac{\mathsf{hardness \,of\, control} \!\times\! \mathsf{hardness\, of\, identification}}{\mathsf{\#\, exploration\,rounds}}.
    \end{align*}
\end{theorem}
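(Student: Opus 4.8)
The plan is to establish the bound in two stages: first reduce the excess control cost to a quadratic function of the system identification error, and then bound that identification error in terms of the information accumulated during exploration. For the reduction, I would let $J(\theta,\pi)$ denote the control cost of a policy $\pi$ under dynamics parameterized by $\theta$, write $\theta_\star$ for the true parameter and $\hat\theta$ for the estimate produced after exploration, and let $\pi(\hat\theta)$ be the certainty-equivalent controller synthesized from $\hat\theta$. Performing a second-order Taylor expansion of $\theta \mapsto J(\theta_\star,\pi(\theta))$ about $\theta_\star$, the optimality of $\pi(\theta_\star)$ forces the first-order term to vanish, leaving
\[
J(\theta_\star,\pi(\hat\theta)) - J(\theta_\star,\pi(\theta_\star)) = \tfrac12 (\hat\theta-\theta_\star)^\top H\, (\hat\theta-\theta_\star) + \text{(higher order)},
\]
so that to leading order the excess cost equals $\|\hat\theta-\theta_\star\|_H^2$ for a matrix $H$ encoding the sensitivity of the closed-loop objective to parameter perturbations. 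This $H$ is precisely the $\mathsf{hardness\,of\,control}$ factor, and the smoothness assumptions on the dynamics and cost are what make the expansion valid and control the remainder.

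For the second stage I would invoke the finite-sample nonlinear identification machinery of \citet{ziemann2022learning}. The estimator (maximum likelihood or least squares) satisfies a local curvature condition near $\theta_\star$, and a self-normalized martingale concentration argument over the trajectory data yields an error bound of the form $(\hat\theta-\theta_\star)(\hat\theta-\theta_\star)^\top \lesssim \mathcal{I}_T^{-1}$, where $\mathcal{I}_T$ is the empirical Fisher information accumulated over the exploration rounds. Combining with the reduction gives $\mathsf{excess\,cost} \lesssim \mathrm{tr}(H\,\mathcal{I}_T^{-1})$, and since the information scales linearly in the horizon, $\mathcal{I}_T \approx T\cdot\mathcal{I}(\text{policy})$, this already produces the $1/T$ rate. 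The active-learning step then chooses the exploration inputs to minimize $\mathrm{tr}(H\,\mathcal{I}(\text{policy})^{-1})$ over admissible exploration policies; the value of this optimal-design problem is the $\mathsf{hardness\,of\,identification}$ factor, and plugging the optimized design back in yields the claimed product form divided by $T$.

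The main obstacle, I expect, is the nonlinear identification bound rather than the reduction. In the linear-in-parameters setting of \citet{wagenmaker2023optimal} the least-squares error has a clean closed form, but for general smooth nonlinear dynamics the loss is non-convex and the trajectory data are dependent: one must establish a uniform lower bound on the curvature of the population loss (a persistence-of-excitation or restricted-eigenvalue condition) that holds along the exploratory trajectories, and control the deviation of the empirical loss from its population counterpart via self-normalized concentration. A secondary difficulty is that the optimal design depends on $H$ and $\mathcal{I}$, both of which depend on the unknown $\theta_\star$; the algorithm must therefore use a preliminary estimate in an explore-then-commit or epoch-based fashion, and one must show the resulting plug-in design is near-optimal, contributing only lower-order terms so that the leading behavior matches the optimal rate up to logarithmic factors.
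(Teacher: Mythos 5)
Your proposal follows essentially the same route as the paper: the second-order Taylor expansion with vanishing first-order term is exactly the paper's model-task Hessian reduction (Lemma~2, via \citet{wagenmaker2023optimal}), the identification bound via local curvature plus self-normalized martingale concentration is the paper's Theorem~1 (proved by the delta method on top of the \citet{ziemann2022learning} machinery), and the two-phase plug-in design with a preliminary estimate is precisely the \texttt{ALCOI}/\texttt{DOED}$_+$ structure. You also correctly identify the two genuine difficulties the paper spends most of its effort on---the nonlinear, dependent-data identification bound and the near-optimality of the design computed from a coarse estimate (including the distribution shift in the Fisher information)---so the plan is sound as stated.
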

The ``hardness of control'' captures how the error in estimation of the dynamics translates to error in control, while the ``hardness of identification'' captures how challenging it is to identify the parameters under the best possible exploration policy. 
Moreover, our analysis reveals how the respective hardness quantities interact. 
\citet{wagenmaker2023optimal} provide upper and lower bounds for this problem in a setting where the dynamics model is linear in the unknown parameters. Our upper bound is tight up to logarithmic factors in this setting, and we conjecture that it is also tight up to logarithmic factors in the setting where the model is nonlinear in the  parameters.

The non-asymptotic system identification result may be of independent interest. It derives from invoking recently developed machinery for the analysis of nonlinear system identification along with the delta method, a classical approach from statistics. These bounds provide rates that match the asymptotic limit up to logarithmic factors. 

\subsection{Related Work}

\paragraph{Additional Work Analyzing Identification \& Control}

Finite sample guarantees for active exploration of pure system identification have been studied in linear \citep{wagenmaker2020active}, and nonlinear (with linear dependence on the unknown parameters) settings \citep{mania2020active}. Lower bounds complementing the upper bounds for the end-to-end control are also present \citep{wagenmaker2021task, wagenmaker2023optimal}, and have been specialized to the linear-quadratic regulator setting  to characterize systems which are hard to learn to control \citep{lee2023fundamental}. Recent literature considers gradient-based approaches for experiment design in linear-quadratic control \citep{anderson2024control}. For more details on finite sample analysis of learning to control, see the survey by \citet{tsiamis2023statistical}. The aforementioned results do not focus on general nonlinear systems. Such analysis exists for identification; however, in the absence of end-to-end control error bounds \citep{sattar2022non,ziemann2022learning}. In contrast, we achieve end-to-end control error bounds for active learning applied for learning to control general nonlinear systems.

\paragraph{Dual Control}
A related paradigm to the ``identify then control'' scheme studied in this work is that of \emph{dual control}, in which the learner must interact with an unknown system while simultaneously optimizing a control objective  \citep{feldbaum1960dual}.  \citet{aastrom1973self} study a version of this problem known as the self-tuning regulator, providing asymptotic guarantees of convergence. Non-asymptotic guarantees for the self tuning regulator have been studied more recently from the online learning perspective of regret \citep{abbasi2011improved}. Subsequent work provides matching upper and lower bounds for the regret of the self-tuning regulator problem \citep{simchowitz2020naive}. Lower bounds refining the dependence on system-theoretic constants have also been established \cite{ziemann2022regret}. 
The regret of learning to control nonlinear dynamical systems (with linear dependence on the unknown parameter) has also been studied   \citep{kakade2020information,boffi2021regret}. 
As in the ``identify then control'' setting, prior work in dual control has not provided finite sample analysis of the end-to-end control error for systems with nonlinear dependence on the unknown parameters.
\,\\
\noindent \textbf{Notation:}
Expectation (respectively probability) with respect to all the randomness of the underlying probability space is denoted by $\E$ (respectively $\P$).
The Euclidean norm of a vector $x$ is denoted $\norm{x}$. For a matrix $A$, the spectral norm is denoted $\norm{A}$, and the Frobenius norm is denoted $\norm{A}_F$. A symmetric, positive semi-definite matrix $A = A^\top$ is denoted $A \succeq 0$.  $A \succeq B$ denotes that $A-B$ is positive semi-definite. Similarly, a symmetric, positive definite matrix $A$ is denoted $A \succ 0$. The minimum eigenvalue of a symmetric, positive semi-definite matrix $A$ is denoted $\lambda_{\min}(A)$. For a positive definite matrix $A$, we define the $A$-norm as $\norm{x}_A^2 = x^\top A x$. 
  The gradient of a scalar valued function $f: \R^n \to \R$ is denoted $\nabla f$, and the Hessian is denoted $\nabla^2 f$. The Jacobian of a vector-valued function $g: \R^n \to \R^m$ is denoted $D g$, and follows the convention for any $x\in\R^n$, the rows of $D g(x)$ are the transposed gradients of $g_i(x)$. The $p^{th}$ order derivative of $g$  is  denoted by $D^{p} g$. Note that for $p \geq 2$, $D^{p} g(x)$ is a tensor for any $x\in\R^{n}$. The operator norm of such a tensor is denoted by $\norm{D^p g(x)}_{\op}$. 
 For a function $f: \mathsf{X} \to \R^{\dy}$, we define $\norm{f}_{\infty} \triangleq \sup_{x \in \mathsf{X}} \norm{f(x)}$. A Euclidean norm ball of radius $r$ centered at $x$ is denoted $\calB(x,r)$. 
\section{Problem Formulation}
\label{s: problem formulation}

We consider a nonlinear dynamical system evolving according to 
\begin{align}
    \label{eq: dyn}
    X_{t+1} = f(X_t, U_t; \phi^\star) + W_t, \quad t=1,\dots T,
\end{align}
with state $X_t$ assuming values in $\R^{\dx}$,  input $U_t$ assuming values in $\R^{\du}$, and $\dx$-dimensional noise $W_t \overset{\iid}{\sim} \calN(0, \sigma_w^2 I)$ for some $\sigma_w > 0$. For simplicity, we assume $X_1 = 0$. Here, $f$ is the dynamics function, which depends on an unknown parameter $\phi^\star \in \R^{d_{\phi}}$. We assume that there exists some positive $B$ such that 
$\norm{\phi^\star} \leq B - 1$ and $\norm{f(\cdot, \cdot, \phi)}_{\infty} \leq B$ for all $\phi\in\R^{d_{\phi}}$ satisfying $\norm{\phi}\leq B$. 

We study a learner whose objective is to determine a policy $\pi = \curly{\pi_t}_{t=1}^T$ from a policy class $\Pi^\star$ to minimize the cost $\calJ(\pi, \phi^\star)$, where
\begin{align}
    \label{eq: objective}
    \calJ(\pi, \phi) = \E_{\pi}^{\phi} \brac{\sum_{t=1}^T c_t(X_t, U_t) + c_{T+1}(X_{T+1})}.
\end{align}
The functions $c_t$ are known stage costs. 
The superscript on the expectation denotes that the dynamics \eqref{eq: dyn} are rolled out under parameter $\phi$, while the subscript denotes that the system is played in closed-loop under the feedback control policy $U_t = \pi_t(X_1, U_1, \dots, X_{t-1}, U_{t-1}, X_t)$ for $t=1,\dots,T$. The learner follows a two step interaction protocol with an exploration phase, and an evaluation phase. In the exploration phase, the learner interacts with the system for a total of $N$ episodes, each consisting of $T$ timesteps, by playing exploration policies $\pi \in\Pi_{\exp}$. The policy class $\Pi_{\exp}$ is an exploration policy class, described in more detail below. The learner does not incur any cost during the exploration episodes, and seeks only to gain information about the system. After the $N$ interaction episodes, it uses the collected data to propose a policy $\hat\pi \in \Pi^\star$. The learner is then evaluated on the expected cost of the proposed policy on a new evaluation episode. In particular, it incurs cost $\calJ(\hat \pi, \phi^\star)$. 

The policy classes $\Pi^\star$ and $\Pi_{\mathsf{exp}}$ are known; $\Pi^\star$ consists of deterministic policies, but $\Pi_{\mathsf{exp}}$ may be random. We do not assume $\Pi^\star = \Pi_{\mathsf{exp}}$.  We assume that the policy class $\Pi^\star$ has the parametric form:
\begin{align*}
    \Pi^\star = \curly{\pi^\theta \vert \theta \in \mathbb{R}^{d_{\theta}}}.
\end{align*}
No such parametric assumption is made on the exploration class $\Pi_{\exp}$. It instead  consists of whatever experimental procedures are available. For instance, it may be the class of policies with power or energy bounded inputs. Policies belonging to $\Pi_{\exp}$ must be history dependent (causal). We assume that the learner is allowed to randomly select choices of policies in $\Pi_{\exp}$.\footnote{i.e. for any $\pi^1, \pi^2\in\Pi_{\exp}$ and any $b \in [0,1]$, the policy $\pi_{\mix}$ which at the start of a new episode plays $\pi^1$ for the duration of the episode with probability $b$ and $\pi^2$ for the duration of the episode with probability $1-b$ also belongs to $\Pi_{\exp}$. }
Given these policy classes, the learning procedure should seek to identify the best exploitation policy belonging to $\Pi^\star$ by playing the most informative exploration policy in the class $\Pi_{\exp}$.

\subsection{Certainty Equivalent Control}

\begin{figure}
    \centering
    \includegraphics[width=\linewidth]{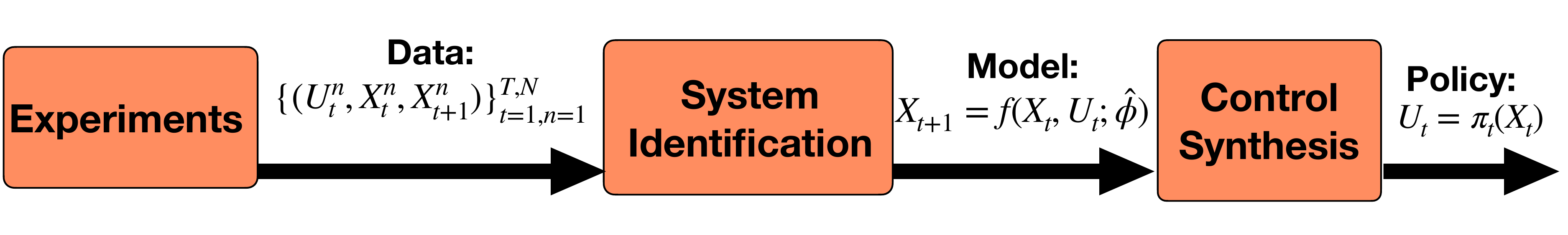}
    \vspace{-18pt}
    \caption{Identfication to control pipeline. }
    \vspace{-18pt}
    \label{fig:CE pipeline}
\end{figure}

We focus on learners which follow a model-based approach to synthesize a control policy from interaction with the system, outlined in \Cref{fig:CE pipeline}. In this section, we discuss the learner's procedure for the last two steps: system identification and control synthesis.
In \Cref{s: main}, we return to the question of which experiments the learner should perform. 

Given the data collected during the experimentation phase, the learner finds an estimate for the dynamics by solving a nonlinear least squares problem. In particular, suppose that during the $N$ experimentation episodes of length $T$, the learner collects data $\curly{(U_t^n, X_t^n, X_{t+1}^n)}_{n,t=1}^{N, T+1}$. The subscript denotes the time index within each episode, while the superscript denotes the episode index. Using this dataset, the learner may identify the dynamics of the system by solving
\begin{align}
    \label{eq: least squares}
    \hat \phi \in \argmin_{\phi \in \R^{d_{\phi}}, \norm{\phi}\leq B} \sum_{n=1}^N \sum_{t=1}^T \norm{X_{t+1}^n - f(X_t^n, U_t^n; \phi)}^2. 
\end{align}
Solving this problem provides a parameter estimate which is an effective predictor under the distribution of states and inputs seen during the experimentation. This notion can be captured via the prediction error. 

\begin{definition}
    We define $\ER_{\pi}^{\phi^\star}(\phi)$ as the prediction error for a parameter $ \phi$ under policy $\pi$: 
\begin{align*}
    \ER_{\pi}^{\phi^\star}( \phi) = \mathbb{E}_{\pi}^{\phi^\star} \left[ \frac{1}{T}\sum_{t=1}^T \norm{ f(X_t, U_t;  \phi) - f(X_t, U_t; \phi^\star) }^2 \right].
\end{align*}
\end{definition}

Once the learner estimates $\hat\phi$, the controller parameters are determined from the dynamics parameters by solving the policy optimization problem as
\begin{align}
    \theta^\star(\hat \phi) \in \argmin_{\theta \in \Re^{\dtheta}} J(\pi^\theta, \hat\phi). 
\end{align}
The certainty equivalent policy may then be expressed as a function of the estimated dynamics parameters $\hat\phi$ as
\begin{align}
    \label{eq: certainty equivalent policy}
    \pi^\star(\hat\phi) \triangleq\pi^{\theta^\star(\hat\phi)}.
\end{align}
We note that both the nonlinear least squares problem \eqref{eq: least squares} and the certainty equivalent control synthesis procedure of \eqref{eq: certainty equivalent policy} may be computationally challenging. The focus of this work is to understand the statistical complexity of the problem rather than the computational complexity. In the episodic setting we consider, both of these problems are solved offline. Therefore, given sufficient time and compute, it is often possible to determine good approximations to the optimal solutions using non-convex optimization solvers and approaches for policy optimization from the model-based reinforcement learning literature \citep{levine2020offline}.

\subsection{Assumptions}

By \eqref{eq: certainty equivalent policy}, the optimal policy for the objective \eqref{eq: objective} under the true parameter $\phi^\star$ defining the dynamics \eqref{eq: dyn} is thus given by $\pi^\star(\phi^\star)$, and the corresponding objective value is $\calJ(\pi^\star(\phi^\star), \phi^\star)$. Meanwhile, the objective value attained under an estimate $\hat \phi$ is $\calJ(\pi^\star(\hat \phi), \phi^\star)$. We abuse notation and define the shorthand 
\begin{align}
    \label{eq: shorthand cost}
    \calJ_{\tilde\phi}(\phi) \triangleq \calJ(\pi^\star(\phi), \tilde\phi)
\end{align}
to describe the control cost of applying a certainty equivalence policy synthesized using  parameter $\phi$ on a system with dynamics described by $\tilde\phi$.  
It has been shown by \citet{wagenmaker2021task, wagenmaker2023optimal} that for models which are linear in the parameters, the gap $\calJ_{\phi^\star}(\phi) - \calJ_{\phi^\star}(\phi^\star)$ is characterized by the squared parameter error weighted by the \emph{model-task Hessian}, defined below.
\begin{definition}
    The \textit{model-task Hessian} for objective \eqref{eq: objective} and dynamics \eqref{eq: dyn} is given by 
    \begin{align*}
    \calH(\tilde \phi) = \nabla_{\phi}^2 \calJ_{\tilde\phi}(\phi)\vert_{\phi=\tilde\phi}, 
    \end{align*}
    where $\calJ_{\tilde\phi}$ is defined in \eqref{eq: shorthand cost}. 
\end{definition}
To express the excess cost achieved by a certainty equivalent controller synthesized using the estimated model parameters $\hat \phi$, we operate under the following smoothness assumption on the dynamics. 

\begin{assumption}(Smooth Dynamics)
    \label{asmp: smooth dynamics}
    The dynamics are four times differentiable with respect to $u$ and $\phi$. Furthermore, for all $(x,u) \in \R^{\dx} \times \R^{\du}$, all $\phi \in \R^{d_{\phi}}$, and $i, j \in \curly{0,1,2,3}$ such that $1\leq i+j \leq 4$, the derivatives of $f$ satisfy  
    \begin{align*}
        \norm{ D_{\phi}^{(i)} D_{u}^{(j)} f(x,u;\phi)}_\op \leq L_f.
    \end{align*}
\end{assumption}
The above assumption is satisfied for, e.g., control-affine dynamics which depend smoothly on $\phi$: $f(x_t, u_t; \phi) = g_1(x_t;\phi) + g_2(u_t; \phi) u$, with $g_1$ and $g_2$ each three time differentiable with respect to $\phi$. In this example, differentiability with respect to $u$ is immediate from the affine dependence. 

We also require that the policy class $\Pi^\star$ is smooth. 
\begin{assumption}(Smooth Policy Class)
    \label{asmp: smooth policy class}
    For $t=1,\dots, T$, $x \in \calX$, and any policy $\pi \in\Pi^\star$, the function $\pi_t^\theta(x)$ is four-times differentiable in $\theta$. Furthermore, $\norm{D_{\theta}^{(i)} \pi_t^\theta(x)}_{\op} \leq L_{\theta}$ for $i=1,\dots, 4$, $\theta \in \Re^{\dtheta}$, $x \in \calX$, and $t=1,\dots, T$. 
\end{assumption}
Note that such smoothness conditions are not imposed for the exploration policy class $\Pi_{\exp}$. The exploration policy class could, for instance, consist of model predictive controllers with constraints on the injected input energy, which do not satisfy such smoothness assumptions.

We additionally require that the costs are bounded for policies in the class $\Pi_{\star}$ and all dynamics parameters in a neighborhood of the true parameter. 
\begin{assumption}(Regular costs)
    \label{asmp: bounded costs}
    The stage costs $c_t$ are three times differentiable and $\norm{D_{u}^{(i)} c_t(x,u)}_{\mathsf{op}} \leq L_{\cost}$ for $i=1,\dots, 3$, $\theta\in\R^{\dtheta}$, $(x,u)\in \R^{\dx+\du}$.
    There exists some $r_{\cost}(\phi^\star) > 0$ such that for all $\phi \in \calB(\phi^\star, r_{\cost}(\phi^\star))$, and all $\pi \in \Pi^\star$, we have $\Ex_{\pi}^{\phi}\brac{\paren{\sum_{t=1}^T c_t(X_t, U_t) + C_{T+1}(X_T)}^2} \leq L_{\cost}$.
\end{assumption}


We additionally assume that the certainty equivalent policy is a smooth function of the dynamics parameter in a neighborhood around the optimal parameter. 
\begin{assumption}
    \label{asmp: smooth CE}
    There exists some $r_{\theta}(\phi^\star) > 0$ such that for all $\phi \in \calB(\phi^\star, r_{\theta}(\phi^\star))$,
    \begin{itemize}[leftmargin=*]
        \item $\nabla_{\theta} \calJ(\pi^
        \theta, \phi)\vert_{\theta = \theta^\star(\phi)}=0$ 
        \item $\theta^\star(\phi)$ is three times differentiable in $\phi$ and  $\norm{D_{\phi}^i \theta^\star(\phi)}_{\op} \!\leq\! L_{\pi^\star}$ for some $L_{\pi^\star} > 0$ and $i \in \curly{1,2,3}$. 
    \end{itemize}
\end{assumption}
It is shown in Proposition 6 of \citet{wagenmaker2023optimal} that the above condition holds if $\nabla_\theta^2 J(\pi^\theta, \phi^\star) \succ 0$.\footnote{\citet{wagenmaker2023optimal} show this result for the linear in the parameters setting; however, it extends easily to the smooth nonlinear setting. See \Cref{s: smooth systems}.} The above assumption also holds in the setting of linear-quadratic regulation, as may be verified using the LQR derivative expressions in \citet{simchowitz2020naive}.    

In order to bound the parameter recovery error in terms of the prediction error, additional identifiability conditions are needed. The following definition of a Lojasiewicz exploration policy is determined from a Lojasiewicz condition that arises in the optimization literature that measures the sharpness of an objective near its optimizer \citep{roulet2017sharpness}. In our setting, it quantifies the degree of identifiability from using a particular exploration policy. It does so by bounding the growth of identification error as a polynomial of prediction error. 
\begin{definition} [Lojasiewicz condition, \citet{roulet2017sharpness}]\label{def: loja}
For positive numbers $C_{\loja}$ and $\alpha$, we say that a policy $\pi\in\Pi_{\exp}$ is $(C_{\loja}, \alpha)$-Lojasiewicz if 
\begin{align*}
    \norm{\hat \phi - \phi^\star} \leq C_{\loja} \ER_{\pi}^{\phi^\star}(\hat \phi)^\alpha.
\end{align*}
\end{definition}


To ensure parameter recovery is possible for the learner, we make the following assumption regarding identifiability. 
\begin{assumption}
    \label{asmp: loja}
    Fix some positive constant $C_{\loja}$ and $\alpha \in (\frac{1}{4}, \frac{1}{2}]$. The learner has access to a policy $\pi^0 \in \Pi_{\exp}$ which is $(C_{\loja}, \alpha)$-Lojasiewicz.
\end{assumption}

While the Lojasiewicz assumption ensures that the data collected via the exploration policy $\pi^0$ is sufficient to identify the parameters, the rate of recovery may be slow. To bypass this limitation, we assume that some policy in the exploration class  satisfies a persistence of excitation condition. This condition can be expressed 
by first defining the Fisher information matrix for a parameter $\phi$ and a policy $\pi$ as
 \begin{align}
        \label{eq: fisher info}\FI^\pi(\phi)\!\triangleq\!\frac{\E_{\pi}^{\phi} \brac{\sum_{t=1}^T \!D f(X_t, U_t, \phi)^\top  \!D f(X_t, U_t, \phi) }}{\sigma_w^2},
\end{align}
where $D f(X_t, U_t, \phi)$ is the Jacobian of $f$ with respect to $\phi$.
The Fisher information measures the signal-to-noise ratio of the data collected from an episode of interaction with the system under exploration policy $\pi$. 
With this definition, persistance of excitation is equivalent to the positive definiteness of the matrix $\FI^\pi(\phi^\star)$.

\begin{assumption}
    \label{asmp: good policy}
    There exists a policy $\pi \in \Pi_{\exp}$ for which
    \begin{align*}
        \FI^\pi(\phi^\star) \succeq \mu I \succ 0.
    \end{align*}
\end{assumption}



\section{Proposed Algorithm and Main Result}
\label{s: main}

The above smoothness assumptions allow us to characterize the excess control cost of a policy synthesized via  certainty equivalence  applied to  a parameter estimate $\hat\phi$, \eqref{eq: certainty equivalent policy}. In particular, we extend a result from \citet{wagenmaker2023optimal} from the linear in parameters setting to the smooth nonlinear setting. 
\begin{lemma}[Thm. 1 of  \citet{wagenmaker2023optimal}]
    \label{lem: cost decomposition}
    Suppose Assumptions~\ref{asmp: smooth dynamics}-\ref{asmp: smooth CE} hold. Let $r_{\theta}(\phi^\star)$ be as defined in Assumption~\ref{asmp: smooth CE}.
    Then for $\hat\phi \in \calB(\phi^\star, \min\curly{r_{\cost}(\phi^\star), r_{\theta}(\phi^\star)})$, 
    \begin{align}
            \label{eq: excess cost as parameter error}
            \calJ_{\phi^\star}(\hat\phi)\! -\! \calJ_{\phi^\star}(\phi^\star)\! \leq \!\norm{\hat \phi \!-\! \phi^\star}^2_{\calH(\phi^\star)} +  C_{\cost} \norm{\hat \phi \!-\! \phi^\star}^3,
        \end{align}
        where $\calJ_{\phi^\star}(\phi)$ is as defined in \eqref{eq: shorthand cost} and
        \begin{align*}
            C_{\cost} = \mathsf{poly}(L_{\pi^\star},  L_f, L_{\theta}, L_{\cost}, \sigma_w^{-1}, T, \dx).
        \end{align*}
\end{lemma}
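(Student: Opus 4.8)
The plan rests on a single structural observation: $\phi^\star$ is a \emph{global minimizer} of the map $\phi \mapsto \calJ_{\phi^\star}(\phi)$. Indeed, by \eqref{eq: shorthand cost} we have $\calJ_{\phi^\star}(\phi) = \calJ(\pi^\star(\phi), \phi^\star)$, and since $\pi^\star(\phi^\star)$ is by construction the minimizer of $\calJ(\cdot, \phi^\star)$ over $\Pi^\star$ (\Cref{asmp: unique optimal policy}), it follows that $\calJ_{\phi^\star}(\phi) \geq \calJ(\pi^\star(\phi^\star), \phi^\star) = \calJ_{\phi^\star}(\phi^\star)$ for every $\phi$. Once smoothness of $\phi \mapsto \calJ_{\phi^\star}(\phi)$ is established, first- and second-order optimality yield $\nabla_\phi \calJ_{\phi^\star}(\phi)\vert_{\phi=\phi^\star} = 0$ and $\calH(\phi^\star) \succeq 0$. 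I would then Taylor expand at $\phi^\star$ along the segment to $\hat\phi$ (contained in the convex ball $\calB(\phi^\star, \min\curly{r_{\cost}(\phi^\star), r_{\theta}(\phi^\star)})$) with integral remainder; the vanishing gradient removes the first-order term, leaving
\begin{align*}
    \calJ_{\phi^\star}(\hat\phi) - \calJ_{\phi^\star}(\phi^\star) = \tfrac{1}{2}\norm{\hat\phi - \phi^\star}_{\calH(\phi^\star)}^2 + R_3, \qquad |R_3| \leq \tfrac{1}{6}\,C'\,\norm{\hat\phi - \phi^\star}^3,
\end{align*}
with $C'$ a uniform bound on $\norm{D_\phi^3 \calJ_{\phi^\star}(\phi)}_{\op}$ over the ball. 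Since $\calH(\phi^\star) \succeq 0$, the quadratic term is at most $\norm{\hat\phi - \phi^\star}_{\calH(\phi^\star)}^2$, and taking $C_{\cost} = C'/6$ gives \eqref{eq: excess cost as parameter error}. The entire task therefore reduces to exhibiting $C' = \mathsf{poly}(L_{\pi^\star}, L_f, L_\theta, L_{\cost}, \sigma_w^{-1}, T, \dx)$.

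To bound the third derivative I would write $\calJ_{\phi^\star}(\phi) = G(\theta^\star(\phi))$ with $G(\theta) \triangleq \calJ(\pi^\theta, \phi^\star)$ and apply the chain rule (Fa\`a di Bruno). \Cref{lem: smooth policy} controls the inner map through $\norm{D_\phi^i \theta^\star(\phi)}_{\op} \leq L_{\pi^\star}$ for $i \in \curly{1,2,3}$ on the ball, so it suffices to bound $\norm{D_\theta^i G(\theta)}_{\op}$ for $i \leq 3$ uniformly over the compact image $\curly{\theta^\star(\phi) : \phi \in \calB(\phi^\star, r_\theta(\phi^\star))}$.

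The hard part is exactly this uniform bound on the $\theta$-derivatives of the expected closed-loop cost $G(\theta) = \E_{\pi^\theta}^{\phi^\star}[\sum_t c_t(X_t, U_t) + c_{T+1}(X_{T+1})]$. The subtlety is that \Cref{asmp: smooth dynamics} bounds the derivatives of $f$ in $\phi$ and $u$ but \emph{not} in $x$, so a direct pathwise differentiation through the state recursion, which would require controlling $D_x f$, is unavailable. Instead I would differentiate the expectation as an integral against the trajectory density $p_\theta(X_{1:T+1}) = \prod_t \calN(X_{t+1}; f(X_t, \pi_t^\theta(X_{1:t}), \phi^\star), \sigma_w^2 I)$: conditioned on the state history, $\theta$ enters each one-step transition only through the control $\pi_t^\theta$, so the score $D_\theta \log p_\theta = \sum_t \sigma_w^{-2}\, W_t^\top D_u f(X_t, U_t, \phi^\star)\, D_\theta \pi_t^\theta(X_{1:t})$ involves $D_u f$ and $D_\theta \pi^\theta$ (bounded by $L_f$ and $L_\theta$) but never $D_x f$, at the price of one factor of $\sigma_w^{-2}$ per differentiation. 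Carrying this to third order --- and separately tracking the explicit dependence of the stage costs on $U_t = \pi_t^\theta$ --- expresses $D_\theta^i G$ as an expectation of the cost against a polynomial of degree at most $i$ in the scores. A Cauchy--Schwarz split then isolates the second moment of the cost and its low-order $\theta$-derivatives, controlled by $L_{\cost}$ via \Cref{asmp: bounded costs} and the smoothness of the stage costs, from moments of the scores, which are finite by Gaussianity of $W_t$ and polynomial in $\sigma_w^{-1}, L_f, L_\theta, \dx$, and $T$ (the last two from the dimension of the noise and the sum over the $T$ transitions). Assembling these through the chain rule produces the claimed $C'$. The main obstacle will be this bookkeeping: verifying that the recursion and the score polynomials grow only polynomially in $T$ and $\dx$, and that no derivative of $f$ in $x$ sneaks in.

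Finally, as flagged in the footnote, the argument nowhere uses linearity of $f$ in $\phi$: the minimizer/vanishing-gradient reduction and the Taylor step are identical to the linear-in-parameters setting of \citet{wagenmaker2023optimal}, and the only substantive change is that the boundedness of the $\phi$- and $u$-derivatives of $f$ --- automatic under a linear parametrization --- is now supplied by \Cref{asmp: smooth dynamics}, entering through the scores above and through \Cref{lem: smooth policy}.
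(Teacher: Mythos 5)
Your proposal is correct and follows essentially the same route as the paper: a Taylor expansion of $\calJ_{\phi^\star}(\cdot)$ at its global minimizer $\phi^\star$ (with vanishing gradient, exactly the argument of Theorem 1 / Lemma D.2 of \citet{wagenmaker2023optimal} that the paper imports), combined with a chain rule through $\theta^\star(\phi)$ via \Cref{lem: smooth policy}, and with the derivative bounds on the expected cost obtained by the score-function (log-density) trick plus Cauchy--Schwarz against $L_{\cost}$ --- which is precisely the paper's \Cref{lem: differentiable objective}, including your key observation that the scores involve only $D_u f$, $D_\phi f$, and $D_\theta\pi^\theta$, never $D_x f$. The only cosmetic difference is that the paper proves the mixed $(\phi,\theta)$-derivative bound in full generality (it reuses it for the Hessian-perturbation lemmas), whereas you specialize to $\theta$-derivatives at fixed $\phi=\phi^\star$, which suffices for this statement.
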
 

\Cref{lem: cost decomposition} informs us that the leading term of the excess cost is given by the parameter estimation error weighted by the model-task Hessian, $\norm{\hat\phi-\phi^\star}_{\calH(\phi^\star)}^2$. 

Asymptotically, the distribution of the parameter estimation error is normally distributed with mean zero, and covariance given by the inverse Fisher information matrix under the data collection policy $\pi$ evaluated at the true parameter value (cf. Theorem 1 of  \citet{lennart1980asymptotic}):
    \begin{align*}
        \lim_{N\to\infty} \sqrt{N} \FI^\pi(\phi^\star)^{1/2} (\hat \phi - \phi_\star) \sim \calN(0, I).
    \end{align*}

We provide a novel non-asymptotic result which characterizes the $H$-norm of the parameter error for a positive definite matrix $H$ in terms of the Fisher information matrix. 
\begin{theorem}
    \label{thm: identification error bound}
    Suppose Assumption~\ref{asmp: smooth dynamics} holds.  Consider the least squares estimate $\hat\phi$ determined from \eqref{eq: least squares} using data collected from $N$ episodes via an exploration policy $\pi$ which is $(C_{\loja}, \alpha)$-Lojasiewicz for some $\alpha \in (\frac{1}{4}, \frac{1}{2}]$, and satisfies $\lambda_{\min}\paren{\FI^\pi(\phi^\star)} > 0$, with $\FI^\pi(\phi^\star)$ as defined in \eqref{eq: fisher info}. Let $H$ be a positive definite matrix, $\beta$ a positive number satisyfing
        $\beta \leq \sigma_w^2 \frac{\lambda_{\min}(\FI^\pi(\phi^\star))}{4},$ and $\delta \in (0, \frac{1}{4}]$. Then there exists a polynomial $\mathsf{poly}_{\alpha}$ depending on $\alpha$ such that the following condition holds. With probability at least $1-\delta$, 
    \begin{equation}
    \begin{aligned}
        \label{eq: id error}
         &\norm{\hat \phi - \phi^\star}_H^2 \leq   2 (1+\xi) \\&\times \paren{ \frac{\trace(H \FI^\pi(\phi^\star)^{-1}) }{N} + 2 \frac{\norm{H \FI^\pi(\phi^\star)^{-1}} }{N} \log\frac{4}{\delta}},
    \end{aligned}
    \end{equation}
    where $\xi = 4\beta \paren{\frac{1}{\sigma_w^2 \lambda_{\min}(\FI^\pi(\phi^\star))} + d_{\phi}}$
    as long as
    \begin{align*}
        N &\geq \mathsf{poly}_\alpha\bigg(T,L_f,  \!d_{\phi},\! \dx, \! \sigma_w, \!\log N, \! \log\frac{1}{\delta}, \!\log \frac{B}{\sigma_w} , \! C_{\loja}, \!\frac{1}{\beta} \bigg).
    \end{align*}
\end{theorem}

By substituting the inequality \eqref{eq: id error} into the leading term of \eqref{eq: excess cost as parameter error} and bounding $\norm{
\calH(\phi^\star) \FI^\pi(\phi^\star)^{-1}} \leq \trace(\calH(\phi^\star) \FI^\pi(\phi^\star)^{-1})$, one would expect that the the excess cost of deploying the certainty equivalent policy synthesized on a least squares estimate determined from data collected using policy $\pi$ is characterized by 
\begin{align}
    \label{eq: error charcterizing quantity}\trace(\calH(\phi^\star) \FI^\pi (\phi^\star)^{-1}).
\end{align}
In light of this, we would like to choose the exploration policy $\pi$ which minimizes this upper bound:
\begin{align}
    \label{eq: ideal experiment design}
    \pi = \argmin_{\tilde \pi \in \Pi_{\exp}} \trace(\calH(\phi^\star) \FI^{\tilde \pi} (\phi^\star)^{-1}).
\end{align}
Our main result  shows that the above intuition can be made rigorous through careful analysis and design of the exploration policy. It then proceeds to show that we can find an exploration policy approximately solving \eqref{eq: ideal experiment design}, even though the parameter $\phi^\star$ defining the exploration objective is unknown prior to experimentation. It is also necessary to address the fact that the model-task Hessian may not be positive definite. Thus optimizing the above objective could result in exploration policies which are not persistently exciting.

To circumvent the issue of the unknown parameter $\phi^\star$, we consider a two step approach in which we first obtain a crude parameter estimate, and then refine it by playing a targeted exploration policy. Denote the crude estimate by $\hat\phi^-$.  This parameter can be used to search for a policy that approximately solves the optimization problem in \eqref{eq: ideal experiment design}. A straightforward approach to do so is to solve the problem under the estimated parameter: 
\begin{align}
    \label{eq: approximate exp design}
    \pi = \argmin_{\tilde \pi \in \Pi_{\exp}} \trace(H(\hat \phi^{-}) \FI^{\tilde \pi}(\hat \phi^-)^{-1}). 
\end{align}

To address the issue of a model-task Hessian which is not positive definite, we introduce regularization into the exploration design. In particular, we set the exploration policy $\pi$ as 
\begin{align}
    \label{eq: regularized approximate exp design}
    \pi = \argmin_{\tilde \pi \in \Pi_{\exp}} \trace((H(\hat \phi^{-}) + \nu I) \FI^{\tilde \pi}(\hat \phi^-)^{-1}),
\end{align}
for an appropriately chosen regularization parameter $\nu$.

The above discussion motivates \Cref{alg: ALCOI}, named Active Learning for Control-Oriented Identification ($\texttt{ALCOI}$). The algorithm takes as input an initial policy satisfying the Lojasiewicz condition (Assumption~\ref{asmp: loja}), the exploration policy class, the target policy, the number of exploration rounds, a parameter $\gamma \in (0, 1)$ which controls the ratio of the exploration budget that the initial loja policy is played, the level of regularization $\nu$, and the precision of the optimization for the exploration policy $\varepsilon$. Given these  components, the algorithm proceeds in three stages. The first stage begins in \Cref{line: intial policy} by playing the initial policy for a portion of the exploration budget controlled by $\gamma$. In \Cref{line: coarse estimate}, it uses the collected data to derive a coarse estimate $\hat \phi^-$ for the unknown parameters by solving a least squares problem. 
Next, the estimate $\hat \phi^-$ is used to construct the model-task Hessian as $\hat\phi^-$ as $\calH(\hat \phi^-)$ and define an exploration objective $\trace\paren{\paren{\calH(\hat\phi^-) + \lambda I}\FI^{\tilde \pi}(\hat \phi^-)^{-1}}$. Optimizing this objective to precision $\varepsilon$ over the class of exploration policies provides the policy $\pi_{\exp}$. 
This policy is run to collect data from the system, and obtain a fresh estimate $\hat\phi^+$ for $\phi^\star$. 
Finally, the estimate is used to synthesize the certainty equivalent policy as in \eqref{eq: certainty equivalent policy}.


\begin{algorithm}
\caption{$\texttt{ALCOI}(\pi^0, \Pi_{\exp}, \Pi^\star, N, \gamma, \nu, \varepsilon)$} 
\label{alg: ALCOI}
\begin{algorithmic}[1]
\State \textbf{Input:} Initial policy $\pi^0$, exploration policy class $\Pi_{\exp}$, target policy class $\Pi^\star$, initial policy ratio $\gamma$, regularization parameter $\nu$, optimization precision $\varepsilon$. 
\State \label{line: intial policy} Play $\pi^0$ for $\lfloor \gamma N  \rfloor$ episodes to collect $\curly{X_t^n, U_t^n, X_{t+1}^n}_{t, k=1}^{T, \lfloor \gamma N \rfloor}$.
\State \label{line: coarse estimate} Fit $\hat \phi^-$ from the collected data by solving \eqref{eq: least squares}.
\State \label{line: doed call} Determine exploration policy as \begin{align*} \pi_{\exp} \in\Bigg\{\pi\in\Pi_{\exp}\vert \trace\paren{\calH(\hat \phi^-) \FI^{ \pi}(\hat \phi^-)^{-1}} \\\leq (1+\varepsilon) \inf_{\tilde \pi} \trace\paren{\calH(\hat \phi^-) \FI^{\tilde \pi}(\hat \phi^-)^{-1}} \Bigg\}.\end{align*}.
\State \label{line: mix} Define $\pi_{\mix}$ which at the start of each episode plays $\pi^0$ with probability $\gamma$, and $\pi_{\exp}$ with probability $1-\gamma$.
\State \label{line: play} Play $\pi_{\mix}$ for $\lfloor (1-\gamma) N \rfloor$ episodes, collecting data $\curly{X_t^n, U_t^n, X_{t+1}^n}_{t, k=1}^{T, \lfloor (1-\gamma) N \rfloor}$.
\State \label{line: fine model} Fit $\hat \phi^+$ by solving \eqref{eq: least squares} with the data $\curly{X_t^n, U_t^n, X_{t+1}^n}_{t, k=1}^{T, \lfloor (1-\gamma)N \rfloor}$.
\State \textbf{Return: } certainty equivalent policy $\hat \pi = \pi^\star(\hat \phi^+)$.
\end{algorithmic}
\end{algorithm}


Our main result is a finite sample bound characterizing the excess cost of the policy return by \Cref{alg: ALCOI}. 
\begin{restatable}[Main Result]{theorem}{mainresult}\label{thm: main}
Suppose $f$, $\pi^0$, $\Pi_{\exp}$, $\Pi^\star$ satisfy Assumptions~\ref{asmp: smooth dynamics}-\ref{asmp: good policy}. Let $\nu$ be a non-negative regularization parameter such that $\lambda_{\min}(\calH(\phi^\star)) + \nu > 0$. Let the optimization tolerance $\varepsilon \in (0,1/2)$ and the initial policy ratio $\gamma\in(0,1/2)$. Consider running \Cref{alg: ALCOI} to generate a control policy $\hat\pi$ as $\hat \pi = \texttt{ALCOI}(\pi^0, \Pi_{\exp}, \Pi^\star, N, \gamma, \nu, \varepsilon)$. 

Let $\delta \in \big(0, \frac{1}{4}]$ be the failure probability, and $\beta \in \bigg( 0,  \frac{\mu \paren{\lambda_{\min}(H(\phi^\star)) + \nu}}{512 d_{\phi} (\norm{H(\phi^\star)} + \nu)}\bigg)$\footnote{Recall that $\mu$ is the persistance of excitation parameter defined in Assumption~\ref{asmp: good policy}.} be a free parameter in the bound. There exists a polynomial function $\texttt{poly}_\alpha$ depending on the Lojasiewicz parameter $\alpha$ such that the following holds true. 
With probability at least $1-\delta$, it holds that
\begin{align*}
    &\calJ(\hat \pi, \phi^\star) \!-\!  \calJ_{\phi^\star}(
\phi^\star) \!\leq\! (1\!+\!4\gamma)(1\!+\!\varepsilon)(1\!+\!\xi) \paren{2 + 4\log\frac{4}{\delta}}  \\&\times \frac{\inf_{\tilde \pi \in \Pi_{\exp}} \trace\paren{\paren{\calH(\phi^\star) + \nu I} \FI^{\tilde \pi}(\phi^\star)^{-1} }}{N},
\end{align*}
where $\xi =\beta\paren{3  + 16\paren{\frac{128 d_{\phi} \paren{\norm{\calH(\phi^\star)} +\nu}}{\mu \paren{\lambda_{\min}(\calH(\phi^\star)) + \nu}} + d_{\phi}}}$ as long as 
\begin{align*}
    N \!&\geq \!\mathsf{poly}_\alpha\bigg(\!T,L_f, L_{\cost},  L_{\pi^\star}, d_{\phi}, \dx, \frac{1}{\mu}, r_{\cost}(\phi^\star), r_{\theta}(\phi^\star)\\&\frac{1}{\lambda_{\min}(\calH(\phi^\star)) + \nu}, \norm{\calH(\phi^\star)}, \nu, \sigma_w, \sigma_w^{-1}, \log N, \\& \log\frac{1}{\delta}, \log B, C_{\loja}, \frac{1}{\gamma}, \frac{1}{\beta} \bigg).
\end{align*}
\end{restatable}

The above result characterizes the excess control cost in terms of three key quantities. First, the term $\calH(\phi^\star)$ is the model-task Hessian, which describes how error in identification of the dynamics model parameters translates to the control cost. Second is the inverse Fisher information of the optimal exploration policy term, which measures a signal-to-noise ratio quantifying the hardness of parameter identification. Finally, the number of exploration episodes $N$ on the denominator captures the rate of decay from increasing the experimental budget. 

In the setting where the dynamics model has linear dependence on the parameters, \citet{wagenmaker2023optimal} present a lower bound on the excess control cost achieved by any learner following the model based interaction protocol described in \Cref{s: problem formulation}. If we choose the free parameter in the upper bound as $\beta \leq \paren{\frac{128 d_{\phi} \paren{\norm{\calH(\phi^\star)} +\nu}}{\mu \paren{\lambda_{\min}(\calH(\phi^\star)) + \nu}} + d_{\phi}}^{-1}$, then the upper bound of \Cref{thm: main} matches this lower bound up to universal constants, and the term $\log\paren{4/\delta}$. Future work will pursue general lower bounds that hold for dynamics models with a nonlinear dependence on the unknown parameter. 

The burn-in time is currently polynomial in the relevant system parameters; however, we do not pursue tight burn-in times in this work. It may be possible to improve the dependence of the burn-in on various system quantities, e.g. by leveraging stability or reachability to obtain optimal dependence of the burn-in on $T$. We additionally draw attention to the utility of the parameter $\beta$ and the algorithm hyperparameters $\nu$ and $\gamma$ for navigating the tradeoff between a good burn-in time, and optimal rates. One can take $\beta$, $\nu$, $\gamma$ and $\varepsilon$ arbitrarily close to zero, meaning that the coefficient characterizing the excess cost can become arbitrarily close to $2 + 4 \log\frac{8}{\delta}$. The cost of doing so is an increase in the burn-in time. A notable exception is the situation where $\calH(\phi^\star) \succ 0$, i.e. the setting where all parameters are necessary for control. In this case, one can take $\nu = 0$, while the burn-in time must exceed a polynomial in $\frac{1}{\lambda_{\min}(\calH(\phi^\star))}$. 

\section{Proof Sketch}
\label{s: pf sketch}

Full proof details may be found in the appendix. Here, we present a sketch. Our main result proceeds by demonstrating the following sub-steps. In these sub-steps, let $C$ be a polynomial of the problem parameters and log of the reciprocal of the failure probability, $\delta$, as in the burn-in requirement of \Cref{thm: main}. 
\begin{enumerate}[noitemsep,nolistsep,leftmargin=*]
    \item \label{item: coarse id bound} With high probability, the coarse parameter estimation error decays gracefully with the total amount of data:
    \begin{align}
        \label{eq: coarse estimation error}
        \norm{\hat\phi^--\phi^\star} \leq \frac{C}{(TN)^{\alpha}},
    \end{align}
   as long as $N$ exceeds some polynomial burn-in time.  This result is derived from recent results characterizing non-asymptotic bounds for identification \citep{ziemann2022learning}, and takes the place of the estimator consistency requirements in classical asymptotic identification literature \citep{lennart1980asymptotic}. By making the number of episodes $N$ sufficiently large, we can make this error arbitrarily small. It thus characterizes a type of ``consistency burn-in''. 
   \item \label{item: coarse approx of exp design} As long as the coarse estimation error of \eqref{eq: coarse estimation error} is sufficiently small, the ideal optimal exploration objective of \eqref{eq: ideal experiment design} is well-approximated by the objective \eqref{eq: approximate exp design}. In particular, for any exploration policy $\pi\in\Pi_{\exp}$,
   \begin{equation}
   \label{eq: ideal to approx exploration}
   \begin{aligned}
       \!&\bigg|\!\trace\paren{\!\calH(\hat \phi^-) \FI^\pi(\hat \phi^-)^{-1}\!}\! -\!\trace(\calH(\phi^\star) \FI^{\pi} (\phi^\star)^{-1}) \bigg|\! \\&\leq \!C\! \norm{\hat\phi^- \!\!- \!\phi^\star}.
   \end{aligned}
   \end{equation}
   \item \label{item: sysid bnd} For $N$ sufficiently large, we may use the consistency guarantee  \eqref{eq: coarse estimation error} to prove \Cref{thm: identification error bound}. 
   The proof of this fact follows by revisiting the delta method \citep{van2000asymptotic} through the lens of concentration inequalities. Doing so results in the near sharp\footnote{Rates matching the  asymptotic limit up to logarithmic factors.} rates we obtain. 
\end{enumerate}
Using the above results, our argument proceeds according to the following series of inequalities applied to the excess cost. With high probability,
\begin{equation}
\label{eq: excess cost pf}
\begin{aligned}
    &\calJ_{\phi^\star}(\hat\phi^+)  \!-\! \calJ_{\phi^\star}(\phi^\star) \leq \norm{\hat\phi^+ - \phi^\star}_{\calH(\phi^\star)}^2 \!+\! C_{\cost} \norm{\hat\phi^+ - \phi^\star}^3 \\
    &\leq \xi(\delta) \frac{\trace(\paren{\calH(\phi^\star) + \nu I } \FI^{\pi_{\mix}}(\phi^\star)^{-1}) }{N} + \frac{C}{N^{3/2}},
\end{aligned}
\end{equation}
where the first inequality follows by \Cref{lem: cost decomposition}, and the second inequality follows by applying \Cref{thm: identification error bound} with $H=\calH(\phi^\star) + \nu I$ for the first term, and $H=I$ for the second term. The quantity $\xi(\delta)$ is a trades off the burn-in time and the final bound. In our analysis, it can become as small as $2 + 4\log\frac{1}{\delta}$. 
Next, it follows that
\begin{align*}
    &\trace(\paren{\calH(\phi^\star)+\nu I} \FI^{\pi_{\mix}}(\phi^\star)^{-1}) \\&\overset{(i)}{\leq}  \frac{C}{N^\alpha}  +  \trace\paren{\paren{H(\hat \phi^-)+\nu I} \FI^{\pi_{\mix}}(\hat\phi^-){-1}}  \\
    &\overset{(ii)}{\leq} \!  2\frac{C}{N^\alpha} \!+\! \frac{1}{1-\gamma} \inf_{\tilde\pi\in\Pi_{\exp}}\!\trace\Bigg(\!\paren{H(\hat \phi^-) \!+\! \nu I} \FI^{\tilde \pi}(\hat\phi^-)^{-1}\!\Bigg) \\
       &\overset{(iii)}{\leq} \!3\frac{C}{N^\alpha} \!+\!\frac{1}{1\!-\!\gamma}\!\inf_{\tilde\pi\in\Pi_{\exp}}\!\trace\Bigg(\paren{H(\phi^\star) \!+\! \nu I} \FI^{\tilde\pi}(\phi^\star)^{-1}\Bigg),
\end{align*}
where inequality $(i)$ follows from \eqref{eq: ideal to approx exploration} and \eqref{eq: coarse estimation error}, inequality $(ii)$ follows from the definition of the policy $\pi_{\mix}$ and inequality $(iii)$ follows from \eqref{eq: ideal to approx exploration} and \eqref{eq: coarse estimation error}. The main result then follows by substituting the above bound into \eqref{eq: excess cost pf}, and taking $N$ to exceed a polynomial burn-in time so the higher order terms become negligible.

\section{Numerical Validation}

We deploy $\texttt{ALCOI}$ on an illustrative example to illustrate the benefits of active control-oriented exploration. For more experiments, and further details, see \Cref{s: experiment details}. Consider the two dimensional system
\begin{align*}
    X_{t+1} = X_t + U_t + W_t + \sum_{i=1}^4 \psi(X_t - \phi_\star^{(i)})
\end{align*}
with $X_t, U_t, W_t$ and $\phi_\star^{(i)}$ assuming values in $\R^2$. 
Here $\psi: \R^2 \to \R^2$ is defined by $\psi(x) = 5\frac{x}{\norm{x}}\exp(-x^2)$.  The noise is distributed according to a standard normal distribution. The parameters $\phi_\star^{(1)}$,  $\phi_\star^{(2)}$, $\phi_\star^{(3)}$, $\phi_\star^{(4)}$ are set as $\bmat{5 \\ 0}, \bmat{-5 \\ 0}, \bmat{0 \\ 5}$ and $\bmat{0 \\ -5}$, respectively.

We consider model-based reinforcement learning with a horizon $T=10$ and quadratic cost functions: for all $t \in [T]$, $$c_t(x,u) = \norm{x - \bmat{5.5 \\ 0}}^2, \,  c_{T+1}(x) = \norm{x - \bmat{5.5 \\ 0}}^2.$$ The policy class $\Pi^\star$ consists of feedback linearization controllers defined by parameters $\theta = (K, \hat \phi^{(1)}, \dots, \hat \phi^{(4)})$, with $K \in \R^{2\times2}$ and $\hat\phi^{(i)} \in \R^2$ for $i=1,\dots,4$:  $$\pi^{\theta}(X_t) = K \paren{X_t - \bmat{-5.5 \\ 0}} - \sum_{i=1}^4 \psi(X_t - \hat\phi^{(i)}).$$The exploration class $\Pi_{\exp}$ consists of policies with input energy bounded by $T$: $\sum_{t=1}^T \norm{U_t}^2 \leq T$. 

We compare $\texttt{ALCOI}$ with random exploration and approximate $A$-optimal experiment design. For random exploration, the learner injects isotropic Gaussian noise which is normalized such that $\sum_{t=1}^T \norm{U_t}^2= T$. For approximate $A$-optimal experiment design, the learner runs the $\texttt{ALCOI}$, but with the model-task Hessian estimate, $\calH(\hat\phi^-)$, replaced by $I$. 

\begin{figure}
    \centering
    \includegraphics[width=0.95\linewidth]{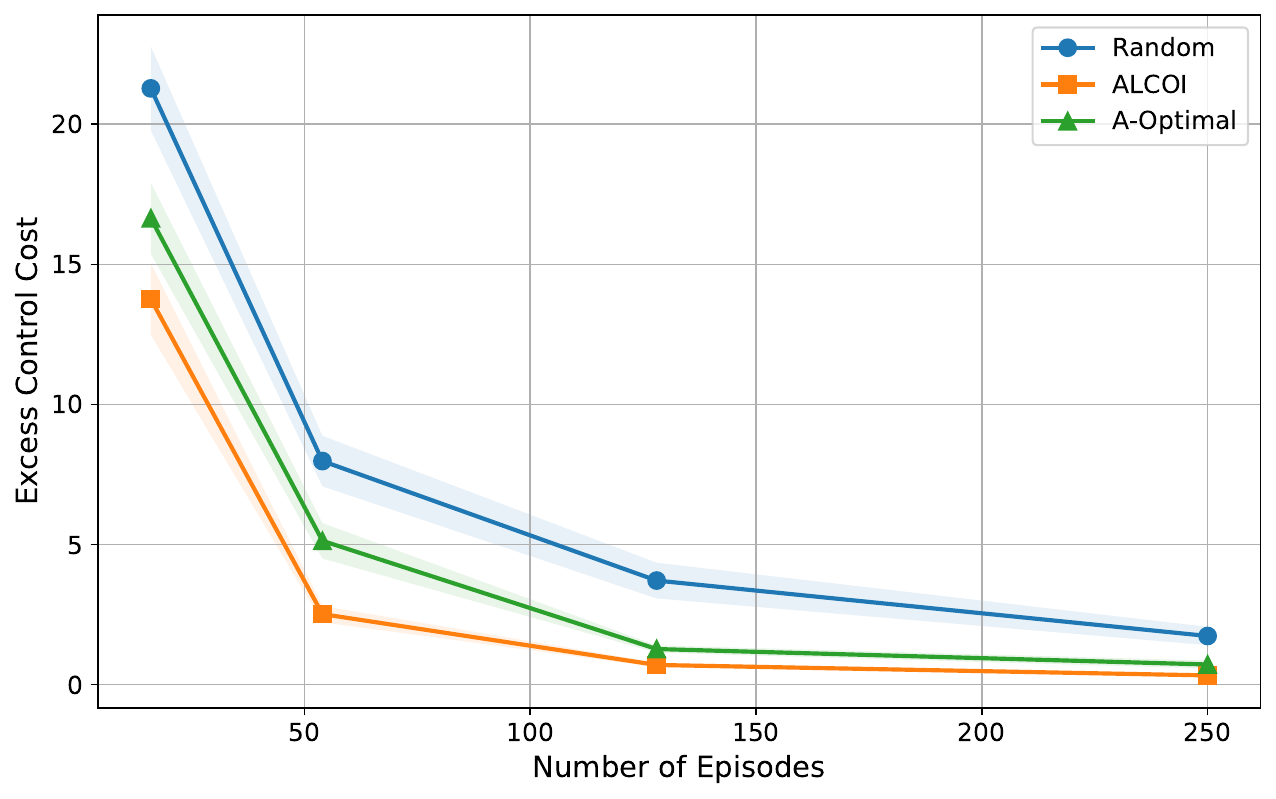}
    \caption{Comparison of the proposed control-oriented identification procedure with approximate $A$-optimal design, and random experiment design. The mean over $100$ runs is shown, with the standard error shaded.  }
    \label{fig: experiment design comparison}
    \vspace{-14pt}
\end{figure}

\Cref{fig: experiment design comparison} illustrates that \texttt{ALCOI} achieves a lower excess control cost than the alternatives at all iterations. To understand why this is the case,  note that in order to regulate the system to the position $X_t = \bmat{5.5 & 0}^\top$, the parameter $\phi_\star^{(1)}$ must be identified accurately. However, due to the Gaussian kernel, accurately estimating $\phi_\star^{(1)}$ requires that the experiment data consists of trajectories where the state is near $\phi_\star^{(1)}$. Random exploration clearly fails to collect such trajectories. Approximate $A$-optimal experiment design does collect such trajectories; however, it also collects trajectories steering the state to $\bmat{-5 & 0}^\top$, $\bmat{0 & 5}^\top$, and $\bmat{0 & -5}^\top$ in order to identify the parameters $\phi_\star^{(2)}, \phi_\star^{(3)}$ and $\phi_\star^{(3)}$. \texttt{ALCOI}, in contrast, designs experiments that are effective for identifying the parameters most relevant for control. For the chosen objective, this means that the algorithm invests the most exploration energy in collecting data in the neighborhood of $\phi_\star^{(1)}$.     This illustrative example hints at the practical benefit of the proposed approach. 

\section{Conclusions}

We have introduced and analyzed the Active Learning for Control-Oriented Identification (\texttt{ALCOI}) algorithm, marking a significant step towards understanding active exploration in model-based reinforcement learning for a general class of nonlinear dynamical systems. We provide finite sample bounds on the excess control cost achieved by the algorithm which offer insight into the interaction between the hardness of control and identification. Our bounds are known to be sharp up to logarithmic factors in the setting of nonlinear dynamical systems with linear dependence on the parameters, and we conjecture that they are sharp in general. Future work will attempt to verify that this is the case.
It would also be interesting for future work to consider learning partially observed dynamics using general prediction error methods, rather than assuming a noiseless state observation. 

\section*{ACKNOWLEDGMENT}
BL and NM are supported by NSF Award SLES-2331880, NSF CAREER award ECCS-2045834 and AFOSR Award FA9550-24-1-0102.
IZ is supported by a Swedish Research Council international postdoc grant. GP is supported in part by NSF Award SLES-2331880. 

\bibliographystyle{IEEEtranN}
\bibliography{refs}

\clearpage

\appendix
\onecolumn 

\tableofcontents

\section{System Identification Results}

In this section we consider  $\curly{X_t^k, U_t^k, X_{t+1}^k}_{t,k=1}^{T,K}$ with joint probability distribution $\mathsf{P}^K = \mathsf{P} \times \dots \times  \mathsf{P} $ induced playing an exploration policy $\pi \in \Pi_{\mathsf{exp}}$ on the system dynamics \eqref{eq: dyn}. As we consider a single dynamics parameter $\phi^\star$, and a single data collection policy $\pi$, we drop the subscripts $\pi$ and $\phi^\star$ from the expectations and probabilities. 

The identification results are organized as follows. First, in \Cref{s: consistency}, we show that under a policy satisfying a Lojasiewicz condition \Cref{def: loja}, the least squares parameter estimation error decays with the number of episodes $N$. \Cref{s: delta method} leverages the smoothness of the dynamics and the consistency result from \Cref{s: consistency} to express the parameter error in terms of a self-normalized martingale. This self-normalized martingale is then bounded in \Cref{s: proof of identification error bound} to prove \Cref{thm: identification error bound}. 

\subsection{Consistency of Least Squares Parameter Estimation}
\label{s: consistency}

Recent developments \cite{ziemann2022learning, ziemann2023tutorial} have provided finite sample bounds for the $L_2$ prediction error of nonlinear least squares estimators in the presence of time dependent data with rates that do not depend on the mixing time of the data. Our first estimation bound instantiates these results in our setting under the assumption that the data collection policy is $(C_{\loja}, \alpha)$-Lojasiewicz.  To express these bounds we define $\calG = \curly{f(\cdot, \cdot, \phi) : \phi \in \R^{d_{\phi}}, \norm{\phi}\leq B}$ and $\calG_\star = \calG - \curly{f(\cdot, \cdot, \phi^\star)}$. Using the Lipschitz bound on $f$, we find that the size  $\calN(\calG_\star, \norm{\cdot}_{\infty}, \varepsilon)$ of the the minimal cardinality $\varepsilon$-net covering $\calG_\star$ is bounded as
\begin{align}
    \label{eq: metric entropy of parametric class}
    \log \calN(\calG_\star, \norm{\cdot}_{\infty}, \varepsilon) \leq  d_{\phi} \log\paren{\frac{3 B L_f}{\varepsilon} +1}.
\end{align}

We will present an estimation error bound that holds as long as the number of episodes exceeds some value (a burn-in time) depending on the confidence with which we want it to hold. To state this requirement, define
\begin{equation}
\label{eq: ec burn-in}
\begin{aligned}
     &\tau_{\ER}(\delta) = \max\bigg\{ \paren{\frac{256 \sigma_w^2}{T} \paren{\dx + d_{\phi}\log\paren{4 B L_f TK/\sigma_w + 1} + \log\frac{1}{\delta}}}^{\frac{1}{4\alpha-1}}, \\
    &\Bigg(8(L_f C_{\loja} \alpha^{-1})^4 \paren{d_{\phi} \log \paren{4 B L_f K^{2\alpha}+1}} + \log \frac{1}{\delta} \Bigg)^{\frac{1}{1-8\alpha + 16 \alpha^2}} \bigg\}.
\end{aligned}
\end{equation}

\begin{lemma} 
\label{lem: er bound}
Suppose Assumption~\ref{asmp: smooth dynamics} holds, and $\delta\in(0,\frac{1}{2}]$. Let $\hat\phi$ be the least squares estimate from \eqref{eq: least squares} using $K$ episodes of data collected with a $(C_{\loja}, \alpha)$ policy, for $K \geq \tau_{\mathsf{ER}}(\delta)$. Then with probability at least $1-\delta$
    \begin{align*}
        \ER_{\pi, \phi^\star}(\hat \phi) \leq \frac{512 \sigma_w^2}{TK } \paren{\dx + d_{\phi}\log\paren{\frac{4L_f TK}{\sigma_w \delta}}}.
    \end{align*}
\end{lemma}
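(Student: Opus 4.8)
The plan is to reduce the prediction-error bound to a control of the noise cross-term, prove an in-sample (empirical) bound by a mixing-free chaining argument, and then transfer it to the population quantity $\ER_{\pi,\phi^\star}(\hat\phi)$ using the consistency furnished by the Lojasiewicz condition. First I would record the \emph{basic inequality}: since $\hat\phi$ minimizes the empirical squared loss and $X_{t+1}^n = f(X_t^n,U_t^n;\phi^\star)+W_t^n$, writing $\Delta(x,u) \triangleq f(x,u;\hat\phi)-f(x,u;\phi^\star)$ one obtains
\begin{align*}
\sum_{n,t}\norm{\Delta(X_t^n,U_t^n)}^2 \leq 2\sum_{n,t}\langle W_t^n,\, \Delta(X_t^n,U_t^n)\rangle .
\end{align*}
Denoting the in-sample error by $\widehat{\ER}(\hat\phi) = \tfrac{1}{TK}\sum_{n,t}\norm{\Delta(X_t^n,U_t^n)}^2$, the task is thus to bound the right-hand side in terms of $\widehat{\ER}(\hat\phi)$ and then to pass from $\widehat{\ER}$ to the population error.

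Second, I would control the cross-term. Conditioned on the regressors, each $W_t^n$ is zero-mean Gaussian with covariance $\sigma_w^2 I$, so $\sum_{n,t}\langle W_t^n,\Delta(X_t^n,U_t^n)\rangle$ is a martingale whose conditional variance is $\sigma_w^2\sum_{n,t}\norm{\Delta(X_t^n,U_t^n)}^2 = \sigma_w^2\, TK\,\widehat{\ER}(\hat\phi)$. This is exactly the self-normalized structure exploited by the nonlinear least-squares machinery of \citet{ziemann2022learning}: a chaining/covering argument over the offset class $\calG_\star$, using the metric-entropy estimate \eqref{eq: metric entropy of parametric class}, bounds the cross-term uniformly by $\lesssim \sqrt{\sigma_w^2\,TK\,\widehat{\ER}(\hat\phi)\,\big(\dx + d_{\phi}\log(\cdots)+\log\tfrac1\delta\big)}$, with the $\dx$ arising from the $\chi^2$-fluctuations of $\norm{W_t^n}$. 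Crucially, this step is mixing-free. Combining with the basic inequality and solving the resulting quadratic in $\widehat{\ER}(\hat\phi)$ yields an in-sample bound of the advertised order $\tfrac{\sigma_w^2}{TK}\big(\dx+d_{\phi}\log(\tfrac{3L_f TK}{\delta})\big)$.

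The hard part is the third step: upgrading the in-sample bound to the population prediction error without invoking mixing. The difficulty is that the empirical-to-population comparison for $\norm{\Delta}^2$ is governed by the ratio $\norm{\Delta}_\infty^2/\norm{\Delta}_{L^2}^2$, which \emph{degrades} as $\hat\phi\to\phi^\star$ because $\norm{\Delta}_{L^2}\to 0$. This is precisely where the Lojasiewicz assumption enters: it certifies $\norm{\hat\phi-\phi^\star}\leq C_{\loja}\,\ER_{\pi,\phi^\star}(\hat\phi)^\alpha$, and combined with the $L_f$-Lipschitz bound on $f$ in $\phi$ (Assumption~\ref{asmp: smooth dynamics}) it gives $\norm{\Delta}_\infty\leq L_f C_{\loja}\,\ER_{\pi,\phi^\star}(\hat\phi)^\alpha$, localizing $\hat\phi$ to a ball whose radius shrinks with the (as yet unknown) error. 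A Bernstein/uniform-concentration argument over a net of this localized class then supplies a one-sided transfer of the form $\ER_{\pi,\phi^\star}(\hat\phi)\leq 2\,\widehat{\ER}(\hat\phi)$, valid once $K$ is sufficiently large. Because the localization radius itself depends on the quantity being bounded, the argument is self-referential and is closed by a self-bounding (fixed-point) argument; ruling out the two bad regimes is what produces the two terms of the burn-in $\tau_{\ER}(\delta)$ — the first (exponent $1/(4\alpha-1)$, finite since $\alpha>\tfrac14$) ensuring the signal dominates the noise term, and the second (exponent $1/(1-4\alpha)^2$) ensuring the empirical-to-population transfer holds uniformly over the localized net. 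I expect this transfer, together with the bookkeeping of the self-bounding argument, to be the main obstacle; the cross-term control is a direct instantiation of \citet{ziemann2022learning}.
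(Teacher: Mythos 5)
Your proposal is correct in substance and shares the paper's skeleton: the basic inequality from least-squares optimality, a covering/martingale bound on the noise cross-term (the paper packages this as an ``offset complexity'' controlled via a $\gamma$-net and Proposition F.2 of \citet{ziemann2023tutorial}, which avoids explicitly solving your quadratic), and a Lojasiewicz-powered transfer from empirical to population error; notably, your two burn-in exponents $1/(4\alpha-1)$ and $1/(1-4\alpha)^2$ match the two terms of the paper's $\tau_{\ER}(\delta)$ exactly. The genuine difference is in how the transfer step copes with the degradation of the $\norm{\cdot}_\infty$-to-$L^2$ ratio as the error shrinks. You localize around the \emph{random} estimate $\hat\phi$, with radius $C_{\loja}\,\ER_{\pi}^{\phi^\star}(\hat\phi)^\alpha$ depending on the unknown quantity itself, and close a self-referential fixed-point loop. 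The paper instead applies the very same inequality --- $\norm{g}_\infty \leq L_f C_{\loja} \paren{\E\brac{\tfrac{1}{T}\sum_{t=1}^T\norm{g(X_t^1,U_t^1)}^2}}^{\alpha}$, valid for \emph{every} $g \in \calG_\star$, not just the one indexed by $\hat\phi$ --- to deduce hypercontractivity of the star-shaped hull $\calG_\star^s$ with $C_{\hc}=(L_fC_{\loja}\alpha^{-1})^4$ and $a=4\alpha$, and then proves a lower isometry (Lemmas~\ref{lem: mgf bound}--\ref{lem: lower iso}) uniformly outside a ball of \emph{deterministic} radius $r^2=K^{-4\alpha}$; functions inside that ball merely contribute the additive term $r^2$, which the first burn-in condition renders negligible. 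What the paper's organization buys is that every covering and Chernoff bound is taken over a fixed, data-independent set, so no peeling or fixed-point bookkeeping arises. Your route is viable, but to be airtight the ``net of the localized class'' must be made deterministic (e.g., dyadic peeling over shells $\curly{\phi : \ER_{\pi}^{\phi^\star}(\phi)\in(\rho/2,\rho]}$), since as written the class over which you invoke uniform concentration depends on the random radius; moreover the Bernstein step must be applied at the episode level (the quantities $\tfrac{1}{T}\sum_{t=1}^T\norm{\Delta(X_t^k,U_t^k)}^2$ are i.i.d.\ across episodes but dependent within one), which is exactly how the paper's Lemma~\ref{lem: mgf bound} proceeds. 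These are repairable details rather than gaps, and they constitute precisely the bookkeeping you identify as the main obstacle.
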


To prove this result, we first state several preliminary results. In order to do so, we introduce additional notation. Let $\calG_\star^s = \curly{bg: b\in[0,1], g\in\calG_\star}$. This set is called the star-hull of $\calG_\star$\footnote{The star-hull of $\calG$ is the smallest set containing $\calG$ that is star-shaped.}. Since all elements of $\calG_\star^s$ satisfy $\norm{f(\cdot, \cdot, \phi)}_{\infty} \leq B$, we have from Lemma 4.5 of \citet{mendelson2002improving} that the size $\calN(\calG_\star^s, \norm{\cdot}_\infty, \varepsilon)$ of a minimum cardinality $\varepsilon$-net of $\calG_\star^s$ in the $\norm{\cdot}_{\infty}$-norm satisfies  
 \begin{align}
    \label{eq: metric entropy of star hull}
     \log \calN(\calG_\star^s, \norm{\cdot}_\infty, \varepsilon) \leq \log\paren{4B/\varepsilon} + \log \calN(\calG_\star, \norm{\cdot}_\infty, \varepsilon).
 \end{align}
Define $B(r) = \curly{g \in \calG_\star^s: \sqrt{\E\brac{\frac{1}{T} \sum_{t=1}^T \norm{g(X_t^1, U_t^1)}^2}} \leq r}$, and denote the boundary of this set as $\partial B(r)$. 

We begin by stating the definition of hypercontractivity from \cite{ziemann2022learning}.
\begin{definition}[Trajectory $(C_{\hc}, a)$-hypercontractivity]
    Fix constants $C_{\hc} > 0$ and $a \in [1,2]$. The tuple $(\calG, \mathsf{P})$ consisting of the function class $\calG$ and the distribution $\mathsf{P}$ satisfies $(C_{\hc}, a)$-hypercontractivity if
    \begin{align*}
        \E \brac{\frac{1}{T} \sum_{t=1}^T \norm{g(X_t^1, U_t^1)}^4} \leq C_{\hc}\E \brac{\frac{1}{T} \sum_{t=1}^T \norm{g(X_t^1, U_t^1)}^2}^a \quad \forall g \in \calG,
    \end{align*}
    where the expectation is with respect to $\mathsf{P}$, the joint probability distribution of $\curly{X_t^1, U_t^1, X_{t+1}^1}_{t=1}^{T}$.
\end{definition}

We now state an exponential inequality which we will leverage along with hypercontractivity to bound the population risk by the empirical risk for any $g$ in the boundary of $\calG_\star^s$. 
\begin{lemma}
    \label{lem: mgf bound}
    Let $g$ belong to $\calG_\star^s$. For every $\lambda \in \R_+$ we have
    \begin{align*}
        \E \exp \paren{-\lambda \sum_{t,k=1}^{T,K} \norm{g(X_t^k,U_t^k)}^2} \leq \exp \paren{-\lambda K \E \brac{\sum_{t=1}^T \norm{g(X_t^1, U_t^1)}^2} + \frac{\lambda^2 K}{2} \E \brac{\paren{\sum_{t=1}^T \norm{g(X_t^1, U_t^1)}^2}^2}}.
    \end{align*}
\end{lemma}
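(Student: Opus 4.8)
The plan is to exploit independence across episodes to collapse the claim to a single-episode moment generating function bound, and then control that bound with an elementary pointwise inequality. First I would introduce, for each episode $k$, the random variable $Z_k = \sum_{t=1}^T \norm{g(X_t^k, U_t^k)}^2$, which is nonnegative since it is a sum of squared norms. Because the data $\curly{X_t^k, U_t^k, X_{t+1}^k}_{t=1}^T$ across episodes are i.i.d.\ draws from $\mathsf{P}$ (recall $\mathsf{P}^K = \mathsf{P} \times \dots \times \mathsf{P}$), the $Z_k$ are i.i.d., so the moment generating function factorizes:
\[
\E \exp\paren{-\lambda \sum_{k=1}^K Z_k} = \paren{\E \exp(-\lambda Z_1)}^K .
\]
It therefore suffices to prove the single-episode bound $\E \exp(-\lambda Z_1) \leq \exp\paren{-\lambda \E[Z_1] + \tfrac{\lambda^2}{2} \E[Z_1^2]}$ and raise it to the $K$-th power, at which point the exponents add to give exactly the claimed right-hand side.

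For the single-episode estimate, the key is that $\lambda \in \R_+$ and $Z_1 \geq 0$, so $\lambda Z_1 \geq 0$. I would invoke the elementary inequality $e^{-x} \leq 1 - x + \tfrac{x^2}{2}$, valid for all $x \geq 0$ (this is checked by a short second-derivative argument: the difference vanishes together with its first derivative at $x=0$ and has nonnegative second derivative on $[0,\infty)$). Applying it pointwise with $x = \lambda Z_1$ and taking expectations yields
\[
\E \exp(-\lambda Z_1) \leq 1 - \lambda \E[Z_1] + \tfrac{\lambda^2}{2}\E[Z_1^2].
\]
Finally, the inequality $1 + u \leq e^u$ converts the right-hand side into the desired exponential form, completing the single-episode bound.

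The only real subtlety, rather than a genuine obstacle, is the sign structure: the pointwise inequality $e^{-x} \leq 1 - x + \tfrac{x^2}{2}$ fails for $x < 0$, so it is essential both that we bound the moment generating function in the negative direction ($\lambda \geq 0$) and that $Z_1$ is nonnegative. This is precisely the regime needed downstream, where the lemma serves to lower-bound the population risk by the empirical risk via a Chernoff argument against $\partial B(r)$, so no generality is lost. Beyond flagging this condition, the computation is routine and requires no further machinery.
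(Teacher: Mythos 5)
Your proposal is correct and follows essentially the same route as the paper's proof: factorize the moment generating function across the i.i.d.\ episodes, apply the pointwise inequality $e^{-x} \leq 1 - x + \tfrac{x^2}{2}$ (valid for $x \geq 0$, which is where nonnegativity of $\lambda$ and of the squared norms enters), and then pass back to exponential form via $1+u \leq e^u$. The paper compresses the last two steps into the single chain $e^{-x} \leq 1 - x + x^2/2 \leq e^{-x+x^2/2}$, but the argument is the same.
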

\begin{proof}
    Observe that for $x \geq 0$, the inequality $e^{-x} \leq 1 - x + x^2/2 \leq e^{-x +x^2/2}$ holds. Using this and the fact that the trajectories are independent and identically distributed, we find
    \begin{align*}
         \E \exp \paren{-\lambda \sum_{t,k=1}^{T,K} \norm{g(X_t^k,U_t^k)}^2} &= \prod_{j\in[K]} \E \exp\paren{-\lambda \sum_{t=1}^T \norm{g(X_t^1,U_t^1)}^2} \\
        & \leq \prod_{j\in[K]} \E \paren{1 - \lambda \sum_{t=1}^T \norm{g(X_t^1,U_t^1)}^2 + \frac{\lambda^2}{2} \paren{\sum_{t=1}^T \norm{g(X_t^1, U_t^1)}}^2} \\
        &\leq \exp\paren{- \lambda K  \E\brac{ \sum_{t=1}^T\norm{g(X_t^1,U_t^1)}^2} + \frac{\lambda^2 K}{2} \E\paren{\sum_{t=1}^T \norm{g(X_t^1, U_t^1)}^2}^2}.
    \end{align*}
\end{proof}

Using the above exponential inequality along with a Chernoff bound provides the following high probability bound on the empirical prediction error by the population counterpart. 
\begin{lemma}
    \label{lem: single function lower iso}
    Suppose $(\partial B(r), \mathsf{P})$ is $(C_{\hc}, a)$-hypercontractive. For every $g \in \partial B(r)$, we have 
    \begin{align*}
        \P \brac{\frac{1}{TK}\sum_{t,k=1}^{T,K} \norm{g(X_t^k,U_t^k)}^2 < \frac{1}{2}\E\brac{\frac{1}{T} \norm{g(X_t^1,U_t^1)}^2} } &\leq \exp\paren{-\frac{K}{4 C_{\hc}} r^{4-2a}}.
    \end{align*}
\end{lemma}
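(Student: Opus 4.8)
The plan is to bound the lower tail of the empirical second moment by a Chernoff argument, feeding in \Cref{lem: mgf bound} to control the moment generating function and hypercontractivity to control the variance-proxy term that appears.

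First I would fix $g \in \partial B(r)$ and introduce the per-episode random variables $S_k = \sum_{t=1}^T \norm{g(X_t^k, U_t^k)}^2$, together with their sum $Z = \sum_{k=1}^K S_k$. Since $g$ lies on the boundary $\partial B(r)$, the definition of $B(r)$ forces $\E[\tfrac1T S_1] = r^2$, so the population mean per episode is $\mu := \E[S_1] = T r^2$. The event in the statement then reads exactly as $\{Z < \tfrac12 K\mu\}$, which recasts the claim as a one-sided deviation bound for a sum of i.i.d.\ nonnegative per-episode contributions.

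Next I apply a Chernoff bound to this lower tail: for any $\lambda > 0$, $\P[Z < \tfrac12 K\mu] = \P[e^{-\lambda Z} > e^{-\lambda K\mu/2}] \le e^{\lambda K\mu/2}\,\E[e^{-\lambda Z}]$. By \Cref{lem: mgf bound}, the MGF is bounded by $\exp(-\lambda K\mu + \tfrac{\lambda^2 K}{2}M)$, where $M := \E[(\sum_{t=1}^T\norm{g(X_t^1,U_t^1)}^2)^2] = \E[S_1^2]$. Substituting and optimizing over $\lambda$ via the choice $\lambda = \mu/(2M)$ yields the exponent $-K\mu^2/(8M)$, so that $\P[Z < \tfrac12 K\mu] \le \exp(-K\mu^2/(8M))$.

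The key remaining step is to control the variance proxy $M$ using hypercontractivity, and this is where I expect the main (albeit modest) obstacle to lie: \Cref{lem: mgf bound} produces $\E[(\sum_t\norm{g}^2)^2]$, whereas the hypercontractivity hypothesis is phrased in terms of $\E[\tfrac1T\sum_t\norm{g}^4]$. The reduction is a Cauchy--Schwarz step: $(\sum_{t=1}^T\norm{g}^2)^2 \le T\sum_{t=1}^T\norm{g}^4$, whence $M \le T^2\,\E[\tfrac1T\sum_t\norm{g}^4] \le C_{\hc}T^2(\E[\tfrac1T\sum_t\norm{g}^2])^a = C_{\hc}T^2 r^{2a}$, using $(\partial B(r), \mathsf{P})$-hypercontractivity and the boundary identity $\E[\tfrac1T S_1]=r^2$ in the last step. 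Combining this with $\mu = Tr^2$ gives $\mu^2/(8M) \ge r^{4-2a}/(8C_{\hc})$, and therefore $\P[\,\cdot\,] \le \exp(-\tfrac{K}{8C_{\hc}} r^{4-2a})$, which matches the claimed bound up to the numerical constant (the sharp optimization yields $1/8$; the looser $1/4$ in the statement is comfortably implied). The whole argument thus hinges on carefully tracking how the $\tfrac1T$ normalizations, the Cauchy--Schwarz reduction, and the boundary condition $\E[\tfrac1T S_1]=r^2$ interact to produce precisely the exponent $r^{4-2a}$.
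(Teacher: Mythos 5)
Your argument follows the paper's proof step for step: fix $g\in\partial B(r)$, apply a Chernoff bound to the lower tail, control the moment generating function with \Cref{lem: mgf bound}, pass from $M \triangleq \E\brac{\paren{\sum_{t=1}^T\norm{g(X_t^1,U_t^1)}^2}^2}$ to the fourth-moment form via Cauchy--Schwarz, and finish with hypercontractivity and the boundary identity $\E\brac{\tfrac1T\sum_{t=1}^T\norm{g(X_t^1,U_t^1)}^2}=r^2$. Your execution of these steps is correct, and in fact more careful than the paper's: writing $\mu \triangleq \E\brac{\sum_{t=1}^T\norm{g(X_t^1,U_t^1)}^2}$, the paper's displayed Chernoff step omits the threshold factor $e^{\lambda K\mu/2}$, and its claimed value $\exp\paren{-K\mu^2/(4M)}$ for the infimum does not follow from its own display (the infimum of the expression it writes is $\exp\paren{-K\mu^2/(2M)}$, while the correctly executed Chernoff argument gives $\exp\paren{-K\mu^2/(8M)}$, exactly as you derive).

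The genuine flaw is your closing remark. Obtaining the exponent $-\tfrac{K}{8C_{\hc}}r^{4-2a}$ does not ``comfortably imply'' the stated bound with exponent $-\tfrac{K}{4C_{\hc}}r^{4-2a}$: the implication runs the other way, since $\exp\paren{-x/8}\ge\exp\paren{-x/4}$ for $x\ge 0$, so your bound is strictly \emph{weaker} than the lemma's statement. As written, the proposal therefore proves a weaker inequality than the one claimed. What your (correct) computation actually reveals is that the constant $1/4$ in the lemma appears to be an artifact of the paper's arithmetic slips; the constant this method honestly yields is $1/8$. The discrepancy is immaterial downstream---it only propagates into \Cref{lem: lower iso} and burn-in quantities such as $\tau_{\ER}(\delta)$, where a factor of $2$ is absorbed into the polynomial burn-in---but the correct way to finish is to state the conclusion with $1/8$ (or flag the paper's constant as erroneous), not to assert that the $1/8$ bound implies the $1/4$ bound.
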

\begin{proof}
    A Chernoff bound along with \Cref{lem: mgf bound} yields the estimate
    \begin{align*}
        &\P\brac{\frac{1}{TK} \sum_{t,k=1}^{T,K} \norm{g(X_t^k,U_t^k)}^2 < \frac{1}{2}\E\brac{\frac{1}{T} \norm{g(X_t^1,U_t^1)}^2}  } \\
        &\leq \inf_{\lambda \geq 0} \exp \paren{-\lambda K \E \brac{\sum_{t=1}^T \norm{g(X_t^1, U_t^1)}^2} + \frac{\lambda^2 K}{2} \E \brac{\paren{\sum_{t=1}^T \norm{g(X_t^1, U_t^1)}^2}^2}}\\
        &= \exp\paren{-\frac{K \E\brac{ \sum_{t=1}^T \norm{g(X_t^1,U_t^1)}^2}^2}{4 \E \brac{\paren{\sum_{t=1}^T \norm{g(X_t^1,U_t^1)}^2}^2}}}.
    \end{align*}
    By Cauchy-Schwarz,
    \begin{align*}
        \E \paren{\sum_{t=1}^T \norm{g(X_t^1,U_t^1)}^2}^2 \leq T \E\sum_{t=1}^T \norm{g(X_t^1,U_t^1)}^4 = T^2  \E\brac{\frac{1}{T}\sum_{t=1}^T \norm{g(X_t^1,U_t^1)}^4}. 
    \end{align*}
    Therefore, by the fact that $(\partial B(r), \pi)$ is $(C_{\hc},a)$-hypercontractive, we have that
    \begin{align*}
        \E \brac{\frac{1}{T}\sum_{t=1}^T \norm{g(X_t^1,U_t^1)}^4} \leq C_{\hc} \E \brac{\frac{1}{T}\sum_{t=1}^T \norm{g(X_t^1,U_t^1)}^2}^a.
    \end{align*}
    As a result, we may bound 
    \begin{align*}
        \exp\paren{-\frac{K \E\brac{ \sum_{t=1}^T \norm{g(X_t^1,U_t^1)}^2}^2}{4 \E \brac{\paren{\sum_{t=1}^T \norm{g(X_t^1,U_t^1)}^2}^2}}} \leq \exp\paren{-\frac{K \brac{\frac{1}{T}\sum_{t=1}^T \norm{g(X_t^1,U_t^1)}^2}^{2-a} }{4 C_{\hc}}}.
    \end{align*}
    To conclude, we note that $\brac{\frac{1}{T}\sum_{t=1}^T \norm{g(X_t,U_t)}^2} = r^2$ by the fact that $g \in \partial B(r)$. 
\end{proof}

The above bound for a single member of the function class may be extended to a uniform bound over function in $\calG_\star^s$, but outside the ball $B(r)$ using a covering argument.
\begin{lemma}[Lower isometry, Multi-trajectory Modification of Theorem 5.2 of \citet{ziemann2022learning}]
    \label{lem: lower iso} Suppose there exists a $r/\sqrt{8}$-net $\calG_r$ of $\partial B(r)$ in the norm $\norm{\cdot}_{\infty}$ such that $(\calG_r, \pi)$ satisfies $(C_{\hc}, a)$-hypercontractivity. Then the following lower isometry holds 
    \begin{align*}
        \P \brac{\forall g \in \calG_\star^s \setminus B(r), \frac{1}{TK} \sum_{t,k=1}^{T,K} \norm{g(X_t^k,U_t^k)}^2 - \frac{1}{8} \E\brac{\frac{1}{T} \norm{g(X_t^1, U_t^1)}^2} \leq 0} \leq \abs{\calG_r} \exp\paren{\frac{-K r^{4-2a}}{4 C_{\hc}}}.
    \end{align*}
\end{lemma}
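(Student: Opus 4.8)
The plan is to upgrade the single-function guarantee of \Cref{lem: single function lower iso} to a statement that is uniform over $\calG_\star^s \setminus B(r)$ via a union bound over the net $\calG_r$, and then to propagate this control to every function in $\calG_\star^s \setminus B(r)$ using the $2$-homogeneity of the empirical and population risks together with the star-shaped structure of $\calG_\star^s$. It is convenient to introduce the empirical and population seminorms $\norm{g}_K \triangleq \paren{\frac{1}{TK}\sum_{t,k=1}^{T,K}\norm{g(X_t^k,U_t^k)}^2}^{1/2}$ and $\norm{g}_{L^2}\triangleq \paren{\E\brac{\frac{1}{T}\sum_{t=1}^T\norm{g(X_t^1,U_t^1)}^2}}^{1/2}$; these are genuine seminorms because $\norm{\cdot}$ is the Euclidean norm on the output, and the empirical (resp.\ population) risk appearing in the statement equals $\norm{g}_K^2$ (resp.\ $\norm{g}_{L^2}^2$).

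First I would apply \Cref{lem: single function lower iso} to each $g' \in \calG_r$. Every such $g'$ lies on $\partial B(r)$, so $\norm{g'}_{L^2} = r$, and $(\calG_r,\pi)$ is $(C_{\hc},a)$-hypercontractive by hypothesis; a union bound then shows that on an event $\mathcal{E}$ of probability at least $1 - \abs{\calG_r}\exp\paren{-K r^{4-2a}/(4C_{\hc})}$ one has $\norm{g'}_K^2 \geq \tfrac{1}{2}r^2$ simultaneously for all $g' \in \calG_r$.

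Working on $\mathcal{E}$, I would then reduce a general $g \in \calG_\star^s \setminus B(r)$ to the boundary by scaling. Since $\norm{g}_{L^2} > r$, set $b = r/\norm{g}_{L^2} \in (0,1)$ and $\tilde g = b g$; star-shapedness gives $\tilde g \in \calG_\star^s$, and homogeneity gives $\norm{\tilde g}_{L^2} = r$, i.e.\ $\tilde g \in \partial B(r)$. Choosing $g' \in \calG_r$ with $\norm{\tilde g - g'}_\infty \le r/\sqrt{8}$ and using $\norm{\tilde g - g'}_K \le \norm{\tilde g - g'}_\infty$, the reverse triangle inequality for $\norm{\cdot}_K$ yields $\norm{\tilde g}_K \ge \norm{g'}_K - \norm{\tilde g - g'}_K \ge r/\sqrt{2} - r/\sqrt{8} = r/\sqrt{8}$, hence $\norm{\tilde g}_K^2 \ge r^2/8$. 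Undoing the scaling, $\norm{g}_K^2 = \norm{\tilde g}_K^2/b^2 \ge (r^2/8)/b^2 = \tfrac{1}{8}\norm{g}_{L^2}^2$. As $g$ was arbitrary, this is the uniform lower isometry $\norm{g}_K^2 \ge \tfrac18 \norm{g}_{L^2}^2$ on all of $\calG_\star^s \setminus B(r)$, whose failure is exactly the event in the statement; the claimed probability bound is then that of $\mathcal{E}^c$.

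The heart of the argument — and the only genuinely non-mechanical step — is the star-shaped reduction: instead of netting the unbounded set $\calG_\star^s \setminus B(r)$ directly, one nets only the sphere $\partial B(r)$ and rescales, which is legitimate precisely because $\norm{\cdot}_K$ and $\norm{\cdot}_{L^2}$ are $1$-homogeneous (so their squares are $2$-homogeneous) and $\calG_\star^s$ is closed under multiplication by scalars in $[0,1]$. The remaining care is bookkeeping: the net radius $r/\sqrt{8}$ is calibrated so that the two losses in the reverse triangle inequality combine to leave exactly the target factor $\tfrac18$, and hypercontractivity is invoked only on the finite collection $\calG_r$, matching the hypothesis and keeping the union bound at $\abs{\calG_r}$ terms.
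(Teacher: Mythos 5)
Your proposal is correct and follows essentially the same route as the paper's proof: a union bound over the net $\calG_r$ using \Cref{lem: single function lower iso}, a triangle-inequality transfer from net points to arbitrary points of $\partial B(r)$ calibrated to yield the factor $\tfrac18$, and a rescaling via star-shapedness to cover all of $\calG_\star^s \setminus B(r)$. The only cosmetic difference is that you run the transfer step at the level of the empirical seminorm (reverse triangle inequality), whereas the paper works directly with squared quantities via $\norm{a}^2 \geq \tfrac12\norm{b}^2 - \norm{a-b}^2$; both give the identical constant.
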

\begin{proof}
   The fact that $\calG_\star^s$ is shaped allows us to the restrict attention to $g \in \partial B(r)$. By the fact that $\calG_r$ is an $r/\sqrt{8}$-net of $\partial B(r)$, we have by the reverse triangle inequality that for any $g \in \partial B(r)$, there exists some $g_i \in \calG_r$ such that 
   \begin{align}
        \label{eq: proj lower bound}
       \frac{1}{TK}\sum_{t,k=1}^{T,K} \norm{g(X_t^k,U_t^k)}^2 \geq \frac{1}{2TK} \sum_{t,k=1}^{T,K} \norm{g_i(X_t^k,U_t^k)}^2 - \frac{r^2}{8}. 
   \end{align}
   Define the event
   \begin{align*}
       \calE = \bigcup_{g \in \calG_r} \curly{\frac{1}{TK}\sum_{t,k=1}^{T,K} \norm{g(X_t^k,U_t^k)}^2 \leq \frac{1}{2} \E \brac{\frac{1}{T} \sum_{t=1}^T \norm{g(X_t^1,U_t^1)}^2}}.
   \end{align*}
    It follows from a union bound along with \Cref{lem: single function lower iso} that
    \begin{align*}
        \P\brac{\calE} \leq \abs{\calG_r} \exp\paren{\frac{-K r^{4-2a}}{4 C_{\hc}}}. 
    \end{align*}
    Fix an arbitrary $g \in \partial B(r)$. On the event $\calE^c$, we have
    \begin{align*}
        \frac{1}{TK}\sum_{t,k=1}^{T,K} \norm{g(X_t^k,U_t^k)}^2 &\geq \frac{1}{2TK}  \sum_{t,k=1}^{T,K} \norm{g_i(X_t^k,U_t^k)}^2 - \frac{r^2}{8} && \textrm{(We may find such a $g_i$ by \eqref{eq: proj lower bound})} \\
        &\geq \frac{1}{4} \E \brac{\frac{1}{T} \sum_{t=1}^T \norm{g_i(X_t^1,U_t^1)}^2} - \frac{r^2}{8} && \textrm{(By definition of $\calE$)} \\
        &= \frac{1}{8}\E \brac{\frac{1}{T} \sum_{t=1}^T \norm{g_i(X_t^1,U_t^1)}^2} && (g_i \in \partial B(r)).
    \end{align*}
    Since $g \in \partial B(r)$ was arbitrary, we have that
    \begin{align*}
        \P\brac{\sup_{g\in\partial B(r)} \frac{1}{TK} \sum_{t,k=1}^{T,K} \norm{g(X_t^k,U_t^k)}^2 - \frac{1}{8} \E \brac{\frac{1}{T}\norm{g(X_t^1,U_t^1)}^2} \leq 0} \leq  \abs{\calG_r} \exp\paren{\frac{-K r^{4-2a}}{4 C_{\hc}}}. 
    \end{align*}
    To extend the inequality in the expectation to hold for the set $\calG_\star^s \setminus B(r)$, we observe that by the fact that we may rescale $g$ to lie in $\partial B(r)$ since $\calG_\star^s$ is star-shaped.
\end{proof}

We now proceed with our proof of \Cref{lem: er bound}. 

\begin{proof}
We first verify that $\calG_\star^s$ is hypercontractive. 
By proposition 4.2 of \cite{ziemann2024sharp}, the fact that $\pi$ is a $(C_{\loja}, \alpha)$-Lojasiewicz policy and $\norm{D f(x,u;\phi)}  \leq L_f$ (Assumption~\ref{asmp: smooth dynamics}) implies that for all $g \in \calG_\star$, 
\[
    \lim_{p\to\infty} \paren{\E \brac{\frac{1}{T} \sum_{t=1}^T \norm{g(X_t^1, U_t^1)}^p}}^{1/p} \leq L_f C_{\loja} \alpha^{-1} \E \brac{\frac{1}{T} \sum_{t=1}^T \norm{g(X_t^1, U_t^1)}^2}^{\alpha/2}.
\]
We  additionally have that 
\[
     \paren{\E \brac{\frac{1}{T} \sum_{t=1}^T \norm{g(X_t^1, U_t^1)}^4}}^{1/4} \leq \lim_{p\to\infty} \paren{\E \brac{\frac{1}{T} \sum_{t=1}^T \norm{g(X_t^1, U_t^1)}^p}}^{1/p}.
\]
Combining these inequalities implies that $(\calG_\star, \mathsf{P})$ is $(C_{\hc}, a)$ hypercontractive, where $C_{\mathsf{hc}} = (L_f C_{\loja} \alpha^{-1})^4$ and $a = 4\alpha$.   By the following elementary inequality, the star-shaped hull $\calG_{\star}^s$ satisfies hypercontractivity with the same parameters: for $b \in [0,1]$ and $a 
 \leq 2$, $b^4 < b^{2a}$.  \\

\noindent \textbf{Bounding the population risk by the empirical risk: }

To bound the worst-case difference between the excess risk and its empirical counterpart over all functions in the class $\calG_\star$, we split the problem into two cases: one in which $g$ belongs to the ball $B(r)$, and one in which it does not.:
\begin{align*}
  &\sup_{g \in \calG_\star} \E \brac{\frac{1}{T} \sum_{t=1}^T \norm{g(X_t^1, U_t^1)}^2} - \frac{8}{TK}\sum_{t,k=1}^{T,K} \norm{g(X_t^k,U_t^k)}^2 
   \\
   &\leq \sup_{g \in \calG_\star \cap B(r)} \paren{\E \brac{\frac{1}{T} \sum_{t=1}^T \norm{g(X_t^1, U_t^1)}^2} - \frac{8}{TK}\sum_{t,k=1}^{T,K} \norm{g(X_t^k,U_t^k)}^2} \\
   &+ \sup_{g \in \calG_\star \setminus B(r)} \paren{ \E \brac{\frac{1}{T} \sum_{t=1}^T \norm{g(X_t^1, U_t^1)}^2} - \frac{8}{TK}\sum_{t,k=1}^{T,K} \norm{g(X_t^k,U_t^k)}^2}. 
\end{align*}
The first term is bounded by $r^2$ by the definition of $B(r)$ and the fact that 
$-\sum_{t,k=1}^{T,K} \norm{g(X_t^k,U_t^k)}^2 \leq 0.$ 
By \Cref{lem: lower iso}, the second term is less than zero with probability at least 
$1- \abs{\calG_r} \exp\paren{\frac{-K r^{4-2a}}{4 C_{\hc}}}$. Then under this event, the following bound holds for all $g \in \calG_\star$. 
\begin{align*}
     \E \brac{\frac{1}{T} \sum_{t=1}^T \norm{g(X_t^1, U_t^1)}^2} \leq 8 \frac{1}{TK} \sum_{t,k=1}^{T,K} \norm{g(X_t^k,U_t^k)}^2 + r^2.
\end{align*}

We may bound $\log \abs{\calG_r}$ using \eqref{eq: metric entropy of star hull} and \eqref{eq: metric entropy of parametric class} as
$
    \log \abs{\calG_r} \leq \log \paren{\frac{4B}{r}} + d_{\phi} \log \paren{\frac{3B L_f}{r }+1}. 
$ Then choosing $r^2 = \frac{1}{K^{a}}$, we find that by the requirement that
\[
    K \geq \paren{8 C_{\hc} \paren{ d_{\phi} \log\paren{4L_f K^{a/2}}} + \log \frac{1}{\delta} }^{\frac{1}{1-2a + a^2}}, 
\] 
the inequality 
\begin{align}
    \label{eq: lower iso}
     \E \brac{\frac{1}{T} \sum_{t=1}^T \norm{g(X_t^1, U_t^1)}^2} \leq 8 \frac{1}{TK} \sum_{t,k=1}^{T,K} \norm{g(X_t^k,U_t^k)}^2 + \frac{1}{K^a}
\end{align}
is satisfied for all $g \in \calG_\star$ with probability at least $1-\delta$. \\

\noindent \textbf{Bounding the empirical risk by the offset complexity: }
We now seek to bound the empirical risk. By the fact that $\hat \phi$ solves \eqref{eq: least squares}, it follows that
\begin{align*}
    \sum_{t,k=1}^{T,K} \norm{X_{t+1}^n - f(X_t^n, U_t^n; \hat\phi)}^2 \leq  \sum_{t,k=1}^{T,K} \norm{X_{t+1}^n - f(X_t^n, U_t^n; \phi^\star)}^2. 
\end{align*}
Substituting in $X_{t+1}^n = f(X_t^n, U_t^n; \phi^\star) + W_t^k$ and rearranging, we find 
\begin{align*}
    \sum_{t,k=1}^{T,K} \norm{f(X_t^n, U_t^n, \hat \phi) - f(X_t^n, U_t^n; \phi^\star)}^2 \leq 2 \sum_{t,k=1}^{T,K} \langle f(X_t^n, U_t^n; \hat \phi) - f(X_t^n, U_t^n; \phi^\star), W_t^k\rangle. 
\end{align*}
Multiplying the above inequality by two, and subtracting $\sum_{t,k=1}^{T,K} \norm{f(X_t^n, U_t^n, \hat \phi) - f(X_t^n, U_t^n; \phi^\star)}^2$ from each side leads to the following offset basic inequality:
\begin{align*}
    &\sum_{t,k=1}^{T,K} \norm{f(X_t^n, U_t^n, \hat \phi) - f(X_t^n, U_t^n; \phi^\star)}^2 \\
    &\leq 4 \sum_{t,k=1}^{T,K} \langle f(X_t^n, U_t^n; \hat \phi) - f(X_t^n, U_t^n; \phi^\star), W_t^k\rangle - \sum_{t,k=1}^{T,K} \norm{f(X_t^n, U_t^n, \hat \phi) - f(X_t^n, U_t^n; \phi^\star)}^2.
\end{align*}
For $g \in \calG_\star$, define the offset martingale complexity $M_{T,K}(g)$ as
\begin{align*}
    M_{T,K}(g) \triangleq \frac{1}{TK} \paren{4 \sum_{t,k=1}^{T,K} \langle g(X_t^n, U_t^n), W_t^k\rangle - \sum_{t,k=1}^{T,K} \norm{g(X_t^n, U_t^n)}^2}.
\end{align*}
Then the offset basic inequality allows us to bound the empirical risk by the offset martingale complexity as 
\begin{align}
\label{eq: bound empirical by offset}
\frac{1}{TK}\sum_{t,k=1}^{T,K} \norm{g(X_t^k,U_t^k)}^2 \leq M_{T,K}(g).
\end{align}
    
\noindent \textbf{Bounding the offset complexity:}
We bound the offset complexity using a covering argument. 

Let $\calG_{\gamma}$ be a minimal cardinality $\gamma$-cover for $\calG_\star$ with cardinality $\calN(\calG_\star, \norm{\cdot}_{\infty}, \gamma)$. Also let $\Pi$ denote the projection of an element $g \in \calG_\star$ onto $\calG_{\gamma}$. We have that, 
\begin{align*}
     \sup_{g \in \calG_{\star}} M_{T,K}(g) &= \sup_{g \in \calG_{\star}} \frac{1}{TK} \paren{ \sum_{t,k=1}^{T,K} \paren{4\langle g(X_t^n, U_t^n), W_t^k\rangle - \norm{g(X_t^n, U_t^n)}^2}} \\
     &= \sup_{g \in \calG_{\star}}\frac{1}{TK} \paren{ \sum_{t,k=1}^{T,K} \paren{4\langle g (X_t^n, U_t^n)-g_{\Pi g}(X_t^n, U_t^n), W_t^k\rangle + 4\langle g_{\Pi g}(X_t^n, U_t^n), W_t^k\rangle - \norm{g(X_t^n, U_t^n)}^2}}.
\end{align*}
For any $g \in \calG_\star$, we may apply the Cauchy-Schwarz inequality and the bound $\norm{g - \Pi g}_{\infty} \leq \gamma$ to bound
\begin{align*}
    \sup_{g \in \calG_{\star}} \frac{1}{TK} \sum_{t,k=1}^{T,K} 4\langle g(X_t^n, U_t^n) - g_{\Pi g}(X_t^n, U_t^n), W_t^k \rangle &\leq 4 \gamma \sqrt{\frac{\sum_{t,k=1}^{T,K} \norm{W_t^k}^2}{TK}} \leq 12 \gamma \sigma_w \dx 
\end{align*}
with probability at least $1-\delta$ using the concentration of chi-squared random variables, and the fact that $K \geq 3 \log\paren{1/\delta}$. Additionally note that 
\begin{align*}
    \norm{g(X_t^n, U_t^n)}^2 \geq \frac{1}{2} \norm{g_{\Pi g}(X_t^n, U_t^n)}^2 - \norm{g(X_t^n, U_t^n) - g_{\Pi g}(X_t^n, U_t^n) }^2 \geq \frac{1}{2} \norm{g_{\Pi g}(X_t^n, U_t^n)}^2 - \gamma^2.
\end{align*}
Combining the above inequalities allows us to bound the offset complexity over all parameters in terms of the offset complexity of the finite set of parameters determined by the cover $\calG_\gamma$.
\begin{align}
    \label{eq: original offset complexity bound}
    \sup_{g \in \calG_{\star}} M_{T,K}(g) \leq 12 \gamma \sigma_w \dx  + \gamma^2 + \frac{\sup_{g \in \calG_{\gamma}} \frac{1}{TK} \paren{ \sum_{t,k=1}^{T,K} \paren{4\langle g(X_t^n, U_t^n), W_t^k\rangle - \frac{1}{2}\norm{g(X_t^n, U_t^n)}^2}}}{TK}.
\end{align}
We may then in turn bound the offset complexity over a finite hypothesis class using Proposition F.2 of \citet{ziemann2023tutorial}. With probability at least $1-\delta$, 
\begin{equation}
\begin{aligned}
    \label{eq: finite offset complexity bound}
    &\sup_{g \in \calG_\gamma} \frac{1}{TK} \paren{ \sum_{t,k=1}^{T,K} \paren{4\langle g(X_t^n, U_t^n), W_t^k\rangle - \frac{1}{2}\norm{g(X_t^n, U_t^n)}^2}} \\&\leq 16 \sigma_w^2 \log\paren{\frac{\abs{\calG_\gamma}}{\delta}} \leq 16 \sigma_w^2 \paren{d_{\phi} \log\paren{\frac{3 B L_f}{\gamma}+1} + \log \frac{1}{\delta}}, 
\end{aligned}
\end{equation}
where the final inequality followed from \eqref{eq: metric entropy of parametric class}. Setting $\gamma = \frac{\sigma_w}{TK}$, we may combine \eqref{eq: original offset complexity bound} with \eqref{eq: finite offset complexity bound} to conclude 
\begin{align*}
    \sup_{g \in \calG_{\star}} M_{T,K}(g) &\leq \frac{12  \sigma_w^2 \dx}{TK}  + \frac{\sigma_w^2}{(TK)^2} + 16 \sigma_w^2 \paren{d_{\phi} \log\paren{3 B L_fTK/\sigma_w+1} + \log \frac{1}{\delta}} \\ &\leq \frac{32 \sigma_w^2}{TK} \paren{\dx + d_{\phi}\log\paren{3BL_f TK/\sigma_w + 1} + \log\frac{1}{\delta}}.
\end{align*}
By the burn-in requirement 
\[
    K \geq \paren{\frac{256 \sigma_w^2}{T} \paren{\dx + d_{\phi}\log\paren{4L_f TK/\sigma_w} + \log\frac{1}{\delta}}}^{\frac{1}{a-1}},
\]
we may combine the above inequality with \eqref{eq: lower iso} and \eqref{eq: bound empirical by offset} to achieve the desired result. 
\end{proof}

With this result in hand, we may immediately determine a high probability bound on the parameter estimation error as it appears in the bound in \Cref{lem: cost decomposition}. 
\begin{corollary}
    \label{cor: slow rate}
    Under the setting of \Cref{lem: er bound}, it follows that with probability at least $1-\delta$,
    \begin{align*}
        &\norm{\hat \phi \!  -\!\phi^\star}^2\!\leq\! \paren{\frac{512 \sigma_w^2}{T K } \paren{\dx \!+\!d_{\phi}\log\paren{4 B L_f T K}} + \log \frac{1}{\delta}}^{2\alpha}.
    \end{align*}
\end{corollary}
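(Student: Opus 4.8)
The plan is to obtain the claim as a direct combination of two ingredients already in hand: the population prediction-error bound of \Cref{lem: er bound}, and the Lojasiewicz inequality of Assumption~\ref{asmp: loja}, which by hypothesis the data-collection policy $\pi$ satisfies with parameters $(C_{\loja}, \alpha)$. The prediction error $\ER_{\pi}^{\phi^\star}(\hat\phi)$ is what \Cref{lem: er bound} controls, while the Lojasiewicz condition is precisely the device that converts such control of the prediction error into control of the parameter recovery error $\norm{\hat\phi - \phi^\star}$. So the corollary should fall out with no new probabilistic work.

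First I would invoke \Cref{lem: er bound} with $K = N$ episodes. Under the stated burn-in $N \geq \tau_{\ER}(\delta)$, this produces, on an event $\calE$ of probability at least $1-\delta$, the bound
\begin{align*}
    \ER_{\pi}^{\phi^\star}(\hat\phi) \leq \frac{512\sigma_w^2}{TN}\paren{\dx + d_{\phi}\log\paren{\frac{3L_f TN}{\delta}}}.
\end{align*}
Working on this same event $\calE$, so that no additional probability budget is spent, I would then apply the definition of a $(C_{\loja}, \alpha)$-Lojasiewicz policy (\Cref{def: loja}), which gives $\norm{\hat\phi - \phi^\star} \leq C_{\loja}\,\ER_{\pi}^{\phi^\star}(\hat\phi)^{\alpha}$. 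Squaring and using that $t \mapsto t^{2\alpha}$ is monotone increasing for $\alpha > 0$, I would substitute the displayed prediction-error estimate into the right-hand side to arrive at the stated bound on $\norm{\hat\phi - \phi^\star}^2$.

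There is no substantive obstacle; the statement is genuinely a corollary, and the only real content is the sequencing of the two bounds on a common high-probability event. The two points requiring a moment of care are, first, bookkeeping of the constant $C_{\loja}$, which enters as a multiplicative $C_{\loja}^2$ factor after squaring and must be either absorbed into the leading constant or normalized away; and second, the observation that the Lojasiewicz step is deterministic once $\hat\phi$ is fixed, so that the failure probability $\delta$ is inherited entirely from \Cref{lem: er bound} rather than compounded.
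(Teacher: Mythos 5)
Your proposal is correct and is exactly the paper's (implicit) argument: the paper offers no separate proof, presenting the corollary as an immediate consequence of \Cref{lem: er bound}, obtained—just as you do—by applying the deterministic Lojasiewicz inequality of \Cref{def: loja} to the prediction-error bound on the same high-probability event, so no extra failure probability is incurred. Your bookkeeping remark is also apt: strictly, this combination yields $C_{\loja}^2$ times the stated right-hand side, a factor the paper's statement (and its downstream uses) silently drops, i.e.\ it implicitly treats $C_{\loja} \leq 1$; this is an imprecision in the paper's statement rather than a gap in your proof.
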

For $\alpha < \frac{1}{2}$, the above bound decays at a sub-optimal rate. However, the dependence can be improved by leveraging the Delta method, outlined in the subsequent section.

\subsection{Improved Rates via the Delta Method}\label{s: delta method}
We can improve upon the estimation error bounds from the previous section by leveraging the smoothness assumptions on the dynamics. In particular, the loss function is locally quadratic in the parameter error once when the parameter error is sufficiently small. Therefore, once we have sufficient data available, the parameter error should scale with the rate of $\frac{1}{TK}$. 

We will show that this is the case using an appeal to Taylor's theorem. In particular, we apply Taylor's theorem to the optimality conditions of the least squares problem, an approach known as the delta method \citep{van2000asymptotic}. Doing so allows us to to show that the parameter error may be expressed in terms of a self-normalized martingale. To express this result, we define the shorthand $D f_t^k \triangleq D_{\phi} f(X_t^k, U_t^k; \phi^\star)$. In differentiating the optimality conditions, we encounter an empirical covariance for the covariates defined by the Jacobian of $f$ evaluated at the experiment data:
\begin{align}
\label{eq: emp covariance def}
    \hat { \Sigma}_K^\pi = \frac{1}{K} \sum_{k=1}^K \sum_{t=1}^T (D f_t^k)^\top D f_t^k.
\end{align}
The superscript $\pi$ denotes the policy that is rolled out to generate $U_t^k$. Given sufficiently many trajectories, this estimate converges to the expected value \begin{align}\label{eq: covariance} \Sigma^\pi \triangleq \mathbb{E}_\pi^{\phi^\star}\brac{\sum_{t=1}^T D f(X_t, U_t, \phi^\star)^\top D f(X_t, U_t, \phi^\star)}.\end{align} 
As long as this covariance matrix is positive definite, $\Sigma^{\pi} \succ 0$, we can show that the delta method is successful in improving the estimation rate. 

In order to do so, we require sufficiently many trajectories. In addition to the burn-in for the excess risk bound, we require that $K$ is large enough that \Cref{cor: slow rate} guarantees  $\norm{\hat \phi_K - \phi^\star} \leq 1$ with high probability. We refer to the burn-in required for this to hold as $K \geq \tau_{\mathsf{small\,error}}(\delta)$, where 
\begin{align*}
     \tau_{\mathsf{small\,error}}(\delta) = \frac{512 \sigma_w^2}{T} \paren{\dx + d_{\phi}\log\paren{4B L_f TK/\sigma_w} + \log\frac{2}{\delta}}.
\end{align*}

Furthermore, we require that the higher order terms ($\mathsf{h.o.t.}$) from the taylor expansion involved in the Delta method are dominated by the lower order terms. For downstream use of the delta method, we allow this to depend on the norm of an arbitrary positive definite matrix $H$ and an arbitrary positive number $\beta$. We refer to the required burn-in as $K \geq \tau_{\mathsf{h.o.t.}}(\delta, \beta)$, where
\begin{align*}
    \tau_{\mathsf{h.o.t}}(\delta, H) &= \Bigg(\frac{8T^{1-\alpha} L_f^2 }{\beta }  \paren{512 \sigma_w^2\paren{\dx + d_{\phi}\log\paren{4 BL_f TK/\sigma_w } + \log\frac{2}{\delta}}}^\alpha  + \frac{8 \dx \sigma_w L_f }{ \beta} \sqrt{32 T \log\frac{4}{\delta}} \Bigg)^{\frac1\alpha}.
\end{align*}

With these burn-in times defined, we may state the result which expresses the parameter estimation error 

\begin{restatable}{lemma}{deltamethod}
\label{lem:deltamethod}
     Let Assumption~\ref{asmp: smooth dynamics} hold and fix $\delta \in (0,\frac{1}{2}]$. Let $H$ be a positive definite matrix and $\beta$ a positive number. Fix some $\delta \in (0,\frac{1}{4}]$.  Consider setting $\hat \phi$ as the least squares solution from $K$ episodes of data collected with policy $\pi$ for 
    \begin{align*}
        K &\geq \max\curly{ \tau_{\ER}(\delta/2), \tau_{\mathsf{small\,error}}(\delta), \tau_{\mathsf{h.o.t}}(\delta, \beta)} 
    \end{align*}
    Suppose $\hat \Sigma_K^\pi \succeq \frac{1}{2}\lambda_{\min} (\Sigma^\pi) > 0$.  Then with probability at least $1-\delta$, 
    \begin{align*}
        \hat \phi - \phi^\star &= (I+G)(\hat \Sigma_K^\pi)^{-1} \frac{1}{K} \sum_{t,k=1}^{T,K} (D f_t^k)^\top W_t^k,
    \end{align*}
    where $G$ is a matrix satisfying $\norm{G}_{\op} \leq \beta$. 
\end{restatable}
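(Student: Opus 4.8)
The plan is to run the delta method on the stationarity condition of the nonlinear least squares problem \eqref{eq: least squares}. Writing $\Delta \triangleq \hat\phi - \phi^\star$, the first-order optimality condition at $\hat\phi$ reads $\sum_{t,k} D_\phi f(X_t^k, U_t^k; \hat\phi)^\top \paren{X_{t+1}^k - f(X_t^k, U_t^k; \hat\phi)} = 0$. Substituting the noise model $X_{t+1}^k = f(X_t^k, U_t^k; \phi^\star) + W_t^k$ and invoking Taylor's theorem (Assumption~\ref{asmp: smooth dynamics} supplies the differentiability and the uniform derivative bounds that control the remainders) to expand both $f(\cdot;\hat\phi) - f(\cdot;\phi^\star)$ and $D_\phi f(\cdot;\hat\phi) - Df_t^k$ to first order in $\Delta$, I would collect the leading term to obtain $K \hat\Sigma_K^\pi \Delta = \sum_{t,k} (Df_t^k)^\top W_t^k + K E$, where $E$ aggregates the Taylor remainders. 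Since the hypothesis $\hat\Sigma_K^\pi \succeq \tfrac12 \lambda_{\min}(\Sigma^\pi) > 0$ makes $\hat\Sigma_K^\pi$ invertible, dividing through yields $\Delta = v + (\hat\Sigma_K^\pi)^{-1} E$, where $v \triangleq (\hat\Sigma_K^\pi)^{-1}\tfrac1K\sum_{t,k}(Df_t^k)^\top W_t^k$ is precisely the self-normalized martingale term in the claim.

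The next step is to bound $E$. It splits into deterministic pieces — of the form $(Df_t^k)^\top \times [\text{second-order remainder of } f]$ and $[\text{first-order remainder of } Df]^\top \times Df_t^k\,\Delta$, each of size $O(L_f^2 T \norm{\Delta}^2)$ — together with a stochastic noise cross term $\tfrac1K \sum_{t,k} [\text{first-order remainder of } Df]^\top W_t^k$. I would control the deterministic pieces using the slow-rate consistency bound of \Cref{cor: slow rate} together with $\norm{\Delta}\le 1$ (guaranteed once $K \ge \tau_{\mathsf{small\,error}}(\delta)$, via \Cref{lem: er bound}), and control the noise cross term by Gaussian concentration at scale $L_f \norm{\Delta}\, \sigma_w \sqrt{\dx T / K}$.

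Finally, to express the additive correction $(\hat\Sigma_K^\pi)^{-1}E$ multiplicatively, I would define $G$ to act as $G v = (\hat\Sigma_K^\pi)^{-1}E$ (e.g.\ the rank-one map $G = (\hat\Sigma_K^\pi)^{-1}E\, v^\top / \norm{v}^2$), so that $\Delta = (I+G) v$ and $\norm{G}_{\op} = \norm{(\hat\Sigma_K^\pi)^{-1}E}/\norm{v}$. The numerator is at most $\tfrac{2}{\lambda_{\min}(\Sigma^\pi)}\norm{E}$; for the denominator I would use a high-probability lower bound on $\norm{v}$, exploiting that $\tfrac1K\sum_{t,k}(Df_t^k)^\top W_t^k$ is a martingale whose conditional covariance is $\tfrac{\sigma_w^2}{K}\hat\Sigma_K^\pi$, so that $\norm{v}$ anti-concentrates around $\sqrt{\tfrac{\sigma_w^2}{K}\trace((\Sigma^\pi)^{-1})}$. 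A union bound over the excess-risk event, the noise concentration, and this anti-concentration then delivers the claimed probability $1-\delta$.

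I expect the last step to be the main obstacle: converting the additive remainder into the stated operator-norm bound on $G$ is delicate because it requires \emph{simultaneously} a sharp upper bound on the Taylor remainder $E$ — which is exactly why the slow-rate consistency must already be in force, forcing $\norm{\Delta}$ to be small — and a matching high-probability lower bound on the leading self-normalized term $\norm{v}$. The interplay between these two quantities and the weighting matrix $H$ is what produces the trace ratio $\trace(H(\Sigma^\pi)^{-1})/\paren{\sqrt{\norm{H}_{\op}}\,\trace((\Sigma^\pi)^{-1})}$ and dictates the intricate form of the burn-in $\tau_{\mathsf{h.o.t}}(\delta, H)$; verifying that $K \ge \tau_{\mathsf{h.o.t}}(\delta, H)$ indeed forces the ratio below this threshold is the crux of the argument.
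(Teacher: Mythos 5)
Your proposal follows the paper's route up to and including the delta method on the least-squares stationarity condition and the control of the remainder terms (consistency via \Cref{cor: slow rate} to force $\norm{\hat\phi-\phi^\star}\leq 1$, plus martingale concentration for the noise-weighted second-derivative terms), but it goes wrong at exactly the step you flag as the crux. The key structural fact you discard is that the Taylor remainder has $\hat\phi-\phi^\star$ as a right factor: expanding $\psi(\phi) = \nabla_\phi\sum_{t,k}\norm{X_{t+1}^k - f(X_t^k,U_t^k;\phi)}^2$ to second order with the Lagrange form of the remainder and using $\psi(\hat\phi)=0$ gives $(K\hat\Sigma_K^\pi + E_{\mathrm{tot}})(\hat\phi-\phi^\star) = \sum_{t,k}(Df_t^k)^\top W_t^k$, where $E_{\mathrm{tot}}$ is a \emph{matrix} (the paper's $E_1+E_2+E_3+E_4$) whose operator norm is small relative to $K\lambda_{\min}(\hat\Sigma_K^\pi)$ once the burn-in conditions hold. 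The paper then applies its matrix-perturbation result (\Cref{lem: woodbury bound}) to conclude $(K\hat\Sigma_K^\pi + E_{\mathrm{tot}})^{-1}C = (I+G)(K\hat\Sigma_K^\pi)^{-1}C$ with the \emph{deterministic} bound $\norm{G}_{\op} \leq 2\norm{E_{\mathrm{tot}}}_{\op}/(K\lambda_{\min}(\hat\Sigma_K^\pi))$; the threshold built into $\tau_{\mathsf{h.o.t}}(\delta,H)$ makes this exactly the claimed trace ratio. No property of the martingale vector $v$ beyond its definition is ever needed.

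By instead ``dividing through'' to get $\Delta = v + (\hat\Sigma_K^\pi)^{-1}E$ with $E$ a vector, and then defining the rank-one map $G = (\hat\Sigma_K^\pi)^{-1}E\,v^\top/\norm{v}^2$, you are forced to lower bound $\norm{v}$ with probability $1-\delta$. This is where the argument breaks: $v$ behaves like a Gaussian vector with covariance $\tfrac{\sigma_w^2}{K}(\hat\Sigma_K^\pi)^{-1}$, and the lower tail of the norm of such a vector decays only polynomially --- in the worst case, where the trace is dominated by a single eigenvalue, $\P\brac{\norm{v} \leq \epsilon\,(\E\norm{v}^2)^{1/2}} \asymp \epsilon$ --- so an anti-concentration bound valid with probability $1-\delta$ forces the lower bound on $\norm{v}$ to shrink polynomially in $\delta$, inflating your bound on $\norm{G}_{\op}$ by $\mathsf{poly}(1/\delta)$. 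That is incompatible with the lemma as stated, whose burn-in $\tau_{\mathsf{h.o.t}}(\delta,H)$ depends on $\delta$ only logarithmically. The repair is not sharper anti-concentration but keeping the multiplicative structure you threw away: since your $E$ is really $-\tfrac1K E_{\mathrm{tot}}\Delta$, the identity reads $(I + \tfrac1K(\hat\Sigma_K^\pi)^{-1}E_{\mathrm{tot}})\Delta = v$, and a Neumann-series or Woodbury-type argument then yields $\Delta=(I+G)v$ with $\norm{G}_{\op}$ bounded deterministically by the matrix error ratio --- which is precisely the paper's proof.
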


As long as $\Sigma^\pi \succ0$ and $K$ is sufficiently large that $\Sigma_K^\pi$ concentrates to $\Sigma^\pi$,  we may use the above result to show that $\hat \phi_K - \phi^\star$ behaves like a self-normalized martingale. We may then  bound this self-normalized martingale, leading to \Cref{thm: identification error bound}. Before proving the above lemma, we introduce several preliminary results.
\begin{lemma}
    \label{lem: woodbury bound}
    Let $A$ be a symmetric, positive definite matrix. Let $B$ be such that $\norm{B} \leq s \lambda_{\min}(A)$ for $s \leq \frac{1}{2}$. Then 
   $
        (A+B)^{-1} C = (I + G) A^{-1} C$
    for some $G$ satisfying $\norm{G} \leq 2s.$
\end{lemma}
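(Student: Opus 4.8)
The plan is to reduce the statement to a standard Neumann-series perturbation argument. First I would factor out $A$, writing $A + B = A(I + A^{-1}B)$, so that, since $A$ is invertible, $(A+B)^{-1} = (I + A^{-1}B)^{-1} A^{-1}$ whenever $I + A^{-1}B$ is invertible. This immediately suggests the choice $I + G \triangleq (I + A^{-1}B)^{-1}$, i.e.\ $G = (I + A^{-1}B)^{-1} - I$, which is independent of $C$ and yields the claimed identity $(A+B)^{-1}C = (I+G)A^{-1}C$ directly. Note that $G$ defined this way is a genuine matrix not depending on $C$, as required.

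The key quantitative step is to control $\norm{A^{-1}B}$. Since $A$ is symmetric and positive definite, $\norm{A^{-1}} = 1/\lambda_{\min}(A)$, so submultiplicativity of the spectral norm gives $\norm{A^{-1}B} \leq \norm{A^{-1}}\norm{B} \leq s$. Because $s \leq \tfrac{1}{2} < 1$, the operator $I + A^{-1}B$ is invertible and its inverse admits the convergent Neumann expansion $(I + A^{-1}B)^{-1} = \sum_{k \geq 0} (-A^{-1}B)^k$, which is what justifies the factorization above a posteriori.

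It then remains to bound $G = \sum_{k \geq 1} (-A^{-1}B)^k$ in spectral norm. By the triangle inequality and submultiplicativity, $\norm{G} \leq \sum_{k \geq 1} \norm{A^{-1}B}^k = \frac{\norm{A^{-1}B}}{1 - \norm{A^{-1}B}} \leq \frac{s}{1-s}$, where I used that $t \mapsto t/(1-t)$ is increasing on $[0,1)$ together with $\norm{A^{-1}B} \leq s$. Finally, $s \leq \tfrac{1}{2}$ gives $1 - s \geq \tfrac{1}{2}$ and hence $\frac{s}{1-s} \leq 2s$, completing the argument.

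There is no genuine obstacle here: the result is a clean perturbation bound and every step is routine. The only points requiring minor care are (i) recording that for symmetric positive definite $A$ one has $\norm{A^{-1}} = 1/\lambda_{\min}(A)$ (rather than some weaker bound), which is what lets the hypothesis $\norm{B} \leq s\,\lambda_{\min}(A)$ collapse to $\norm{A^{-1}B}\leq s$, and (ii) tracking the constant so that the geometric series $s/(1-s)$ is bounded by $2s$, which is exactly where the assumption $s \leq \tfrac{1}{2}$ is used.
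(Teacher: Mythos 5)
Your proof is correct, but it takes a genuinely different route from the paper. The paper proves this lemma by writing a thin singular value decomposition $B = U\Sigma V^\top$ and invoking the Woodbury matrix identity, which yields $G = -A^{-1}U(\Sigma^{-1} + V^\top A^{-1}U)^{-1}V$; the norm bound then comes from submultiplicativity together with a lower bound on $\sigma_{\min}(\Sigma^{-1} + V^\top A^{-1}U)$, giving $\norm{G} \leq \paren{\lambda_{\min}(A)/\norm{B} - 1}^{-1} \leq s/(1-s) \leq 2s$. You instead factor $A + B = A(I + A^{-1}B)$ and expand $(I + A^{-1}B)^{-1}$ as a Neumann series, arriving at the same bound $s/(1-s) \leq 2s$. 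Your argument is more elementary and self-contained: it needs only $\norm{A^{-1}} = 1/\lambda_{\min}(A)$ and geometric summation, whereas the paper's route must implicitly justify invertibility of the inner matrix $\Sigma^{-1} + V^\top A^{-1}U$ (which ultimately rests on the same smallness condition), and the paper's write-up contains a small slip at exactly that point --- it asserts $\sigma_{\min}(\Sigma^{-1}) = \norm{B}$ when in fact $\sigma_{\min}(\Sigma^{-1}) = 1/\norm{B}$ (the final bound is unaffected once this is corrected). The Woodbury route would pay off if one wanted finer structural information about $G$ (e.g., its range and row space lying in those of $B$), but for the operator-norm bound actually used downstream, your Neumann-series argument delivers the identical conclusion with less machinery.
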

\begin{proof}
    Let $B = U\Sigma V^\top$ be the thin singular value decomposition of $B$. By the woodbury matrix idenity
    \begin{align*}
        (A+U\Sigma V^\top)^{-1} C = \paren{I - A^{-1} U ( \Sigma^{-1} + V^\top A^{-1} U)^{-1} V } A^{-1} C.
    \end{align*}
    Let $G = - A^{-1} U ( \Sigma^{-1} + V^\top A^{-1} U)^{-1} V$. By submult iplicativity, $
        \norm{G}_{\op} \leq \frac{1}{\lambda_{\min}(A)} \paren{\sigma_{\min}(\Sigma^{-1} + V^\top A^{-1} U)}^{-1}.$
    We have that
    \begin{align*}
        \sigma_{\min}(\Sigma^{-1} + V^\top A^{-1} U) &\geq \sigma_{\min}(\Sigma^{-1}) - \norm{V^\top A^{-1} U} = \sigma_{\min}(\Sigma^{-1}) - \frac{1}{\lambda_{\min}(A)}. 
    \end{align*}
    Then by noting that $\sigma_{\min}(\Sigma^{-1}) = \norm{B}$, we have 
    $\norm{G}_{\op} \leq  \paren{\frac{\lambda_{\min}(A)}{\norm{B}} -1}^{-1}$
    The result follows by substituting the bound on $\norm{B}$. 
\end{proof}

\begin{lemma}[Martingale concentration]
    \label{lem: martingale concentration}
    Suppose $S_1, \dots, S_T$ is a real-valued martingale process adapted to the filtration $\calF_1, \dots, \calF_T$ with conditionally $\sigma^2$-sub-Gaussian increments. Then 
    \begin{align*}
        \P\brac{S_T \geq \rho} \leq \exp\paren{-\frac{\rho^2}{2 T \sigma^2}}.
    \end{align*}
\end{lemma}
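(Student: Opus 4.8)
The plan is to prove the Azuma--Hoeffding-type tail bound for a martingale with conditionally sub-Gaussian increments by the standard exponential-moment (Chernoff) method. First I would write $S_T = \sum_{t=1}^T \Delta_t$ where $\Delta_t = S_t - S_{t-1}$ are the martingale differences (setting $S_0 = 0$), so that each $\Delta_t$ is $\calF_t$-measurable with $\E[\Delta_t \mid \calF_{t-1}] = 0$ and, by the conditional sub-Gaussianity hypothesis, satisfies the moment generating function bound $\E[\exp(\lambda \Delta_t) \mid \calF_{t-1}] \leq \exp(\lambda^2 \sigma^2 / 2)$ for all $\lambda \in \R$.

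The core computation is to control the exponential moment of the full sum by iterated conditioning, working from the outermost (latest) time index inward. For any $\lambda > 0$, I would write
\begin{align*}
    \E\brac{\exp(\lambda S_T)} &= \E\brac{\exp(\lambda S_{T-1}) \, \E\brac{\exp(\lambda \Delta_T) \mid \calF_{T-1}}} \leq \E\brac{\exp(\lambda S_{T-1})} \exp\paren{\frac{\lambda^2 \sigma^2}{2}},
\end{align*}
where the key point is that $\exp(\lambda S_{T-1})$ is $\calF_{T-1}$-measurable and thus pulls out of the inner conditional expectation, which is then bounded by the sub-Gaussian MGF estimate. Iterating this tower-property argument $T$ times yields $\E[\exp(\lambda S_T)] \leq \exp(T \lambda^2 \sigma^2 / 2)$.

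With the MGF bound in hand, the final step is a Chernoff/Markov argument: for any $\lambda > 0$,
\begin{align*}
    \P\brac{S_T \geq \rho} = \P\brac{\exp(\lambda S_T) \geq \exp(\lambda \rho)} \leq \exp(-\lambda \rho) \, \E\brac{\exp(\lambda S_T)} \leq \exp\paren{-\lambda \rho + \frac{T \lambda^2 \sigma^2}{2}}.
\end{align*}
Optimizing the right-hand side over $\lambda > 0$ gives the minimizer $\lambda = \rho / (T \sigma^2)$, which upon substitution produces the claimed bound $\exp(-\rho^2 / (2 T \sigma^2))$.

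This is an entirely standard result and I do not anticipate any genuine obstacle; the only point requiring mild care is the precise form of the conditional sub-Gaussianity assumption and ensuring the iterated conditioning is applied with the correct filtration so that each $\exp(\lambda S_{t-1})$ factor is measurable with respect to the $\sigma$-algebra over which we condition. If the paper's definition of ``conditionally $\sigma^2$-sub-Gaussian increments'' is stated directly as the MGF bound, the argument is immediate; if instead it is phrased in terms of tail behavior or Orlicz norms, a one-line lemma converting that to the MGF form would be inserted before the iteration step.
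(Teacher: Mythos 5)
Your proposal is correct and matches the paper's proof essentially verbatim: both apply a Chernoff bound to $\E[\exp(\lambda S_T)]$, control that exponential moment by iterated conditioning via the tower property and the conditional sub-Gaussian MGF bound $\E[\exp(\lambda \Delta_t) \mid \calF_{t-1}] \leq \exp(\lambda^2\sigma^2/2)$, and then optimize over $\lambda$ to obtain the stated tail. Your closing caveat about the form of the sub-Gaussianity assumption is moot here, since the paper uses the MGF formulation directly.
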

\begin{proof}
    By a Chernoff bound, 
    $
        \P \brac{S_T \geq \rho} \leq \inf_{\lambda \geq 0} \exp(-\rho \lambda) \E \brac{\exp\paren{\lambda S_T}}.$
    From the fact that the increments are sub-Gaussian, we have
    \begin{align*}
        \E \brac{\exp\paren{\lambda S_T}} &= \E \E  \brac{\exp\paren{\lambda S_T} \vert \calF_{T-1}}= \E\brac{\exp\paren{\lambda S_{T-1}} \E \brac{\exp\paren{\lambda (S_T - S_{T-1})} \vert \calF_{T-1} }} \\
        &\leq \E\brac{\exp\paren{\lambda S_{T-1}} }\exp\paren{\frac{\sigma^2 \lambda^2}{2}} \leq \exp\paren{T \frac{\sigma^2 \lambda^2}{2}}. \quad (\mbox{Increments are conditionally sub-Gaussian})
    \end{align*}
    Substituting into the Chernoff bound and optimizing over $\lambda$ provides the result. 
\end{proof}

We now proceed to prove \Cref{lem:deltamethod}
\begin{proof}
By the fact that $K\geq \tau_{\ER}(\delta/2)$, the conditions on \Cref{cor: slow rate} are satisfied such that with probability at least $1-\delta/2$, 
\begin{align}
    \label{eq: delta proof consistency}
    \norm{\hat \phi_K  -\phi^\star} \leq \paren{\frac{512 \sigma_w^2}{TK } \paren{\dx + d_{\phi}\log\paren{3L_f TK/\sigma_w} + \log\frac{2}{\delta}}}^{\alpha}.
\end{align}
Using the assumption $K \geq \tau_{\mathsf{small\,error}}(\delta)$,   
we have that this is at most $1$.
Under this event, and by the assumption that $\norm{\phi^\star} \leq B-1$, we have that $\hat \phi_K$ is is in the interior of $\curly{\phi: \norm{\phi}\leq B}$. Thus by the fact that $\hat\phi_K$ solves \eqref{eq: least squares}, it holds that   
\begin{align*}
    0 & =\paren{\nabla_{\phi} \sum_{k=1}^K \sum_{t=1}^T \norm{X_{t+1}^k - f(X_t^k, U_t^k; \phi)}^2} \vert_{\phi = \hat \phi_K} \triangleq \psi(\hat \phi_K). 
\end{align*}
A first order Taylor expansion of $\psi(\hat \phi_K)$ about $\phi^\star$ provides
\begin{align*}
    \psi(\hat \phi_K) &= \psi(\phi^\star) + D_\phi \psi(\phi)\vert_{\phi = \phi_\star} (\hat \phi_K - \phi^\star) + \mathsf{Remainder}_{\psi}(\hat \phi_K)[\cdot, \hat\phi_K - \phi_\star, \hat\phi_K- \phi_\star],
\end{align*} 
where $\mathsf{Remainder}_{\psi}(\hat\phi_K)$ is a trilinear operator such that the vector $\mathsf{Remainder}_{\psi}(\hat \phi_K)[\cdot, \hat\phi_K - \phi_\star, \hat\phi_K- \phi_\star]$ is defined by the product of the matrix $\mathsf{Remainder}_{\psi}(\hat \phi_K)[\cdot, \cdot \hat\phi_K- \phi_\star]$ with $\hat \phi_K - \phi_\star$: $\mathsf{Remainder}_{\psi}(\hat \phi_K)[\cdot, \hat\phi_K - \phi_\star, \hat\phi_K- \phi_\star] = \mathsf{Remainder}_{\psi}(\hat \phi_K)[\cdot, \cdot \hat\phi_K- \phi_\star] (\hat\phi_K - \phi_\star)$, and the satisfies the operator norm bound below:  
\begin{align*}
    \norm{\mathsf{Remainder}_{\psi}(\hat \phi_K) [\cdot, \cdot, \hat\phi_K - \phi_\star]}_{\mathsf{op}} &\leq \sup_{\ell \in [0,1]} \norm{D_{\phi}^2 \psi(\phi)\vert_{\phi =\ell \phi^\star + (1-\ell) \hat \phi_K}[\cdot, \cdot, \hat \phi_K - \phi^\star]}_{\mathsf{op}}.
\end{align*}

As long as $D_{\phi} \psi(\phi)\vert_{\phi = \phi^\star} +\mathsf{Remainder}_{\psi}(\hat \phi_K)[\cdot, \cdot, \hat\phi_K - \phi_\star]$ is non-singular, then the above two equations imply
\begin{align*}
    \hat \phi_K &- \phi^\star = \paren{D_{\phi} \psi(\phi)\vert_{\phi = \phi^\star} + \mathsf{Remainder}_{\psi}(\hat \phi_K)[\cdot, \cdot, \hat\phi_K - \phi_\star]}^{-1} \psi(\phi^\star).
\end{align*}
Evaluating the derivatives provides
\begin{align*}
    \psi(\phi^\star) &= \sum_{k=1}^K \sum_{t=1}^T (D f_t^k)^\top W_t^k \\
    D_{\phi} \psi(\phi)\vert_{\phi = \phi^\star} &= K \hat \Sigma_K^\pi + \underbrace{\sum_{k=1}^K \sum_{t=1}^T (D^2 f_t^k)[\cdot, \cdot, W_t^k]}_{E_1}, \\
    D_{\phi}^2 \psi(\phi)\vert_{\phi = \tilde \phi}[\cdot, \cdot, \hat \phi_K - \phi^\star] &= \underbrace{\sum_{k=1}^K \sum_{t=1}^T D^2 f(X_t^k, U_t^k; \tilde \phi)[\cdot, \cdot, \hat\phi_K - \phi_\star]^\top D f (X_t^k, U_t^k; \tilde \phi)}_{E_2 
    } \\
    &+ \underbrace{\sum_{k=1}^K \sum_{t=1}^T (D f(X_t^k, U_t^k; \tilde\phi )^\top D^2 f(X_t^k, U_t^k; \tilde \phi)[\cdot,\cdot, \hat \phi_K - \phi^\star]}_{E_3} \\
    & +  \underbrace{\sum_{k=1}^K \sum_{t=1}^T (D^3 f(X_t^k, U_t^k; \tilde \phi))[\cdot, \cdot, W_t^k, \hat \phi_K -\phi^\star]}_{E_4}
\end{align*}
We will bound the norms of $E_1, E_2, E_3$ and $E_4$ for $\tilde\phi$ an arbitrary convex combination of $\hat\phi_K$ and $\phi_\star$ in order to apply \Cref{lem: woodbury bound}. We may bound $E_2$ and $E_3$ directly using Assumption~\ref{asmp: smooth dynamics}:
\begin{align*}
    \norm{E_2}_{\op} &\leq TK L_f^2 \norm{\hat \phi_K -\phi_\star} \mbox{ and }  \norm{E_3}_{\op} \leq TK L_f^2 \norm{\hat \phi_K -\phi_\star}.
\end{align*}
For $E_1$ and $E_4$, we use a concentration argument. First note that by expressing the $i^{th}$ entry of $W_t^k$ as $W_t^k[i]$ and the $i^{th}$ entry of $f_t^k$ as $f_t^k[i]$, we have 
\begin{align*}
    E_1 = \sum_{i=1}^{\dx} \sum_{k=1}^K \sum_{t=1}^T W_t^k[i] D^2 (f_t^k[i])
\end{align*}
Appealing to a covering argument (cf. Lemma 2.4 of \cite{ziemann2023tutorial}), we may construct a $\frac{1}{2}$-net $\calN$ for the unit sphere in $\R^{\dx}$. Then we have $\P \brac{\norm{E_1}_{\op} \geq \rho} \leq 5^{\dx^2} \sup_{u,v \in \calN} \P \brac{u^\top E_1 v \geq \frac{1}{4} \rho}$. 
For any $u,v\in\R^{\dx}$,
\begin{align*}
    u^\top E_1 v =  \sum_{i=1}^{\dx} \sum_{k=1}^K \sum_{t=1}^T W_t^k[i] u^\top D^2 (f_t^k[i]) v.
\end{align*}
The using boundedness, Assumption~\ref{asmp: smooth dynamics}, the quantity $\sum_{t=1}^T W_t^k[i] u^\top D^2 (f_t^k[i]) v$ is a martingale with conditionally $\sigma_w^2 L_f^2$-sub-Gaussian increments. Then by \Cref{lem: martingale concentration}, 
\begin{align*}
    \P\brac{\sum_{t=1}^T W_t^k[i] u^\top D^2 (f_t^k[i]) v \geq \frac{\rho}{4}} \leq \exp\paren{-\frac{\rho^2}{32 TK \sigma_w^2 L_f^2}}. 
\end{align*}
Using this bound in the covering argument above, we find
\begin{align*}
    \P \brac{\norm{E_1}_{\op} \geq \rho} \leq 5^{\dx^2}\exp\paren{-\frac{\rho^2}{32 TK \sigma_w^2 L_f^2}}.
\end{align*}
As a result, with probability at least $1-\delta/4$, 
\begin{align*}
    \norm{E_1}_{\op} \leq \dx \sigma_w L_f \sqrt{ 32 TK \log\frac{4}{\delta}}.  
\end{align*}
Similarly, we may show that with probability at least $1-\delta/4$
\begin{align*}
    \norm{E_4}_{\op} \leq \dx \sigma_w L_f \norm{\hat \phi_K - \phi^\star} \sqrt{32 TK \log\frac{4}{\delta}}.  
\end{align*}
Union bounding, we find that with probability at least $1-\delta/2$, 
\begin{align*}
    \norm{E_1 + E_2 + E_3 + E_4}_{\op} \leq \dx \sigma_w L_f (1 + \norm{\hat \phi_K - \phi^\star}) \sqrt{32 TK \log\frac{4}{\delta}} + 2 TK L_f^2 \norm{\hat \phi_K -\phi_\star}. 
\end{align*}
Using the fact that $\norm{\hat \phi_K - \phi^\star} \leq 1$ along with the bound in \eqref{eq: delta proof consistency} it holds that with probability at least $1-\delta$, 
\begin{align*}
    \norm{E_1 + E_2 + E_3 + E_4}_{\op} \leq 2 \dx \sigma_w L_f \sqrt{32 TK \log\frac{4}{\delta}} + 2 (TK)^{1-\alpha} L_f^2 \paren{ 512 \sigma_w^2\paren{\dx + d_{\phi}\log\paren{4L_f TK/\sigma_w } + \log\frac{2}{\delta}}}^{\alpha}
\end{align*}
As $K \geq \tau_{\mathsf{h.o.t}}(\delta,  \beta)$ and $\lambda_{\min} (\hat \Sigma_K^\pi) \geq \frac{1}{2} \lambda_{\min}(\Sigma^\pi)$, it holds  that
\begin{align*}
    \norm{E_1 + E_2 + E_3 + E_4}_{\op} \leq \frac{\beta}{2} 
    K \lambda_{\min}\paren{\hat \Sigma_K^\pi}. 
\end{align*}
We may therefore invoke \Cref{lem: woodbury bound} with $s = \frac{\beta}{2}$ 
to show that 
\begin{align*}
    &\paren{D_{\phi} \psi(\phi)\vert_{\phi = \phi^\star} + \mathsf{Remainder}_{\psi}(\hat\phi_K)[\cdot, \cdot, \hat\phi_K - \phi_\star]}^{-1}  \psi(\phi^\star) = (I + G) (\hat \Sigma_K^\pi)^{-1} \frac{1}{K} \sum_{t,k=1}^{T,K} (D f_t^n)^\top W_t^k,
\end{align*}
for some $G$ satisfying $\norm{G}_{\op} \leq \beta$. 

\end{proof}

\subsection{Proof of \Cref{thm: identification error bound}}
\label{s: proof of identification error bound}

To proceed with our proof of the main result, we will leverage \Cref{lem:deltamethod}. In particular, we require a bound on the weighted norm of a self-normalized martingale. 

\begin{lemma}
\label{lem: sn martingale bnd}
    Suppose $\Sigma^\pi \succ 0$ and $H \succ 0$. Define the event $\calE \triangleq \curly{\norm{\hat \Sigma_K^\pi - \Sigma^\pi} \leq \beta}$ for some positive $\beta$ satisfying 
    \[
        \beta \leq \frac{\lambda_{\min}(\Sigma^\pi)}{2} 
    \]
    As long as  $\calE$ holds, the following holds with probability at least $1-\delta$, 
    \begin{align*}
        &\norm{\paren{K \hat\Sigma_K^\pi}^{-1}\sum_{t,k=1}^{T,K} (D f_t^n)^\top W_t^k}_H^2 \leq 2 \sigma_w^2 \paren{1+\frac{4\beta}{\lambda_{\min}(\Sigma^\pi)}}\trace( \paren{K \Sigma^\pi}^{-1} H) + 4 \sigma_w^2 \paren{1+\frac{4\beta}{\lambda_{\min}(\Sigma^\pi)}}\norm{(K \Sigma^\pi)^{-1} H} \log\frac{1}{\delta}.
    \end{align*}
\end{lemma}



    \begin{proof}
    Define 
    \[
        \boldsymbol Z = \bmat{D f_1^1 \\ \vdots \\D f_T^1 \\ \vdots \\D f_1^{K} \\ \vdots \\ D f_T^K}  \quad\mbox{ and }\quad \boldsymbol w = \bmat{W_1^1  \\ \vdots \\W_t^1 \\ \vdots \\ W_1^K \\ \vdots \\W_t^K}.
    \]
    such that 
    \begin{align*}
        \norm{\paren{K \hat\Sigma_K^\pi}^{-1}\sum_{t,k=1}^{T,K} (D f_t^n)^\top W_t^k}_H^2 = \norm{(\bZ \bZ^\top)^{-1} \bZ \bW }_H^2 \mbox{ and } \hat \Sigma_K^\pi =\frac{1}{K} \bZ \bZ^\top. 
    \end{align*}
    We may write 
    \begin{align*}
        \norm{\paren{\bZ\bZ^\top}^{-1} \bZ \bW }_H^2 &= \norm{H^{1/2}\paren{\bZ\bZ^\top}^{-1} \bZ \bW }^2 \\
        &= \norm{\paren{H^{-1/2}\bZ\bZ^\top H^{-1/2}}^{-1} H^{-1/2}\bZ \bW }^2 \\
        &= \norm{\paren{\paren{H^{-1/2}\bZ\bZ^\top H^{-1/2}}^{2}}^{-1/2} H^{-1/2}\bZ \bW }^2\\
        &= 2\norm{\paren{\paren{H^{-1/2}\bZ\bZ^\top H^{-1/2}}^2  + \paren{H^{-1/2}\bZ\bZ^\top H^{-1/2}}^{2}}^{-1/2} H^{-1/2}\bZ \bW }^2
    \end{align*}
    We have by assumption that $\norm{\frac{1}{K}\bZ \bZ^\top - \Sigma^\pi} \leq \beta$, and therefore that \[H^{-1/2} \bZ\bZ^\top H^{-1/2} \succeq \paren{1- \frac{\beta}{\lambda_{\min}(\Sigma^\pi)}} H^{-1/2} K \Sigma^{\pi} H^{-1/2}.\]
    It follows from the above inequality and the assumption that $\beta \leq \frac{\lambda_{\min}(\Sigma^\pi)}{2} $ that
    \begin{align*}
        &2\norm{\paren{\paren{H^{-1/2}\bZ\bZ^\top H^{-1/2}}^2  + \paren{H^{-1/2}\bZ\bZ^\top H^{-1/2}}^{2}}^{-1/2} H^{-1/2}\bZ \bW }^2 \\
        &\leq 2 \paren{1 + \beta \frac{2}{ \lambda_{\min}(\Sigma^\pi)}}\norm{\paren{\paren{H^{-1/2} K\Sigma^\pi H^{-1/2}}^2  + \lambda_{\min}\paren{H^{-1/2} K\Sigma^\pi H^{-1/2}} H^{-1/2}\bZ\bZ^\top H^{-1/2}}^{-1/2} H^{-1/2}\bZ \bW }^2 \\
        &= 2 \paren{1 + \beta \frac{2}{\lambda_{\min}(\Sigma^\pi)}} \lambda_{\min}\paren{H^{-1/2} K\Sigma^\pi H^{-1/2}}^{-1} \norm{\paren{\frac{\paren{H^{-1/2} K\Sigma^\pi H^{-1/2}}^2}{\lambda_{\min}\paren{H^{-1/2} K \Sigma^\pi H^{-1/2}}}  +  H^{-1/2}\bZ\bZ^\top H^{-1/2}}^{-1/2} H^{-1/2}\bZ \bW }^2.
    \end{align*}
    The above quantity is a self-normalized martingale. In particular, we can bound it by applying the bound in Theorem 4.1 of \citet{ziemann2023tutorial}. To do so, define $\tilde \Sigma \triangleq \frac{\paren{H^{-1/2} K\Sigma^\pi H^{-1/2}}^2}{\lambda_{\min}\paren{H^{-1/2} K \Sigma^\pi H^{-1/2}}}$. Then it follows that with probability at least $1-\delta$ that
    \begin{align*}
        &2 \paren{1 + \beta \frac{2}{ \lambda_{\min}(\Sigma^\pi)}} \lambda_{\min}\paren{H^{-1/2} K \Sigma^\pi H^{-1/2}}^{-1} \norm{\paren{\tilde \Sigma  +  H^{-1/2}\bZ\bZ^\top H^{-1/2}}^{-1/2} H^{-1/2}\bZ \bW }^2 \\
        &\leq 2\paren{1 + \beta \frac{2}{\lambda_{\min}(\Sigma^\pi)}} \lambda_{\min}\paren{H^{-1/2} K \Sigma^\pi H^{-1/2}}^{-1}\paren{\sigma_w^2 \log\det\paren{I + H^{-1/2} \bZ \bZ^\top H^{-1/2} \tilde \Sigma^{-1}} + 2 \sigma_w^2 \log\frac{1}{\delta}} \\
        &\leq 2\paren{1 + \beta \frac{2}{\lambda_{\min}(\Sigma^\pi)}} \lambda_{\min}\paren{H^{-1/2} K \Sigma^\pi H^{-1/2}}^{-1}\paren{ \sigma_w^2 \trace\paren{H^{-1/2} \bZ \bZ^\top H^{-1/2} \tilde \Sigma^{-1}} + 2 \sigma_w^2 \log\frac{1}{\delta}},
    \end{align*}
    where the second inequality follows by noting that for any positive definite matrix  matrix $M \in \R^{d\times d}$, $\log\det(I+M) = \sum_{i=1}^d \log (1 + \lambda_i(M)) \leq \sum_{i=1}^d \lambda_i(M)$.
    By the concentration event on $\bZ \bZ^\top$ and the definition of $\tilde \Sigma$, this quantity may be bounded as 
    \begin{align*}
        &2\paren{1 + \beta \frac{2}{\lambda_{\min}(\Sigma^\pi)}} \lambda_{\min}\paren{H^{-1/2} K \Sigma^\pi H^{-1/2}}^{-1}\paren{\sigma_w^2 \trace\paren{H^{-1/2} \bZ \bZ^\top H^{-1/2} \Sigma^{-1}} + 2 \sigma_w^2 \log\frac{1}{\delta}} \\
        &\leq 2\paren{1 + \beta \frac{2}{ \lambda_{\min}(\Sigma^\pi)}} \lambda_{\min}\paren{H^{-1/2} K \Sigma^\pi H^{-1/2}}^{-1} \paren{\sigma_w^2 \paren{1+\frac{\beta}{\lambda_{\min}(\Sigma^\pi)}}\trace\paren{H^{-1/2} K\Sigma^\pi H^{-1/2} \tilde \Sigma^{-1}} + 2 \sigma_w^2 \log\frac{1}{\delta}} \\
        &=  2\paren{1 + \beta \frac{2}{ \lambda_{\min}(\Sigma^\pi)}} \paren{\sigma_w^2 \paren{1+\frac{\beta}{ \lambda_{\min}(\Sigma^\pi)}}  \trace\paren{H \paren{K \Sigma^\pi}^{-1}} + 2 \sigma_w^2 \norm{H \paren{K\Sigma^\pi}^{-1}} \log\frac{1}{\delta}}\\
        &\leq 2 \paren{1 + \beta \frac{4}{\lambda_{\min}(\Sigma^\pi)}}\paren{\sigma_w^2  \trace\paren{H \paren{\Sigma^\pi}^{-1}} + 2 \sigma_w^2 \norm{H \paren{\Sigma^\pi}^{-1}} \log\frac{1}{\delta}},
    \end{align*}
    where the final inequality follows from the fact that $\beta \leq \frac{\lambda_{\min}(\Sigma^\pi)}{2}$. This concludes the proof.

\end{proof}

To guarantee concentration of the empirical covariance \eqref{eq: emp covariance def} to the true covariance, we state the following covariance concentration result. 
\begin{lemma}
    \label{lem: covariance concentration}
    Let $\beta \in (0,\frac{\lambda_{\min}(\Sigma^{\pi})}{4})$. Suppose 
    \begin{align*}
        K \geq \frac{2 T^2 L_f^4 (3 d_{\phi} + \log(1/\delta))}{\beta^2 }. 
    \end{align*} Then with probability at least $1-\delta$, 
    \begin{align*}
        \norm{\hat \Sigma_K^{\pi} - \Sigma^\pi} \leq \beta. 
    \end{align*}
\end{lemma}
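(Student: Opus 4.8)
The plan is to recognize $\hat\Sigma_K^\pi$ as an empirical average of i.i.d.\ bounded matrices and apply a net-plus-Hoeffding concentration argument. First I would introduce the per-episode matrix $M^k \triangleq \sum_{t=1}^T (Df_t^k)^\top Df_t^k$, so that $\hat\Sigma_K^\pi = \frac1K\sum_{k=1}^K M^k$ by \eqref{eq: emp covariance def} and $\Sigma^\pi = \E[M^k]$ by \eqref{eq: covariance def}, the $M^k$ being i.i.d.\ across episodes since each episode is an independent rollout of $\pi$. Each $M^k$ is positive semidefinite, and Assumption~\ref{asmp: smooth dynamics} (with $i=1$, $j=0$) gives $\norm{Df_t^k}\leq L_f$, whence $\norm{M^k}\leq TL_f^2$ and likewise $\norm{\Sigma^\pi}\leq TL_f^2$.

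Next I would reduce the operator norm to a scalar concentration statement via a net. Fix a $\tfrac14$-net $\mathcal{S}$ of the unit sphere in $\R^{d_\phi}$ with $|\mathcal{S}|\leq 9^{d_\phi}$; since $\hat\Sigma_K^\pi-\Sigma^\pi$ is symmetric, the standard net bound gives $\norm{\hat\Sigma_K^\pi-\Sigma^\pi}\leq 2\max_{v\in\mathcal{S}}\big|v^\top(\hat\Sigma_K^\pi-\Sigma^\pi)v\big|$. For a fixed unit $v$ the scalars $\xi_k \triangleq v^\top M^k v = \sum_{t=1}^T \norm{Df_t^k v}^2$ are i.i.d.\ and take values in $[0,TL_f^2]$, and $v^\top(\hat\Sigma_K^\pi-\Sigma^\pi)v=\frac1K\sum_k(\xi_k-\E\xi_k)$. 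Hoeffding's inequality for bounded i.i.d.\ variables then yields $\P[\,|\tfrac1K\sum_k(\xi_k-\E\xi_k)|\geq s\,]\leq 2\exp(-2Ks^2/(TL_f^2)^2)$, and a union bound over $\mathcal{S}$ controls the maximum.

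Finally I would set the parameters. Taking $s=\tfrac12\beta\lambda_{\min}(\Sigma^\pi)$ makes $2\max_{v\in\mathcal{S}}|v^\top(\hat\Sigma_K^\pi-\Sigma^\pi)v|\leq\beta\lambda_{\min}(\Sigma^\pi)$ on the good event, and requiring $2|\mathcal{S}|\exp(-2Ks^2/(TL_f^2)^2)\leq\delta$ rearranges to $K\geq \frac{2T^2L_f^4(\log(2|\mathcal{S}|)+\log(1/\delta))}{\beta^2\lambda_{\min}(\Sigma^\pi)^2}$; bounding $\log(2|\mathcal{S}|)\leq\log(2\cdot 9^{d_\phi})\leq 3d_\phi$ (valid for $d_\phi\geq 1$) recovers exactly the stated burn-in. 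The only delicate point is the constant bookkeeping, namely the factor $2$ from the net reduction, the Hoeffding constant, and the inequality $\log(2\cdot 9^{d_\phi})\leq 3d_\phi$, together with the choice to route through a sphere net rather than matrix Bernstein: the former produces the linear-in-$d_\phi$ dependence appearing in the statement, whereas matrix Bernstein would give only a $\log d_\phi$ factor but with a less transparent variance constant.
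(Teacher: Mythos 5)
Your proposal is correct and follows essentially the same route as the paper's proof: a $\tfrac14$-net of the unit sphere of cardinality $9^{d_\phi}$ reducing the operator norm to scalar deviations, per-direction Hoeffding for the i.i.d.\ per-episode quantities $v^\top \hat\Sigma^{\pi,k} v \in [0, TL_f^2]$, a union bound, and the same inversion yielding the stated burn-in via $\log(2\cdot 9^{d_\phi}) \leq 3d_\phi$. Your constant bookkeeping (the factor $2$ from the net and the two-sided Hoeffding factor absorbed into $3d_\phi$) is if anything slightly more careful than the paper's, which cites a covering lemma for the probabilistic reduction directly.
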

\begin{proof}
    Let $\calN$ be an $\varepsilon$-net of $\curly{v \in \R^{d_\phi}: \norm{v}=1}$ with $\varepsilon=\frac{1}{4}$. By a covering argument (Lemma 2.5 of \cite{ziemann2023tutorial}), 
    \begin{align*}
        \P\brac{\norm{\hat \Sigma_K^\pi - \Sigma^\pi} \geq \rho} \leq 9^{d_{\phi}} \max_{v \in \calN} \P\brac{\abs{v^\top (\hat \Sigma_K^\pi - \Sigma^\pi) v \geq \frac{\rho}{2}}}.
    \end{align*}
     Let $\hat \Sigma^{\pi,k} = \sum_{t=1}^T (D f_t^k)^\top D f_t^k$ and observe that $\hat \Sigma_K^\pi = \sum_{k=1}^K \hat \Sigma^{\pi,k}$. 
    Observe that for any unit vector $v$, $\E\brac{v^\top (\hat \Sigma_K^{\pi,k} - \Sigma^\pi) v} = 0$. Furthermore, using the bound on $\norm{D f_t^n}_{\op}$ in Assumption~\ref{asmp: smooth dynamics}, we have $0 \leq v^\top \hat \Sigma^{\pi,k} v \leq T L_f^2$. Then by Hoeffding's inequality (Theorem 2.2.6 of \cite{vershynin2018high}), we have
    \begin{align*}
        \P\brac{\abs{v^\top (\hat \Sigma_K^\pi - \Sigma^\pi) v \geq \frac{\rho}{2}}} \leq \exp\paren{\frac{-\rho^2}{2 K T^2 L_f^4}}.
    \end{align*}
    Inverting the bound, we find that with probability at least $1-\delta$, 
    \begin{align*}
        \norm{\hat \Sigma_K^\pi - \Sigma^\pi} \leq \frac{\sqrt{2} T L_f^2 \sqrt{3 d_{\phi} + \log(1/\delta)}}{\sqrt{K}}. 
    \end{align*}
    The result follows from the lower bound on $K$. 
\end{proof}

Using the above concentration result and self-normalized martingale bound, we may prove our main bound on the system identification error, \Cref{thm: identification error bound}.
\begin{proof}
We will invoke the covariance concentration lemma above with the value of $\beta$ chosen according to \Cref{lem: sn martingale bnd}. In particular, we define 
\begin{align*}
    \tau_{\mathsf{cov\,conc}}(\delta) =\frac{2 T^2 L_f^4 (3 d_{\phi} + \log(1/\delta))}{\beta^2}. 
\end{align*}
If $K\geq\tau_{\mathsf{cov\,conc}}(\delta/4)$, we may invoke \Cref{lem: covariance concentration} to show that with probability at least $1-\delta/4$, 
\begin{align*}
    \norm{\hat \Sigma_K^\pi - \Sigma^\pi} \leq \beta.
\end{align*} By the fact that $\beta\leq\frac{\lambda_{\min}(\Sigma^\pi)}{4}$, we have that $\lambda_{\min}(\hat \Sigma_K^\pi) \geq \frac{1}{2} \lambda_{\min}(\Sigma^\pi)$. Given this condition and the burn-in conditions $K \geq \max\curly{\tau_{\ER}(\delta/8), \tau_{\mathsf{small\,error}}(\delta/4), \tau_{\mathsf{h.o.t.}}(\delta/4, \beta)}$, we satisfy the conditions of \Cref{lem:deltamethod}. Therefore, conditioned on the event of covariance concentration, we have that with probability at least $1-\delta/4$, 
\begin{align*}
        \hat \phi_K - \phi^\star &= (I+G)(\hat \Sigma_K^\pi)^{-1} \frac{1}{K} \sum_{t,k=1}^{T,K} (D f_t^k)^\top W_t^k,
    \end{align*}
    where $G$ is a matrix satisfying $\norm{G}_{\op} \leq \beta $.  Then by the triangle inequality, we have that under the success events of covariance concentration and the delta method, 
    \begin{align}
        \label{eq: param recovery by snm}
        \norm{\hat \phi_K - \phi^\star}_H^2 \leq \paren{\norm{(\hat \Sigma_K^\pi)^{-1} \frac{1}{K} \sum_{t,k=1}^{T,K} (D f_t^k)^\top W_t^k}_H + \beta \norm{H}^{1/2} \norm{(\hat \Sigma_K^\pi)^{-1} \frac{1}{K} \sum_{t,k=1}^{T,K} (D f_t^k)^\top W_t^k}}^2.
    \end{align}
    We have reduced the parameter recovery errort to the sum of weighted norms of a self-normalized martingale. Calling on self-normalized martingale machinary \cite{abbasi2011improved}. In particular, invoking \Cref{lem: sn martingale bnd} with failure probability $\delta/4$ and union bounding provides that with probability at least $1-\delta/2$, the following bounds hold: 
    \begin{align*}
        \norm{(\hat \Sigma_K^\pi)^{-1} \frac{1}{K} \sum_{t,k=1}^{T,K} (D f_t^k)^\top W_t^k}_H \leq \sqrt{2 \sigma_w^2 (1+\xi) \trace( \paren{K \Sigma^\pi}^{-1} H) + 4 \sigma_w^2 (1+\xi) \norm{(K \Sigma^\pi)^{-1} H} \log\frac{4}{\delta}},\\
        \norm{(\hat \Sigma_K^\pi)^{-1} \frac{1}{K} \sum_{t,k=1}^{T,K} (D f_t^k)^\top W_t^k} \leq \sqrt{2 \sigma_w^2 (1+\xi) \trace( \paren{K \Sigma^\pi}^{-1}) + 4 \sigma_w^2 (1+\xi) \norm{(K \Sigma^\pi)^{-1}} \log\frac{4}{\delta}},
    \end{align*}
    where $\xi  = \frac{4 \beta}{\lambda_{\min}(\Sigma^\pi)}$ 
    Substituting these bounds into \eqref{eq: param recovery by snm}, we find that by union bounding over the success events, we have that with probability at least $1-\delta$, 
    \begin{align*}
        \norm{\hat \phi_K - \phi^\star}_H^2 \leq 2 \sigma_w^2 (1+\xi + 4 \beta d_{\phi}) \trace( \paren{K \Sigma^\pi}^{-1} H) + 4 \sigma_w^2 (1+\xi + 4 \beta d_{\phi}) \norm{(K \Sigma^\pi)^{-1} H} \log\frac{4}{\delta}. 
    \end{align*}
\end{proof}

\section{Results for smooth nonlinear systems}
\label{s: smooth systems}

We begin by modifying the result of Lemma D.1 of  \cite{wagenmaker2023optimal} that bounds the derivatives of the cost with respect to both the controller and the dynamics parameters.

\begin{lemma}
    \label{lem: differentiable objective}
    Under Assumptions~\ref{asmp: smooth dynamics}, \ref{asmp: bounded costs}, and \ref{asmp: smooth policy class} we have that for any $\phi \in \calB(\phi^\star, r_{\cost}(\phi^\star))$, the controller cost $\calJ(\pi^{\theta}, \phi)$ is four times differentiable in $\theta$ and $\phi$, and the derivatives satisfy the bound
    \begin{align*}
        \norm{D_\phi^{(i)} D_{\theta}^{(j)} \calJ(\pi^\theta, \phi)}_{\op} \leq \mathsf{poly}(L_f, L_{\theta}, L_{\cost}, \sigma_w^{-1}, T, \dx). 
    \end{align*}
    for all $i,j \in \curly{0,1,2,3}$, $1 \leq i+j \leq 4$. 
\end{lemma}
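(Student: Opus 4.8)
The natural pathwise strategy---differentiating the trajectory $(X_t,U_t)$ with respect to $(\theta,\phi)$ and propagating sensitivities through the recursion $X_{t+1}=f(X_t,U_t;\phi)+W_t$---is unavailable here, since it requires the Jacobian $D_x f$, whereas Assumption~\ref{asmp: smooth dynamics} grants differentiability of $f$ only in $u$ and $\phi$. I would therefore use the \emph{score-function} (likelihood-ratio) representation of the derivatives, which places every derivative on the Gaussian transition density and leaves the cost $g \triangleq \sum_{t=1}^T c_t(X_t,U_t)+c_{T+1}(X_{T+1})$ untouched. This is also precisely what lets the argument succeed under the mere second-moment control of the cost in Assumption~\ref{asmp: bounded costs}, without ever differentiating the (possibly nonsmooth) stage costs.

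Concretely, along a rollout the trajectory $X_{1:T+1}$ (with $X_1=0$) has density $p(X_{1:T+1}\mid\theta,\phi)=\prod_{t=1}^T (2\pi\sigma_w^2)^{-\dx/2}\exp(-\tfrac{1}{2\sigma_w^2}\norm{r_t}^2)$, where $r_t \triangleq X_{t+1}-f(X_t,\pi_t^\theta(X_t);\phi)$ and $r_t=W_t$ under the sampling measure. Writing $\calJ(\pi^\theta,\phi)=\int g\, p$ and differentiating under the integral gives, for any multi-index of total order $k=i+j\le 4$,
\[
D_\phi^{(i)} D_\theta^{(j)}\calJ(\pi^\theta,\phi)=\E\!\left[g\cdot \frac{D_\phi^{(i)}D_\theta^{(j)} p}{p}\right].
\]
Since $p=\exp(\log p)$, the ratio $D^{(k)}p/p$ is a complete Bell polynomial (multivariate Fa\`a di Bruno) in the log-derivatives $D^{(1)}\log p,\dots,D^{(k)}\log p$. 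Each $D^{(m)}\log p=-\tfrac{1}{2\sigma_w^2}\sum_t D^{(m)}\norm{r_t}^2$, and Leibniz' rule expands $D^{(m)}\norm{r_t}^2$ into terms $(D^{(a)}r_t)^\top(D^{(m-a)}r_t)$. Because $X_{1:T+1}$ is the integration variable, $D^{(0)}r_t=W_t$, while for $a\ge 1$ one has $D^{(a)}r_t=-D^{(a)}[f(X_t,\pi_t^\theta(X_t);\phi)]$, whose chain-rule expansion is a product of derivatives $D_\phi^{(\cdot)}D_u^{(\cdot)}f$ of total order at most four and derivatives $D_\theta^{(\cdot)}\pi_t^\theta$. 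Crucially only $u$- and $\phi$-derivatives of $f$ appear, each bounded by $L_f$ (and $\norm{f}_\infty\le B$), while the policy derivatives are bounded by $L_\theta$ via Assumption~\ref{asmp: smooth policy class}; no $D_x f$ is ever needed.

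Collecting the expansion, $D^{(k)}p/p$ is a polynomial of degree at most $k\le 4$ in the i.i.d.\ increments $\{W_t\}_{t=1}^T$, whose coefficients are sums---of cardinality polynomial in $T$---of products of the bounded derivative factors above divided by powers of $\sigma_w^2$; hence the coefficients are $\mathsf{poly}(L_f,L_\theta,\sigma_w^{-1},T)$. I would then apply Cauchy--Schwarz,
\[
\norm{D_\phi^{(i)} D_\theta^{(j)}\calJ(\pi^\theta,\phi)}_{\op}\le \sqrt{\E[g^2]}\;\sqrt{\E\big[\norm{D^{(k)}p/p}^2\big]}.
\]
For $\phi\in\calB(\phi^\star,r_{\cost}(\phi^\star))$ and $\pi^\theta\in\Pi^\star$, Assumption~\ref{asmp: bounded costs} gives $\E[g^2]\le L_{\cost}$. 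Since the $W_t$ are i.i.d.\ $\calN(0,\sigma_w^2 I)$ under the rollout measure, the second factor is a Gaussian moment of a degree-$\le 4$ polynomial and is therefore $\mathsf{poly}(L_f,L_\theta,\sigma_w^{-1},T,\dx)$, with $\dx$ entering through the moments of $\norm{W_t}^2$. Combining the two factors yields the stated polynomial bound.

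\textbf{Main obstacle.} The conceptual crux is recognizing that pathwise differentiation is blocked by the absence of $D_x f$ and that the score-function representation circumvents this while requiring only $\E[g^2]<\infty$. The remaining effort is bookkeeping: confirming via the Bell-polynomial/Fa\`a di Bruno expansion that $D^{(k)}p/p$ has the claimed degree-$\le 4$ polynomial-in-$W$ structure with bounded coefficients, and justifying the interchange of differentiation and expectation---which follows from the $C^4$ smoothness of $f$ and $\pi^\theta$ together with the finiteness of the relevant Gaussian moments, furnishing an integrable dominating envelope up to order four.
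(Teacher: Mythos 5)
Your proposal is correct and follows essentially the same route as the paper's proof: the paper likewise writes $\calJ(\pi^\theta,\phi)=\int g\, p_{\theta,\phi}$, differentiates via the likelihood-ratio/score-function identity (its repeated product-rule expansion into products of higher-order log-density derivatives is exactly your Bell-polynomial structure), observes that only $D_\phi f$, $D_u f$, and $D_\theta \pi^\theta$ appear so Assumptions~\ref{asmp: smooth dynamics} and \ref{asmp: smooth policy class} suffice, and concludes by Cauchy--Schwarz, pairing $\E[g^2]\le L_{\cost}$ from Assumption~\ref{asmp: bounded costs} with a Gaussian-moment bound (cited from Lemma~A.1 of \citet{wagenmaker2023optimal}) on the residual polynomial. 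The only differences are presentational: the paper works with directional derivatives along unit perturbations rather than multi-indices, and it invokes the external moment lemma where you compute the Gaussian moments directly.
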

\begin{proof}
    The proof proceeds as in that of Theorem D.1 of \cite{wagenmaker2023optimal}. 

    In particular, we first define $p_w$ to denote the density of the noise, $W_t$. Let $p_{\phi, \theta}(\cdot)$ denote the density over the states $X_{2:T+1}$ induced by playing controller $\pi^\theta$ on the system \eqref{eq: dyn} with parameter $\phi$. Observe that 
    \begin{align}
        \label{eq: traj density}
        p_{\theta, \phi}(X_{2:T+1}) = \prod_{t=1}^T p_w(X_{t+1} - f(X_t, \pi^{\theta}_t(X_t); \phi^\star)).
    \end{align}
    We may express our average controller cost as 
    \begin{align*}
        J(\pi^{\theta}, \phi) = \int \paren{\sum_{t=1}^T c_t(X_t, \pi_{\theta}(X_t)) + c_{T+1}(X_{T+1})} p_{\theta, \phi}(X_{1:T+1}) d X_{2:T+1}, 
    \end{align*}
    with $X_1$ fixed as $0$. 

    Note that by our noise assumptions and the smoothness assumptions, we may interchange limits and integrals by appealing to the dominated convergence theorem.  

    Let $\phi_{\boldsymbol m} = \phi + m_1 \Delta_1^\phi + m_2 \Delta_2^\phi + m_3 \Delta_3^\phi$ and $\theta_{\boldsymbol{s}} = \theta + s_1 \Delta_1^\theta + s_2 \Delta_2^\theta  + s_3 \Delta_3^\theta$ for $\Delta_i^\theta$ and $\Delta_j^\phi$ satisfying $\norm{\Delta_i^\theta} = 1$ and $\norm{\Delta_j^\phi} =1$ for $i,j=1, 2, 3$. We will consider the directional derivatives of $ \calJ(\pi^{\theta}, \phi)$ by considering the derivatives of $\calJ(\pi^{\theta_{\boldsymbol s}}, \phi_{\boldsymbol{m}})$ with respect to $x_1, x_2, x_3, x_4 \in \curly{m_1, m_2, m_3, s_1, s_2, s_3}$.  By showing that $\calJ(\pi^{\theta_{\boldsymbol s}}, \phi_{\boldsymbol{m}})$ is differentiable with respect to these quantities for any unit vectors $\Delta_i^\theta$ and $\Delta_i^{\phi}$, then $\calJ(\pi^\theta, \phi)$ is four times differentiable with respect to $\theta$ and $\phi$. Additionally, we can bound the operator norm of the derivatives by bounding the directional derivatives for all unit vectors   $\Delta_i^\theta$ and $\Delta_i^{\phi}$.  

    To show differentiability, note that
    \begin{align*}
        &\frac{d}{dx_1} J(\pi^{\theta_{\bolds}}, \phi_{\boldm}) = \int \frac{d}{dx_1} \paren{\sum_{t=1}^T  c_t(X_t, \pi_{\theta_{\bs}}(X_t)) + c_{T+1}(X_{T+1})} p_{\theta_{\bolds}, \phi_{\boldm}}(X_{1:T+1}) dX_{2:T+1} .  \\
        & = \int \underbrace{\paren{\sum_{t=1}^T  \nabla_u c_t(X_t, \pi_{\theta_{\bs}}(X_t))^\top \frac{d}{d x_1} \pi_{\theta_{\bs}}(X_t)}}_{\mbox{Cost gradient}} p_{\theta_{\bolds}, \phi_{\boldm}}(X_{1:T+1}) d X_{2:T+1} \\
        &+\int \underbrace{\paren{\frac{d}{dx_1} \log\paren{p_{\theta_{\bolds}, \phi_{\boldm}} (X_{1:T+1} )}}}_{\mbox{Derivative of log density}} \paren{\sum_{t=1}^T  c_t(X_t, \pi_{\theta}(X_t)) + c_{T+1}(X_{T+1})} p_{\theta_{\bolds}, \phi_{\boldm}}(X_{1:T+1}) d X_{2:T+1}.
    \end{align*}
    Proceeding to differentiate the above quantity with respect to $x_2$, $x_3$, and $x_4$, provides from the product rule a sum of terms which are of the form above, but with the derivative of the log density and the cost gradient replaced by the product of higher order derivatives of the log density.  

    Recalling the form of the density in \eqref{eq: traj density}, we have that 
    \begin{align*}
        \log\paren{p_{\theta_{\bolds}, \phi_{\boldm}} (X_{2:T+1})} = \sum_{t=1}^T -\frac{1}{2\sigma_w^2} \norm{X_{t+1} - f(X_t, \pi^{\theta_{\bolds}}; \phi_{\boldm})}^2 + 
        T\log\frac{1}{\sqrt{2\pi \sigma_w^{2\dx}}}.  
    \end{align*}
    By the differentiability of $f$ with respect to $\phi$ and $u$ from Assumption~\ref{asmp: smooth dynamics} and the differentiability of $\pi^{\theta}$ with respect to $\theta$ from Assumption~\ref{asmp: smooth policy class}, we have that the log density above is four times differentiable in $x_1, \dots x_4$ for any values of $\Delta_i^\theta$ and $\Delta_j^{\phi}$. Similarly, by differentiability of the stage costs from Assumption~\ref{asmp: bounded costs} and the policy from Assumption~\ref{asmp: smooth policy class}, the cost gradient is three times differentiable in $x_1,\dots, x_4$. 

    To bound the norm of the gradients, observe that
    \begin{align*}
        \frac{d}{d m_i} \log\paren{p_{\theta_{\bolds}, \phi_{\boldm}} (X_{2:T+1})}\vert_{\bolds=\boldm=0} &= \sum_{t=1}^T \frac{1}{\sigma_w^2} \paren{X_{t+1} - f(X_t, \pi^{\theta}; \phi)}^\top D_{\phi} f(X_t, \pi^{\theta; \phi}) \Delta_i^{\phi} \\
        \frac{d}{d s_i} \log\paren{p_{\theta_{\bolds}, \phi_{\boldm}} (X_{2:T+1})}\vert_{\bolds=\boldm=0} &= \sum_{t=1}^T \frac{1}{\sigma_w^2} \paren{X_{t+1} - f(X_t, \pi^{\theta}; \phi)}^\top D_{u} f(X_t, \pi^{\theta}; \phi) D_{\theta} \pi^{\theta}  \Delta_i^{\theta}
    \end{align*}
    Differentiating further with respect to $m_i$ and $s_i$ results only in higher order derivatives. Using the operator norm bounds on the derivative of the policy and the dynamics, we may conclude that for $i,j \in \curly{0,1,2,3}$ such that $1\leq i+j\leq 4$,
    \begin{align*}
        \norm{D_{\phi}^{i} D_{\theta}^{j} \log\paren{p_{\theta, \phi} (X_{2:T+1})}}_{\op} \leq \mathsf{poly}\paren{L_f, L_{\theta}, \sigma_w^{-1}} \sum_{t=1}^T \paren{1 + \frac{\norm{X_{t+1} - f(X_t, \pi^{\theta}(X_t); \phi)}}{\sigma_w}}
    \end{align*}

    We may then in turn bound $\norm{D_\phi^{(i)} D_{\theta}^{(j)} \calJ(\pi^\theta, \phi)}_{\op}$ as 
    \begin{align*}
        &\norm{D_\phi^{(i)} D_{\theta}^{(j)} \calJ(\pi^\theta, \phi)}_{\op} \leq \mathsf{poly}\paren{L_f, L_{\theta}, \sigma_w^{-1}} \\
        &\times\int \paren{\sum_{t=1}^T \paren{1 + \frac{\norm{X_{t+1} - f(X_t, \pi^{\theta}(X_t); \phi)}}{\sigma_w}}}^4 \paren{\sum_{t=1}^T  c_t(X_t, \pi_{\theta}(X_t) + c_{T+1}(X_{T+1})} p_{\theta, \phi}(X_{1:T+1}) d X_{2:T+1} \\
        & \leq \mathsf{poly}\paren{L_f, L_{\theta}, \sigma_w^{-1}} \sqrt{\int \paren{\sum_{t=1}^T  c_t(X_t, \pi_{\theta}(X_t)) + c_{T+1}(X_{T+1})}^2 p_{\theta, \phi}(X_{1:T+1}) d X_{2:T+1}} \\
        &\times \sqrt{\int \paren{\sum_{t=1}^T \paren{1 + \frac{\norm{X_{t+1} - f(X_t, \pi^{\theta}(X_t); \phi)}}{\sigma_w}}}^8 p_{\theta, \phi}(X_{1:T+1}) d X_{2:T+1}},
    \end{align*}
    by Cauchy-Schwarz. We have from 
    Assumption~\ref{asmp: bounded costs} that \begin{align*}
        &\sqrt{\int \paren{\sum_{t=1}^T  c_t(X_t, \pi_{\theta}(X_t)) + c_{T+1}(X_{T+1})}^2 p_{\theta, \phi}(X_{1:T+1}) d X_{2:T+1}}=\sqrt{\E\brac{\paren{\sum_{t=1}^T c_t(X_t, \pi_{\theta}(X_t)) + C_{T+1}(X_T)}^2}} \leq L_{\cost}. 
    \end{align*}
    From Lemma $A.1$ of \citet{wagenmaker2023optimal} ,
    \begin{align*}
        \sqrt{\int \paren{\sum_{t=1}^T\paren{1 + \frac{\norm{X_{t+1} - f(X_t, \pi^{\theta}(X_t); \phi)}}{\sigma_w}}}^8 p_{\theta, \phi}(X_{1:T+1}) d X_{2:T+1}} \leq \mathsf{poly}(T, \dx).
    \end{align*}
    We may similarly bound the cost gradient by a polynomial in $L_{\cost}$ and $L_{\theta}$.
    Therefore our derivative is bounded as 
    \begin{align*}
       \norm{D_\phi^{(i)} D_{\theta}^{(j)} \calJ(\pi^\theta, \phi)}_{\op} \leq \mathsf{poly}\paren{T, L_f, L_{\theta}, \sigma_w^{-1}, L_{\cost}, \dx}.
    \end{align*}
\end{proof}

Given \Cref{lem: differentiable objective},  Assumption~\ref{asmp:  smooth CE} may be shown to hold if $\nabla_{\theta}^2 J(\pi^\theta, \phi^\star) \succ 0$ using an argument identical to that of Proposition 6 in \cite{wagenmaker2023optimal}, with \Cref{lem: differentiable objective} invoked in place of their Lemma D.1. We may also use \Cref{lem: differentiable objective} to show the following two results from \citet{wagenmaker2023optimal}.

\begin{lemma}[Lemma D.3 of \citet{wagenmaker2023optimal}]
    \label{lem: model task hessian error}
    Under Assumptions~\ref{asmp: smooth dynamics}, \ref{asmp: smooth policy class}, \ref{asmp: bounded costs} and \ref{asmp: smooth CE} we have that for any $\phi \in \calB(\phi^\star, \min\curly{r_{\cost}(\phi^\star), r_{\theta}(\phi^\star)})$,
    \begin{align*}
        \norm{\calH(\phi^\star) - \calH(\phi)}_{\op} \leq C_{\mathsf{Hpert}} \norm{\phi - \phi^\star}. 
    \end{align*}
    where
    \begin{align*}
        C_{\mathsf{Hpert}} = \mathsf{poly}\paren{L_{\pi^\star}, L_f, L_{\theta}, L_{\cost}, \sigma_w^{-1},  T, \dx}. 
    \end{align*}
\end{lemma}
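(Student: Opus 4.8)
The plan is to first derive a structured closed form for $\calH(\cdot)$ and then reduce the perturbation bound to Lipschitz continuity of a few uniformly bounded factors. Writing $\calJ_{\tilde\phi}(\phi) = \calJ(\pi^{\theta^\star(\phi)}, \tilde\phi)$ and differentiating twice in $\phi$ by the chain rule gives
\begin{align*}
\nabla^2_\phi \calJ_{\tilde\phi}(\phi) = M(\phi)^\top \nabla^2_\theta \calJ(\pi^\theta, \tilde\phi)\big|_{\theta=\theta^\star(\phi)} M(\phi) + \big(\nabla_\theta \calJ(\pi^\theta,\tilde\phi)\big|_{\theta=\theta^\star(\phi)}\big)\cdot D^2_\phi \theta^\star(\phi),
\end{align*}
where $M(\phi) \triangleq D_\phi\theta^\star(\phi)$ and the last product denotes contraction of the gradient against the second-derivative tensor of $\theta^\star$. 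Evaluating at $\phi=\tilde\phi$, the first-order stationarity condition supplied by \Cref{lem: smooth policy}, namely $\nabla_\theta \calJ(\pi^\theta,\tilde\phi)|_{\theta=\theta^\star(\tilde\phi)}=0$, annihilates the second term. This leaves the clean product form $\calH(\tilde\phi) = M(\tilde\phi)^\top G(\tilde\phi) M(\tilde\phi)$, valid for every $\tilde\phi$ in the ball $\calB(\phi^\star,\min\{r_{\cost}(\phi^\star),r_\theta(\phi^\star)\})$, where $G(\phi) \triangleq \nabla^2_\theta\calJ(\pi^\theta,\phi)|_{\theta=\theta^\star(\phi)}$.

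Given this product form, I would bound the perturbation by a standard telescoping over the three factors,
\begin{align*}
\calH(\phi^\star) - \calH(\phi) &= \big(M(\phi^\star)-M(\phi)\big)^\top G(\phi^\star) M(\phi^\star) + M(\phi)^\top \big(G(\phi^\star)-G(\phi)\big) M(\phi^\star) \\
&\quad + M(\phi)^\top G(\phi) \big(M(\phi^\star)-M(\phi)\big),
\end{align*}
so that by submultiplicativity of the operator norm it suffices to (i) uniformly bound $\norm{M(\phi)}_{\op}$ and $\norm{G(\phi)}_{\op}$ on the ball and (ii) control the Lipschitz constants of $\phi\mapsto M(\phi)$ and $\phi\mapsto G(\phi)$.

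The factor $M$ is handled directly by \Cref{lem: smooth policy}: it bounds $\norm{D^i_\phi\theta^\star(\phi)}_{\op}\le L_{\pi^\star}$ for $i\in\{1,2,3\}$, so $\norm{M(\phi)}_{\op}\le L_{\pi^\star}$ and, since $\norm{D^2_\phi\theta^\star}_{\op}\le L_{\pi^\star}$, the map $M$ is $L_{\pi^\star}$-Lipschitz by the mean value inequality. For $G$, I would differentiate once more in $\phi$ through the chain rule: $D_\phi G(\phi)$ is the sum of $D_\phi D^2_\theta\calJ$ (an $i=1,j=2$ derivative) and $D^3_\theta\calJ$ (an $i=0,j=3$ derivative) composed with $D_\phi\theta^\star(\phi)$; both third-order derivatives of $\calJ$ are bounded by \Cref{lem: differentiable objective}, and $\norm{D_\phi\theta^\star}_{\op}\le L_{\pi^\star}$, which yields a $\mathsf{poly}(L_{\pi^\star},L_f,L_\theta,L_{\cost},\sigma_w^{-1},T,\dx)$ bound on $\norm{D_\phi G(\phi)}_{\op}$ and hence the same Lipschitz constant for $G$; the uniform bound on $\norm{G(\phi)}_{\op}$ is the $i=0,j=2$ case of the same lemma. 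Substituting these into the telescoping identity yields the claim with $C_{\mathsf{Hpert}} = \mathsf{poly}(L_{\pi^\star},L_f,L_\theta,L_{\cost},\sigma_w^{-1},T,\dx)$.

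The main obstacle is not analytic difficulty but careful bookkeeping: one must track the tensor contractions in the second $\phi$-derivative of the composition $\phi\mapsto\calJ(\pi^{\theta^\star(\phi)},\tilde\phi)$ and in $D_\phi G$, and verify that the stationarity term genuinely vanishes—this is precisely what keeps the required derivative orders within the range $1\le i+j\le 4$ covered by \Cref{lem: differentiable objective}. One should also confirm that the arguments stay in the admissible region: restricting $\phi$ to $\calB(\phi^\star,\min\{r_{\cost}(\phi^\star),r_\theta(\phi^\star)\})$ guarantees via \Cref{lem: smooth policy} that $\theta^\star(\phi)$ is well-defined and thrice differentiable, and via \Cref{lem: differentiable objective} that the cost derivatives are bounded uniformly over $\theta$, so all invoked bounds apply throughout the telescoping.
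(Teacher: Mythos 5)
Your proof is correct and takes essentially the same route as the paper: the paper proves this lemma by deferring to the argument of Lemma D.3 in \citet{wagenmaker2023optimal}, with \Cref{lem: differentiable objective} replacing their Lemma D.1 and the conclusion of \Cref{lem: smooth policy} replacing their Assumption 13 --- which is precisely the combination you use, namely the stationarity-based factorization $\calH(\tilde\phi)=M(\tilde\phi)^\top G(\tilde\phi)M(\tilde\phi)$ followed by boundedness/Lipschitz control of each factor via the derivative bounds of those two lemmas. Your writeup is in effect a self-contained instantiation of the cited argument, with the telescoping over the three factors playing the role of the "bounded derivative implies Lipschitz" step.
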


\begin{lemma}[Lemma D.4 of \citet{wagenmaker2023optimal}]
    \label{lem: model task hessian bound}
    Under Assumptions~\ref{asmp: smooth dynamics}, \ref{asmp: smooth policy class},\ref{asmp: bounded costs} and \ref{asmp: smooth CE} we have that for any $\phi \in \calB(\phi^\star, \min\curly{r_{\cost}(\phi^\star), r_{\theta}(\phi^\star)})$,
    \begin{align*}
        \norm{\calH(\phi)}_{\op} \leq \mathsf{poly}\paren{L_{\pi^\star}, L_f, L_{\theta}, L_{\cost}, \sigma_w^{-1},  T, \dx}. 
    \end{align*}
\end{lemma}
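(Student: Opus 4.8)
The plan is to reduce the bound on $\norm{\calH(\phi)}_{\op}$ to a composition of the two smoothness estimates already established: the bound on the $\theta$-derivatives of the objective $\calJ(\pi^\theta,\phi)$ from \Cref{lem: differentiable objective}, and the bound on the $\phi$-derivatives of the certainty-equivalent controller map $\theta^\star(\phi)$ from \Cref{lem: smooth policy}. Recall that by definition $\calH(\phi)=\nabla_{\psi}^2\,\calJ(\pi^{\theta^\star(\psi)},\phi)\big\vert_{\psi=\phi}$, so that $\calH(\phi)$ is the Hessian in the design parameter $\psi$ of the composite map $\psi\mapsto \calJ(\pi^{\theta^\star(\psi)},\phi)$, evaluated where the design parameter coincides with the evaluation parameter. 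First I would differentiate this composition twice via the chain rule, obtaining
\[
\nabla^2_\psi \calJ(\pi^{\theta^\star(\psi)},\phi)=(D_\psi\theta^\star(\psi))^\top\big[\nabla^2_\theta \calJ(\pi^\theta,\phi)\big]\big\vert_{\theta=\theta^\star(\psi)}(D_\psi\theta^\star(\psi))+\big[\nabla_\theta \calJ(\pi^\theta,\phi)\big]\big\vert_{\theta=\theta^\star(\psi)}\cdot D_\psi^2\theta^\star(\psi),
\]
where the final term denotes the contraction of the gradient $\nabla_\theta \calJ$ against the tensor $D_\psi^2\theta^\star$.

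The key simplification is to evaluate at $\psi=\phi$ and invoke the stationarity condition of \Cref{lem: smooth policy}, namely $\nabla_\theta \calJ(\pi^\theta,\phi)\vert_{\theta=\theta^\star(\phi)}=0$, which is valid since $\phi\in\calB(\phi^\star,r_\theta(\phi^\star))$. This annihilates the second term, leaving the clean identity
\[
\calH(\phi)=(D_\phi\theta^\star(\phi))^\top\big[\nabla^2_\theta \calJ(\pi^\theta,\phi)\big]\big\vert_{\theta=\theta^\star(\phi)}(D_\phi\theta^\star(\phi)).
\]
From here the bound is immediate by submultiplicativity of the operator norm,
\[
\norm{\calH(\phi)}_{\op}\le\norm{D_\phi\theta^\star(\phi)}_{\op}^2\,\norm{\nabla^2_\theta \calJ(\pi^\theta,\phi)\vert_{\theta=\theta^\star(\phi)}}_{\op}.
\]

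The first factor is at most $L_{\pi^\star}^2$ by \Cref{lem: smooth policy}, and the second factor is $\mathsf{poly}(L_f,L_\theta,L_{\cost},\sigma_w^{-1},T,\dx)$ by \Cref{lem: differentiable objective} (taking $i=0$, $j=2$, so that $1\le i+j\le 4$). Multiplying the two yields the claimed polynomial bound. Both invoked lemmas hold precisely on $\calB(\phi^\star,\min\{r_{\cost}(\phi^\star),r_\theta(\phi^\star)\})$, which is exactly the stated domain, so no additional domain restriction is needed. The only real care required --- and the main (though minor) obstacle --- is the bookkeeping for the chain rule at the level of tensors: one must verify that the gradient/Hessian contractions are correctly indexed and that the interchanges of differentiation are licensed. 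Both follow from the four-times differentiability of $\calJ(\pi^\theta,\phi)$ granted by \Cref{lem: differentiable objective} together with the thrice differentiability of $\theta^\star$ from \Cref{lem: smooth policy}. I note that even without invoking stationarity the bound would still go through, since $\nabla_\theta \calJ$ and $D^2_\psi\theta^\star$ are each bounded by the same two lemmas; the stationarity step is used only to obtain the cleaner quadratic-form expression for $\calH(\phi)$.
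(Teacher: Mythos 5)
Your proposal is correct and is essentially the argument the paper intends: the paper proves this lemma by deferring to Lemma D.4 of \citet{wagenmaker2023optimal}, substituting \Cref{lem: differentiable objective} for their Lemma D.1 and the conclusion of \Cref{lem: smooth policy} (stationarity plus $\norm{D_\phi\theta^\star}_{\op}\leq L_{\pi^\star}$) for their Assumption 13, which is exactly the chain-rule decomposition of $\calH(\phi)$ into $(D_\phi\theta^\star)^\top\nabla^2_\theta\calJ\,(D_\phi\theta^\star)$ that you carry out. Your observation that stationarity is not even strictly needed (since $\nabla_\theta\calJ$ and $D^2_\phi\theta^\star$ are separately bounded) is a fine additional remark but does not change the approach.
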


The proofs of \Cref{lem: cost decomposition}, \Cref{lem: model task hessian error}, and \Cref{lem: model task hessian bound} from the same arguments as the proofs of Lemma D.2, Lemma D.3 and Lemma D.4 of \cite{wagenmaker2023optimal} with \Cref{lem: differentiable objective} taking place of their Lemma D.1. 

Finally, we present a result that demonstrates that under our smoothness assumptions, the Fisher information is differentiable with respect to the dynamics parameter, and that the 
derivative is bounded by a quantity which is polynomial in relevant system parameters and the length of the horizon.

\begin{lemma}
    \label{lem: Fisher derivative}
    For any exploration policy $\pi \in \Pi_{\exp}$, any $\phi \in \R^{d_{\phi}}$, and any unit vector $v\in \R^{d_{\phi}}$ we have that 
     $v^\top \FI^\pi(\phi) v$ is differentiable with respect to $\phi$, and that 
     \begin{align*}
         \norm{\nabla_{\phi} v^\top \FI^\pi(\phi) v} \leq 3\frac{T^2 \max\curly{L_f,1}^3 \sqrt{\dx}}{\min\curly{1,\sigma_w}}. 
     \end{align*}
\end{lemma}
\begin{proof}
    Express
    \begin{align*}
        \nabla_{\phi} v^\top \FI^\pi(\phi) v = \nabla_{\phi} \int \sum_{t=1}^T v^\top D f(X_t, \pi(X_t); \phi) D f(X_t, \pi(X_t); \phi)^\top v p_{\phi}(X_{1:T+1}) d_{X_{2:T+1}},
    \end{align*}
    where $X_1=0$ and $
        p_{\phi}(X_{1:T+1}) = \Pi_{t=1}^T p_w(X_{t+1} - f(X_t, \pi(X_{1:t}); \phi)).$

    The gradient is then given by 
    \begin{align*}
        \nabla_{\phi} v^\top \FI^\pi(\phi) v &=  \int \paren{\nabla_{\phi}\sum_{t=1}^T v^\top D f(X_t, \pi(X_{1:t}); \phi) D f(X_t, \pi(X_{1:t}); \phi)^\top v}  p_{\phi}(X_{1:T+1}) d_{X_{2:T+1}} \\
        & + \int \sum_{t=1}^T v^\top D f(X_t, \pi(X_{1:t}); \phi) D f(X_t, \pi(X_{1:t}); \phi)^\top v \paren{\nabla_{\phi} \log\paren{p_{\phi}(X_{1:T+1})}} p_{\phi}(X_{1:T+1}) d_{X_{2:T+1}}.
    \end{align*}
    By Assumption~\ref{asmp: smooth dynamics}, the first term is differentiable, and the norm is bounded by $2T L_f^2$. For the second term, observe that 
    \begin{align*}
        \nabla_{\phi} \log\paren{p_{\phi}(X_{1:T+1})} = \nabla_{\phi}\paren{-\frac{1}{2\sigma_w^2} \sum_{t=1}^T \norm{X_{t+1} - f(X_t, \pi(X_{1:T}); \phi)}^2} = \frac{1}{\sigma_w^2} \sum_{t=1}^T D_{\phi} f(X_t, \pi(X_{1:t}; \phi) W_t.
    \end{align*}
    Therefore, the norm of the second term may be bounded by $\frac{T^2 L_f^3 \sqrt{\dx}}{\sigma_w}$ by appealing to the triangle inequality, submultiplicativity, the bound $\mathbf{E}[\norm{g_t}] \leq \sqrt{n}$ if $g_t$ is a standard multi-variate random vector of dimension $n$. 
    
\end{proof}
\section{Proof of Main Excess Cost Bound, \Cref{thm: main}}

First we present a result stating that using the estimated parameter in place of the true parameter leads to an adequate approximation for the ideal exploration objective.

\begin{lemma}
    \label{lem: cov and H pert}
    Consider a policy $\pi \in \Pi_{\exp}$. Suppose $\norm{\hat \phi - \phi^\star} \leq \frac{\tilde \lambda \min\curly{\sigma_w,1}}{6 T^2 \max\curly{1,L_f}^3 \sqrt{\dx} }$ for some $\tilde \lambda > 0$. Assume that $\FI^{\pi}(\hat\phi) \succeq \tilde \lambda I$. 
    Then $\FI^\pi(\phi^\star) \succeq \frac{\tilde\lambda}{2} I$, and
    \begin{enumerate}
        \item $\trace\paren{\calH(\phi^\star) \FI^\pi(\phi^\star)^{-1}} \leq \trace\paren{\calH(\hat\phi) \FI^\pi(\hat\phi)^{-1}} + \mathsf{poly}\paren{L_{\pi^\star}, L_f, L_{\theta}, L_{\cost}, \sigma_w^{-1},   T, \dx, d_{\phi}, \norm{\calH(\phi^\star)}, \frac{1}{\tilde\lambda}} \norm{\hat \phi - \phi^\star}$
        \item $\trace\paren{\calH(\hat\phi) \FI^\pi(\hat\phi)^{-1}} \leq\trace\paren{\calH(\phi^\star) \FI^\pi(\phi^\star)^{-1}} + \mathsf{poly}\paren{L_{\pi^\star}, L_f, L_{\theta}, L_{\cost}, \sigma_w^{-1},  T, \dx, d_{\phi}, \norm{\calH(\phi^\star)}, \frac{1}{\tilde\lambda}} \norm{\hat \phi - \phi^\star}$
    \end{enumerate}
\end{lemma}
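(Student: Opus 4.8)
The plan is to first establish the positive-definiteness claim $\Sigma^\pi \succeq \tfrac{\tilde\lambda}{2} I$, and then to prove both trace inequalities at once by bounding the single quantity $\abs{\trace(\calH(\phi^\star)(\Sigma^\pi)^{-1}) - \trace(\calH(\hat\phi)(\Sigma^\pi_{\hat\phi})^{-1})}$; inequalities (1) and (2) are just the two directions of this symmetric bound. The crucial structural observation is that $\Sigma^\pi$ and $\Sigma^\pi_{\hat\phi}$ take expectations under the \emph{same} dynamics $\phi^\star$ and differ only in the parameter at which the Jacobian $Df$ is evaluated, so their difference is purely a smoothness effect rather than a distribution-shift effect.

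For the first step I would write $\Sigma^\pi - \Sigma^\pi_{\hat\phi} = \E_{\pi}^{\phi^\star}\brac{\sum_{t=1}^T \paren{Df(X_t,U_t;\phi^\star)^\top Df(X_t,U_t;\phi^\star) - Df(X_t,U_t;\hat\phi)^\top Df(X_t,U_t;\hat\phi)}}$ and apply a first-order Taylor expansion of the map $\phi \mapsto Df(x,u;\phi)^\top Df(x,u;\phi)$ about $\phi^\star$, exactly as in the proof of \Cref{lem: good policy for estimate}. Assumption~\ref{asmp: smooth dynamics} bounds the relevant derivatives by $L_f$, giving the pointwise estimate $2L_f^2\norm{\hat\phi-\phi^\star}$ per timestep, hence $\norm{\Sigma^\pi - \Sigma^\pi_{\hat\phi}}_{\op} \leq 2TL_f^2\norm{\hat\phi-\phi^\star} \leq \tfrac{\tilde\lambda}{2}$ by the hypothesis $\norm{\hat\phi-\phi^\star}\leq \tfrac{\tilde\lambda}{4TL_f^2}$. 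Combined with $\Sigma^\pi_{\hat\phi}\succeq \tilde\lambda I$ this yields $\Sigma^\pi \succeq \tfrac{\tilde\lambda}{2} I$, and in particular the bookkeeping bounds $\trace((\Sigma^\pi)^{-1}) \leq 2d_{\phi}/\tilde\lambda$, $\norm{(\Sigma^\pi)^{-1}}_{\op} \leq 2/\tilde\lambda$, and $\norm{(\Sigma^\pi_{\hat\phi})^{-1}}_{\op} \leq 1/\tilde\lambda$.

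For the trace difference I would decompose it as $\trace((\calH(\phi^\star) - \calH(\hat\phi))(\Sigma^\pi)^{-1}) + \trace(\calH(\hat\phi)((\Sigma^\pi)^{-1} - (\Sigma^\pi_{\hat\phi})^{-1}))$. The first term I bound via $\abs{\trace(AB)} \leq \norm{A}_{\op}\trace(B)$ for positive semidefinite $B$, together with the model-task Hessian perturbation bound $\norm{\calH(\phi^\star) - \calH(\hat\phi)}_{\op} \leq C_{\mathsf{Hpert}}\norm{\hat\phi-\phi^\star}$ of \Cref{lem: model task hessian error}; this contributes $C_{\mathsf{Hpert}}(2d_{\phi}/\tilde\lambda)\norm{\hat\phi-\phi^\star}$. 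For the second term I would use the resolvent identity $(\Sigma^\pi)^{-1} - (\Sigma^\pi_{\hat\phi})^{-1} = (\Sigma^\pi)^{-1}(\Sigma^\pi_{\hat\phi} - \Sigma^\pi)(\Sigma^\pi_{\hat\phi})^{-1}$ and the Schatten--H\"older inequality $\abs{\trace(MN)} \leq \norm{M}_{*}\norm{N}_{\op}$, where $\norm{\cdot}_{*}$ is the nuclear norm, with $N = \Sigma^\pi_{\hat\phi} - \Sigma^\pi$; bounding $\norm{\calH(\hat\phi)}_{\op}$ by \Cref{lem: model task hessian bound}, using $\norm{(\Sigma^\pi_{\hat\phi})^{-1}}_{\op} \leq 1/\tilde\lambda$ and $\trace((\Sigma^\pi)^{-1}) \leq 2d_{\phi}/\tilde\lambda$, this contributes a term of order $d_{\phi}TL_f^2\norm{\calH(\hat\phi)}_{\op}\tilde\lambda^{-2}\norm{\hat\phi - \phi^\star}$. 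Collecting both contributions, and absorbing $C_{\mathsf{Hpert}}$ and $\norm{\calH(\hat\phi)}_{\op}$ into the polynomials supplied by \Cref{lem: model task hessian error} and \Cref{lem: model task hessian bound}, produces a single $\mathsf{poly}(L_{\pi^\star}, L_f, L_{\theta}, L_{\cost}, \sigma_w^{-1}, T, \dx, d_{\phi}, \norm{\calH(\phi^\star)}, \tilde\lambda^{-1})$ factor multiplying $\norm{\hat\phi-\phi^\star}$, which is exactly the claimed bound.

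The main obstacle I anticipate is the careful bookkeeping of the inverse-covariance perturbation (the second term): the resolvent identity yields a product of four matrices, so one must commit to the correct Schatten-norm pairing to surface the $\tilde\lambda^{-2}$ and $d_{\phi}$ dependence cleanly without spurious factors. A secondary point to verify is that the invocations of \Cref{lem: model task hessian error} and \Cref{lem: model task hessian bound} are legitimate, i.e.\ that $\hat\phi \in \calB(\phi^\star, \min\curly{r_{\cost}(\phi^\star), r_{\theta}(\phi^\star)})$; this holds once $\norm{\hat\phi-\phi^\star}$ is sufficiently small, which is guaranteed by the coarse-identification burn-in established upstream in \Cref{cor: slow rate}.
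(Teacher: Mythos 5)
Your proposal is correct and follows essentially the same route as the paper's proof: a first-order Taylor expansion plus Weyl's inequality to get $\Sigma^\pi \succeq \tfrac{\tilde\lambda}{2} I$, then a triangle-inequality decomposition of the trace difference into a Hessian-perturbation piece (handled by \Cref{lem: model task hessian error}) and an inverse-covariance-perturbation piece (handled by the resolvent bound $\norm{(\Sigma^\pi)^{-1}-(\Sigma^\pi_{\hat\phi})^{-1}} \leq \lambda_{\min}(\Sigma^\pi)^{-1}\lambda_{\min}(\Sigma^\pi_{\hat\phi})^{-1}\norm{\Sigma^\pi_{\hat\phi}-\Sigma^\pi}$). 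The only differences are cosmetic — you insert the intermediate term $\trace(\calH(\hat\phi)(\Sigma^\pi)^{-1})$ where the paper uses $\trace(\calH(\phi^\star)(\Sigma^\pi_{\hat\phi})^{-1})$, and you invoke \Cref{lem: model task hessian bound} to control $\norm{\calH(\hat\phi)}_{\op}$ where the paper keeps $\calH(\phi^\star)$ in that slot — and your closing remark about needing $\hat\phi$ inside the ball where the Hessian perturbation lemma applies is, if anything, more careful than the paper itself.
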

\begin{proof}\,\\
Begin by bounding $\norm{\FI^\pi(\hat\phi) - \FI^\pi(\phi^\star)} = \sup_{v\in\R^{\dtheta}, \norm{v}=1} v^\top \FI^\pi(\hat\phi) v - v^\top \FI^\pi(\phi^\star) v$. By a first order Taylor expansion, it holds that for any unit vector $v\in\R^{\dtheta}$, 
\begin{align*}
   v^\top \FI^\pi(\hat\phi) v - v^\top \FI^\pi(\phi^\star) v \leq \norm{\nabla_{\phi} v^\top \FI^\pi(\tilde \phi) v} \norm{\hat \phi - \phi^\star}.
\end{align*}
for some $\tilde\phi = \alpha \hat \phi + (1-\alpha) \phi^\star$, $\alpha \in [0,1]$.
By \Cref{lem: Fisher derivative}, it holds that 
\[
    \norm{\nabla_{\phi} v^\top \FI^\pi(\tilde \phi) v} \leq 3\frac{T^2 \max\curly{L_f,1}^3 \sqrt{\dx}}{\min\curly{1,\sigma_w}}. 
\]
Then by the assumed bound on $\norm{\hat \phi-\phi^\star}$, we find that $\norm{\FI^\pi(\hat\phi) - \FI^\pi(\phi^\star)} \leq \frac{\tilde\lambda}{2}$. Therefore, $\FI^\pi(\phi_\star) \succeq \frac{\tilde \lambda}{2}$. 

We also have that 
\begin{align*}\norm{\FI^\pi(\hat\phi)^{-1} - \FI^\pi(\phi^\star)^{-1}} \leq \frac{1}{\lambda_{\min}\paren{\FI^\pi(\hat\phi)} \lambda_{\min}\paren{\FI^\pi(\phi^\star)}} \norm{\FI^\pi(\hat\phi) - \FI^\pi(\phi^\star)} \leq \frac{6 T^2 \max\curly{L_f,1}^3 \sqrt{\dx}}{\tilde\lambda^2 \min\curly{1,\sigma_w}} \norm{\hat\phi-\phi^\star}
\end{align*}
Furthermore, by \Cref{lem: model task hessian error},  $\norm{\calH(\hat\phi) - \calH(\phi^\star)}\leq \mathsf{poly}\paren{L_{\pi^\star}, L_f, L_{\theta}, L_{\cost}, \sigma_w^{-1}, T, \dx} \norm{\phi - \phi^\star}. $ 

To conclude, we bound 
\begin{align*}
    \abs{\trace\paren{\calH(\phi^\star) \FI^\pi(\phi^\star)^{-1}} - \trace\paren{\calH(\phi^\star) \FI^\pi(\hat\phi)^{-1}}} &\leq \norm{\calH(\phi^\star)} d_{\phi} \norm{\FI^\pi(\phi^\star)^{-1} - \FI^\pi(\hat\phi)^{-1}} \\&\leq \norm{\calH(\phi^\star)} d_{\phi}\frac{6 T^2 \max\curly{L_f,1}^3 \sqrt{\dx}}{\tilde\lambda^2 \min\curly{1,\sigma_w}} \norm{\hat\phi-\phi^\star}
\end{align*}
and
\begin{align*}
    \abs{\trace\paren{\calH(\phi^\star) \FI^\pi(\hat\phi)^{-1}}  - \trace\paren{\calH(\phi^\star) \FI^\pi(\hat\phi)^{-1}}}  &\leq \frac{d_{\phi}}{\tilde\lambda} \norm{\calH(\hat\phi)-\calH(\phi^\star)} \\&\leq \frac{d_{\phi}}{\tilde\lambda}  \mathsf{poly}\paren{L_{\pi^\star}, L_f, L_{\theta}, L_{\cost}, \sigma_w^{-1}, T, \dx} \norm{\phi - \phi^\star}.
\end{align*}
Combining these inequalities with the triangle inequality provides the desired result. 
\end{proof}

Armed with the above results, we are ready to prove \Cref{thm: main}.

\mainresult*
\begin{proof} 

    We have that $\gamma N\geq \tau_{\ER}(\delta/2)$. Then by Assumption~\ref{asmp: loja}, we may invoke \Cref{cor: slow rate} to show that with probability at least $1-\delta/4$, $\norm{\hat\phi^--\phi^\star}\leq \frac{\paren{\frac{512 \sigma_w^2}{T } \paren{\dx +d_{\phi}\log\paren{L_f T N} + \log\frac{4}{\delta}}}^{\alpha}}{(\gamma N) ^\alpha} \triangleq \frac{R}{(\gamma N)^{\alpha}}$. Denote the event that this inequality holds as $\calE_{\coarse}$. Condition on the event that this event holds for the remainder of the proof. 
    
    Let $C_{\mathsf{H pert}}$ be the system theoretic constant defined in \Cref{lem: model task hessian error}. As long as $
        \gamma N \geq \frac{ 2 R C_{\mathsf{H pert}}}{\lambda_{\min}(\calH(\phi^\star))
        }^{\frac{1}{\alpha}}$, it follows that
    \begin{align}
        \label{eq: coarse bound after burn-in}
        \norm{\hat\phi^- - \phi^\star} \leq \frac{\lambda_{\min}(\calH(\phi^\star))}{2 C_{\mathsf{H pert}}}. 
    \end{align}
    Then by \Cref{lem: model task hessian error}, we have $
        \lambda_{\min}(\calH(\hat\phi^-)) \geq \frac{1}{2} \lambda_{\min}(\calH(\phi^\star))$ and $\norm{\calH(\hat\phi^-)} \leq 2\norm{\calH(\phi^\star)}$. Additionally, suppose $\gamma N \geq \paren{\frac{6 R  T^2 \max\curly{L_f, 1}^3 \sqrt{\dx}}{\min\curly{\sigma_w, 1} \mu }}^{1/\alpha}$. Then by a first order Taylor expansion combined with \Cref{lem: Fisher derivative}, we have that for any policy $\pi\in\Pi_{\exp}$ $\norm{\FI^\pi(\hat\phi^-) -\FI^\pi(\phi^\star)} \leq \frac{\mu}{2}$. As a result, by Assumption~\ref{asmp: good policy}, there exists a policy $\pi$ such that $\lambda_{\min}(\FI^\pi(\hat\phi^-)) \geq \frac{\mu}{2}$. Therefore, if $\pi_{\exp}$ satisfies $\trace((\calH(\hat\phi^-) + \nu I) \FI^\pi(\hat\phi^-)^{-1}) \leq (1+\varepsilon) \trace((\calH(\hat\phi^-) + \nu I) \FI^{\tilde \pi}(\hat\phi^-)^{-1}) $ for all $\tilde \pi \in \Pi_{\exp}$, it holds that
        \begin{align*}
            \frac{\lambda_{\min}(\calH(\hat\phi^\star)) + \nu}{2 \lambda_{\min}(\FI^{\pi_{\exp}}(\hat\phi^-))} \leq \trace\paren{\paren{\calH(\hat\phi^-) + \nu} \FI^{\pi_{\exp}}(\hat\phi^-)^{-1}} \leq \frac{(1+\varepsilon) 4 d_{\phi} \paren{\norm{\calH(\phi^\star)} + \nu}} {\mu},
        \end{align*}
        and therefore that 
        \begin{align*}
            \lambda_{\min}(\FI^{\pi_{\exp}}(\hat\phi^-)) \geq \frac{\mu \paren{\lambda_{\min}(\calH(\phi^\star)) + \nu}}{8 (1+\varepsilon)d_{\phi} \paren{\norm{\calH(\phi^\star)} + \nu}}.
        \end{align*}

    Suppose
    \[\gamma N \geq \paren{\frac{(1+\varepsilon)48 R  T^2 \max\curly{L_f, 1}^3 \sqrt{\dx} d_{\phi} \paren{\norm{\calH(\phi^\star)} +\nu}}{\min\curly{\sigma_w, 1} \paren{\lambda_{\min}(\calH(\phi^\star)) + \nu}\mu }}^{1/\alpha}. \]

    Then by \Cref{lem: cov and H pert}, it follows that $\FI^{\pi_{\mix}}(\phi^\star) \succeq \frac{1}{2} \FI^{\pi_{\exp}}(\phi^\star) \succeq  \frac{\mu \paren{\lambda_{\min}(\calH(\phi^\star)) + \nu}}{32 (1+\varepsilon) d_{\phi} \paren{\norm{\calH(\phi^\star)} + \nu}} I$. 
     Additionally, the policy $\pi_\mix$ is $( \frac{1}{\gamma} C_{\loja},\alpha)$-Lojasiewicz. This can be seen by noting that by the law of total expectation. In particular, for all $\phi \in \R^{d_{\phi}}$,
     \begin{align*}
         \ER_{\pi_\mix, \phi^\star}(\phi) \geq \frac{1}{\gamma} \ER_{\pi^0, \phi^\star}(\phi).
     \end{align*}

    Then the slow rate from \Cref{cor: slow rate} ensures that playing the policy $\pi^{\mix}$ for $(1-\gamma) N$ episodes results in a refined estimate that satisfies $\norm{\hat \phi^+ - \phi^\star} \leq \frac{R}{((1-\gamma) N)^\alpha}$. Additionally, under the  event $\calE_{\coarse}$ in conjunction with the burn-in requirement
    \begin{align*}
        (1-\gamma) N &\geq \mathsf{poly}_\alpha\bigg(T,L_f,  d_{\phi}, \dx,  \sigma_w, \log N,  \log\frac{1}{\delta}, \log B, C_{\loja}, \frac{1}{\beta} \bigg),
    \end{align*}
    the conditions are set to apply \Cref{thm: identification error bound} on the data collected by running $\pi^{\mix}$ for $(1-\gamma) N$ episodes. To do so, we note that for any $\beta \in \paren{0, \frac{\mu \paren{\norm{\calH(\phi^\star)} + \nu}}{512 d_{\phi} (\lambda_{\min}(\calH(\phi^\star)) + \nu)}}$,
    the condition $\beta \leq \frac{\sigma_w^2 \lambda_{\min}(\FI^{\pi_{\mix}}(\phi^\star)}{4}$
    is met by the lower bound on $\FI^{\pi_{\mix}}(\phi^\star)$. 
    Therefore, we may invoke \Cref{thm: identification error bound} to find that with probability at least $1-\delta/2$, 
    \begin{equation}
    \begin{aligned}
        \label{eq: fine identification bound}
         \norm{\hat \phi^+ - \phi^\star}_{\calH(\phi^\star) + \nu I}^2 &\leq   2 \frac{(1+\xi)}{1-\gamma} \paren{ \frac{\trace\paren{ \paren{\calH(\phi^\star) + \nu I} \FI^\pi(\phi^\star)^{-1}} }{N} + 2 \frac{\norm{\paren{\calH(\phi^\star) + \nu I}  \FI^\pi(\phi^\star)^{-1}} }{N} \log\frac{8}{\delta}}  \\
         &\leq \frac{(1+\xi)}{1-\gamma} \paren{2 + 4 \log\frac{8}{\delta}} \frac{\trace\paren{ \paren{\calH(\phi^\star) + \nu I} \FI^\pi(\phi^\star)^{-1}}}{N},
    \end{aligned}
    \end{equation}
    where $\xi = 4\beta \paren{\frac{128 d_{\phi} \paren{\norm{\calH(\phi^\star)} +\nu}}{\mu \paren{\lambda_{\min}(\calH(\phi^\star)) + \nu}} + d_{\phi}}$.
    Denote this event as $\calE_{\mathsf{ID}}$. It follows from \Cref{lem: cov and H pert} that
    \begin{equation}
    \label{eq: opt obj under approx under mix}
    \begin{aligned}
        \trace(&\paren{\calH(\phi^\star) + \nu I }\FI^{\pi_\mix}(\phi^\star)^{-1}) \leq \trace\paren{\paren{\calH(\hat\phi^-) + \nu I } \FI^{\pi_{\mix}}(\hat\phi^-)^{-1}} \\
        &+ \mathsf{poly}\paren{L_{\pi^\star}, L_f, L_{\theta}, L_{\cost}, \sigma_w^{-1}, T, \dx, d_{\phi}, \norm{\calH(\phi^\star)}, \nu, \frac{1}{\mu}, \frac{1}{\lambda_{\min}(\calH(\phi^\star)) + \nu}} \norm{\hat\phi^--\phi^\star}. 
    \end{aligned}
    \end{equation}
    By the fact that $\FI^{\pi_{\mix}}(\hat\phi^-) \succeq (1-\gamma) \FI^{\pi_{\exp}}(\hat\phi^-)$, and the fact that $\pi_{\exp}$ optimizes the approximate exploration objective, it holds that $\trace\paren{\paren{\calH(\hat\phi^-) + \nu I } \FI^{\pi_{\mix}}(\hat\phi^-)^{-1}} \leq \frac{1 +\varepsilon}{1-\gamma} \inf_{\pi\in\Pi_{\exp}} \trace\paren{\paren{\calH(\hat\phi^-) + \nu I }\FI^\pi(\hat\phi^-)^{-1}}$. Then by again appealing to \Cref{lem: cov and H pert}, it holds that 
    \begin{equation}
    \label{eq: estimated objective by optimal objective}
    \begin{aligned}
         \inf_{\pi\in\Pi_{\exp}} &\trace\paren{\paren{\calH(\hat\phi^-) + \nu I 
         } \FI^\pi(\hat\phi^-)^{-1}} \leq \inf_{\pi\in\Pi_{\exp}} \trace\paren{\paren{\calH(\phi^\star) + \nu I } \FI^\pi(\phi^\star)^{-1}} \\
         &+   \mathsf{poly}\paren{L_{\pi^\star}, L_f, L_{\theta}, L_{\cost}, \sigma_w^{-1}, T, \dx, d_{\phi}, \norm{\calH(\phi^\star)},\nu, \frac{1}{\mu}, \frac{1}{\lambda_{\min}(\calH(\phi^\star)) + \nu}} \norm{\hat\phi^--\phi^\star}.
    \end{aligned}
    \end{equation}
  
    Combining \eqref{eq: opt obj under approx under mix} with \eqref{eq: estimated objective by optimal objective} and using the bound $\norm{\hat \phi^- - \phi^\star} \leq \frac{R}{(\gamma N)^\alpha}$ yields 
    \begin{equation}
    \label{eq: mixed policy by optimal}
    \begin{aligned}
         &\trace(\paren{\calH(\phi^\star) + \nu} \FI^{\pi_\mix}(\phi^\star)^{-1}) \leq \frac{1 + \varepsilon}{1-\gamma} \inf_{\pi\in\Pi_{\exp}} \trace\paren{\paren{\calH(\phi^\star) + \nu I } \FI^\pi(\phi^\star)^{-1}} \\
         &+ \mathsf{poly}\paren{L_{\pi^\star}, L_f, L_{\theta}, L_{\cost}, \sigma_w^{-1}, \sigma_w, T, \dx, d_{\phi}, \norm{\calH(\phi^\star)}, \nu, \frac{1}{\mu}, \frac{1}{\lambda_{\min}(\calH(\phi^\star)) + \nu}}  \frac{R}{(\gamma N)^\alpha}.
    \end{aligned}
    \end{equation}
    As a result of the above identfiication bounds, the excess cost of the returned policy $\pi^\star(\hat\phi^+)$ is bounded via \Cref{lem: cost decomposition} as
    \begin{align*}
        \calJ(\phi^+) - \calJ(\phi^\star) &\leq  \norm{\hat \phi^+ - \phi^\star}^2_{\calH(\phi^\star)} +  C_{\cost} \norm{\hat \phi^+ - \phi^\star}^3 \\
        &\leq  \norm{\hat \phi^+ - \phi^\star}^2_{\calH(\phi^\star) + \nu I} +  C_{\cost} \norm{\hat \phi^+ - \phi^\star}^3 \\
        &\leq \paren{1+  \frac{C_{\cost} \norm{\hat\phi^+-\phi^\star}}{\lambda_{\min}(\calH(\phi^\star)) + \nu} } \norm{\hat \phi^+ - \phi^\star}^2_{\calH(\phi^\star) + \nu I}  \\
        &\leq  \paren{1+  \frac{ C_{\cost} R}{((1-\gamma) N)^{\alpha} \paren{\lambda_{\min}(\calH(\phi^\star)) + \nu} }} \frac{1+\xi}{1-\gamma} \paren{2 + 4 \log\frac{8}{\delta}} \frac{1}{N} \trace(\paren{\calH(\phi^\star)  + \nu I}\FI^{\pi_{\mix}}(\hat\phi^-)^{-1}) \\ & \mbox{(Events $\calE_{\mathsf{ID}}$ and the slow rate bound.)} \\
        & \leq \paren{1+  \frac{ C_{\cost} R}{((1-\gamma) N)^{\alpha} \paren{\lambda_{\min}(\calH(\phi^\star)) + \nu} }}  \frac{\frac{(1+\varepsilon)(1+\xi)}{(1-\gamma)^2} \paren{2 + 4 \log\frac{8}{\delta}}}{N} \Bigg(\inf_{\pi \in \Pi_{\exp}} \trace(\paren{\calH(\phi^\star)  + \nu I}\FI^{\pi}(\hat\phi^-)^{-1}) \\
        &+ \mathsf{poly}\paren{L_{\pi^\star}, L_f, L_{\theta}, L_{\cost}, \sigma_w^{-1}, \sigma_w, T, \dx, d_{\phi}, \norm{\calH(\phi^\star)}, \nu,  \frac{1}{\mu}, \frac{1}{\lambda_{\min}(\calH(\phi^\star)) + \nu}}  \frac{R}{(\gamma N)^\alpha}\Bigg).
    \end{align*}

    To conclude, we leverage the burn-in condition
    \begin{align*}
        N \geq \max \Bigg\{\paren{\frac{C_{\mathsf{H \,pert}}  R }{\beta \inf_{\pi \in \Pi_{\exp}} \trace\paren{\paren{\calH(\phi^\star) + \nu I } \FI^{\pi}(\phi^\star)^{-1}}}}^{1/\alpha}, \paren{ \frac{C_{\cost} R}{\beta (1-\gamma) \paren{\lambda_{\min}(\calH(\phi^\star)) + \nu}} }^{1/\alpha}\Bigg\},
    \end{align*}
    where 
    \begin{align*}
        C_{\mathsf{H \,pert}}  = \mathsf{poly}\paren{L_{\pi^\star}, L_f, L_{\theta}, L_{\cost}, \sigma_w^{-1}, \sigma_w, T, \dx, d_{\phi}, \norm{\calH(\phi^\star)}, \nu, \frac{1}{\mu}, \frac{1}{\lambda_{\min}(\calH(\phi^\star)) + \nu}}  
    \end{align*}
    Under this condition, the excess cost bound simplifies to
    \begin{align*}
        \calJ(\phi^+) - \calJ(\phi^\star) &\leq \frac{\frac{(1+\beta)^2 (1+\varepsilon) (1+\xi)}{(1-\gamma)^2} (2 + 4 \log\frac{8}{\delta})}{N}   \min_{\pi \in \Pi_{\exp}} \trace\paren{\paren{\calH(\phi^\star) + \nu}  \FI^{\pi}(\phi^\star)^{-1}}.
    \end{align*}
\end{proof}

\section{Aditional Experiments and Experiment Details}
\label{s: experiment details}

In \Cref{s: cartpole experiments}, we complement our illustrative example with an implementation of $\texttt{ALCOI}$ on a toy physical system. In \Cref{s: implementation details} we provide further experimental details. 

\subsection{Cartpole Experiments}
\label{s: cartpole experiments}

    We additionally run our proposed algorithm on the cartpole system defined by the dynamics 
    \begin{align*}
        (M+m) (\ddot p + b_p \dot p) + m \ell \cos(\theta) (\ddot \theta  + b_\theta \dot \theta) &= m \ell \dot \theta^2 \sin(\theta) + u \\
        m \cos(\theta) (\ddot p + b_p \dot p) + m\ell (\ddot \theta + b_\theta \dot \theta) &= m g \sin(\theta),
    \end{align*}
    where $p$ is the position of the cart, $\theta$ is the angle of the pole from the upright position, and $u$ is the control input. Here, $M$ represents the mass of the cart, $m$ the mass of the pole, $\ell$ the length of the pole, $g$ the acceleration due to gravity, $b_x$ is the friction coefficient for the cart, and $b_\theta$ is the friction coefficient for the pole. We discretize the system using the Euler approach using a timestep of $dt = 0.1$. We simluate the discretized system under additive zero mean Gaussian noise with covariance $0.01 I_4$. The unknown parameters are $\phi_\star = \bmat{M, m, \ell, g, b_x, b_\theta} = [1, 0.1, 1, 10, 0.5, 0.5]$ For every episode, the system starts from the hanging position with state $\bmat{0 & 0 & \pi & 0}$. The desired behavior is to swing the pole to the upright position with the cart positioned at the origin within the time horizon of $T=30$ timesteps. This is described by the cost functions $c_t(x,u) = 0$, $c_{T+1}(x) = \norm{x}^2$.

    To simplify the policy synthesis, rather than solving for the certainty equivalent policy which directly minimize the cost, we deploy energy shaping controllers which switch to an LQR controller about the upright position \citep{tedrake2009underactuated}. In particular, the certainty equivalent controller under estimates $\hat \phi = \bmat{\hat M, \hat m, \hat \ell, \hat g, \hat b_x, \hat b_\theta}$ selects the input as follows. 
    If $\abs{\theta} < \pi/4$, set $u$ according to the LQR controller synthesized by linearizing the dynamics under the estimated parameters about the upright equilibrium point with weights $Q=I$, $R=1$. Otherwise, set
    \begin{align}
        \label{eq: cartpole feedback linearization controller}
        u &= (M+m - m \cos(\theta)) \ddot x_d + m g \cos(\theta)\sin(\theta) - m \ell \dot \theta^2 \sin(\theta) \\
        \ddot x_d &= 5 \dot \theta \cos(\theta) \paren{\frac{1}{2} m \ell^2 \dot\theta^22 + m g \ell \cos(\theta) - m g \ell} - 0.01 p - 0.01 \dot p.
    \end{align}
    During exploration, the input energy budget is restricted to $0.1 T$. 

    The comparison between the three approaches, random exploration, approximate $A$-optimal exploration, and $\texttt{ALCOI}$ is shown in \Cref{fig: cartpole experiments}. As in the illustrative example, we see the benefit of control-oriented exploration over both random exploration. 

    \begin{figure}
        \centering
        \includegraphics[width=0.5\textwidth]{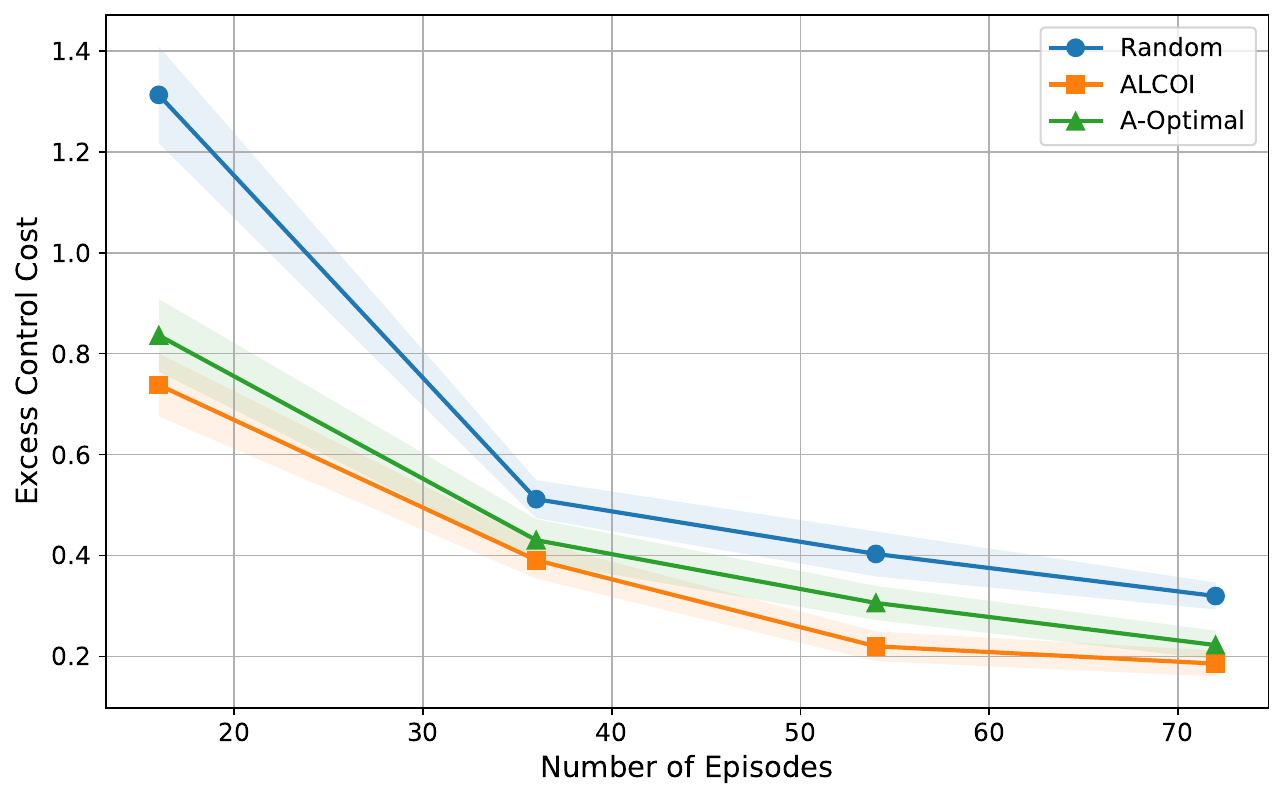}
        \caption{Excess cost versus number of exploration episodes for the proposed control-oriented identification procedure, approximate $A$-optimal design, and random exploration for the cartpole swing-up task. The mean over $900$ runs is shown, and the standard error is shaded.}
        \label{fig: cartpole experiments}
    \end{figure}

\subsection{Implementation Details}
\label{s: implementation details}

For both the cartpole experiments and the illustrative example, the initial exploration policy used by $\texttt{ALCOI}$ and approximate $A$-optimal exploration is set to the random exploration policy. 

We deviate from the proposed algorithm in the following respects. 
\begin{itemize}
    \item We use all of the collected data to fit the refined estimate.
    \item As the policy class in both experiments consists of all trajectory dependent policies with bounded input energy, we approximate this certainty equivalent synthesis via a receding horizon control procedure. In particular, at time step $t$ if the current state is $x_t$, the learner solves the problem:
\begin{align*}
   \underset{\tilde x_t,\dots, \tilde x_{T+1}, \tilde u_t, \dots, \tilde u_T}{\mathsf{minimize: }} & \quad \sum_{k=t}^{T} c(\tilde x_{k}, \tilde u_{k}) \\
   \mathsf{s.t.: } &\quad \tilde x_{k+1} = f(\tilde x_{k}, \tilde u_{k};\hat\phi), \quad \tilde x_t = x_t \quad \sum_{k=1}^t \norm{u_{k}}^2 +  \sum_{k=t+1}^T \norm{\tilde u_{k}}^2\leq \mathsf{total\,budget},
\end{align*}
where $\hat\phi$ is the learner's current esimate of the system parameters and $(\tilde x_k$, $\tilde u_k)$ constitute the planned trajectory. The learner plays the input $\tilde u_t$, on the true system, and re-evaluates the optimal control after observing the next state. The solution to the optimization problem defining the above receding horizon controller is computed via random shooting with a warm start. In particular, before applying each control input, $100$ new control sequences are generated from a standard normal distribution, and are normalized to satisfy the energy constraints. The seqeuence which minimizes the objective is then selected. This procedure is warm-started with the solution at the previous timestep by setting that as an additional candidate solution. 
\item The certainty equivalent policies intended to minimize the objective $\calJ(\pi, \phi^\star)$ are replaced with feedback linearization policies defined in terms of the parameter estimate. In particular, for the illustrative example and cartpole example, the policies are synthesized as 
\begin{enumerate}
    \item Illustrative example: $\hat \pi(\hat\phi) =  K \paren{X_t - \bmat{-5.5 \\ 0}} - \sum_{i=1}^4 \psi(X_t - \hat\phi^{(i)})$, where $K$ is the LQR controller synthesized via a linearization of the dynamics under the parameter $\hat \phi$. 
    \item Cartpole: the swing up controller is defined using feedback linearization as in \eqref{eq: cartpole feedback linearization controller}. 
\end{enumerate}
Note that a consequence of this is that there may exist policies in the policy class $\Pi_\star$ which achieve lower cost than the policy defined in terms of the optimal parameter $\phi_\star$. 
\end{itemize}
The first modification is made because discarding the a portion of the data is wasteful. It is included in the algorithm only in order to simplify the analysis, as the constant factor of $(1-\beta)$ is not a concern of this analysis. The sampling based approach is used since the optimization problem defining the receding horizon controller is intractable. The final modification is made to avoid the computational burden of solving the policy optimization problem via a Bellman equation in both policy synthesis and in computation of the model-task Hessian. These modifications result in a small gap between the theoretical guarantees, and the numerical validation, which future work will remove. 

\end{document}